\newcommand{\cS}{\mathcal{S}}
\newcommand{\cD}{\mathcal{D}}
\newcommand{\cB}{\mathcal{B}}
\newcommand{\cR}{\mathcal{R}}
\newcommand{\cT}{\mathcal{T}}
\newcommand{\bA}{\mathbb{A}}
\newcommand{\CF}{\mathcal{CF}}
\newcommand{\Var}{\mathit{Var}}
\newcommand{\sort}[1]{\text{\tt #1}}
\newcommand{\Const}[1]{\text{\tt #1}}
\newcommand{\Rel}[1]{\text{\tt #1}}
\newcommand{\FactSet}{\sort{FSet}}
\newcommand{\NeFactSet}{\sort{NeFSet}}
\newcommand{\Pattern}{\sort{Pat}}
\newcommand{\TPattern}{\sort{Pat}^t}
\newcommand{\STPattern}{\sort{Pat}^{\text{\it st}}}
\newcommand{\PPattern}{\sort{Pat}^p}
\newcommand{\TPPattern}{\sort{Pat}^{\text{\it tp}}}
\newcommand{\Query}{\sort{Qy}}
\newcommand{\DQuery}{\sort{DQy}}
\newcommand{\Nat}{\sort{Nat}}
\newcommand{\Bool}{\sort{Bool}}
\newcommand{\Fact}{\sort{Fact}}
\newcommand{\Cond}{\sort{Cnd}}
\newcommand{\Empty}{\emptyset}
\newcommand{\True}{\top}
\newcommand{\False}{\bot}
\newcommand{\ffalse}{\text{\bf f}}
\newcommand{\ttrue}{\text{\bf t}}
\newcommand{\closed}{\mathit{cl?}}
\newcommand{\Dml}{\sort{Dml}}
\newcommand{\From}{\nabla}
\newcommand{\Eval}{\mathit{eval}}
\newcommand{\Tr}{\mathit{tr}}
\newcommand{\Trb}{\overline{\mathit{tr}}}
\newcommand{\CFd}{\CF_{\Sigma,\cD}}
\newcommand{\QLang}{\mathcal{Q}_{\Sigma, \cD}}
\newcommand{\Pos}{\mathit{Pos}}
\newcommand{\dml}{\mathbf{dml}}
\newcommand{\query}{\mathbf{qry}}
\newcommand{\cond}{\mathbf{cnd}}
\newcommand{\Del}[1]{[#1]_0}
\newcommand{\Keep}[1]{[#1]_!}
\newcommand{\Ret}[1]{[#1]_?}
\newcommand{\Rett}[1]{[#1]_{?}}
\newcommand{\Fresh}[1]{[#1]_{\mathbf{n}}}
\newcommand{\FrMark}{\mathbf{n}}
\newcommand{\dummy}{\surd}
\newcommand{\Con}{\rhd}
\newcommand{\Nom}{\mathit{nom}}
\newcommand{\StateC}{\sort{State}^c}
\newcommand{\StateQ}{\sort{State}^q}
\newcommand{\StateD}{\sort{State}^d}
\newcommand{\StackC}{\sort{Stk}^c}
\newcommand{\StackQ}{\sort{Stk}^q}
\newcommand{\StackD}{\sort{Stk}^d}
\newcommand{\NodeC}{\sort{Frm}^c}
\newcommand{\NodeQ}{\sort{Frm}^q}
\newcommand{\NodeD}{\sort{Frm}^d}
\newcommand{\Res}{\mathfrak{r}}
\newcommand{\Ans}{\mathfrak{a}}
\newcommand{\Init}{\mathrm{I}}
\newcommand{\Not}{{\neg}}
\newcommand{\New}{\mathfrak{n}}
\newcommand{\Fail}{\mathfrak{f}}
\newcommand{\Sat}{\mathfrak{s}}
\newcommand{\Ok}{\mathit{\dummy}}
\newcommand{\Token}{\sharp}
\newcommand{\Unfl}{\text{unf}}
\newcommand{\Fld}{\text{fld}}
\newcommand{\init}{\text{init}}
\newcommand{\LFalse}{\lambda_\False}
\newcommand{\LBool}{{\lambda_{\{\_\}}}}
\newcommand{\LNotUnfl}{{\lambda_{\neg}^\Unfl}}
\newcommand{\LNotFld}{{\lambda_{\neg}^\Fld}}
\newcommand{\DisjUnfl}{{\lambda_{\vee}^\Unfl}}
\newcommand{\DisjFldf}{{\lambda_{\vee;\ffalse}^\Fld}}
\newcommand{\DisjFldt}{{\lambda_{\vee;\ttrue}^\Fld}}
\newcommand{\QuantInit}{{\lambda_{\exists}^\init}}
\newcommand{\QuantUnfl}{{\lambda_{\exists}^\Unfl}}
\newcommand{\QuantFldt}{{\lambda_{\exists;\ttrue}^\Fld}}
\newcommand{\QuantFldf}{{\lambda_{\exists;\ffalse}^\Fld}}
\newcommand{\QuantEnd}{{\lambda_{\exists}^{\text{end}}}}
\newcommand{\LSat}{\lambda_{\text{sat}}}
\newcommand{\LFact}{\lambda_{\text{fact}}}
\newcommand{\LEmpty}{\lambda_{\Empty}}
\newcommand{\LUnUnfl}{\lambda_{\Con}^\Unfl}
\newcommand{\LUnFldOk}{\lambda_{\Con;\Ok}^\Fld}
\newcommand{\LUnFldEmpty}{\lambda_{\Con;\Empty}^\Fld}
\newcommand{\LCondUnfl}{\lambda_{\text{cond}}^\Unfl}
\newcommand{\LCondFldf}{\lambda_{\text{cond};\ffalse}^\Unfl}
\newcommand{\LCondFldt}{\lambda_{\text{cond};\ttrue}^\Unfl}
\newcommand{\FromInit}{{\lambda_{\From}^\init}}
\newcommand{\FromUnfl}{{\lambda_{\From}^\Unfl}}
\newcommand{\FromEnd}{{\lambda_{\From}^{\text{end}}}}
\newcommand{\LAns}{\lambda_{\text{ans}}}
\newcommand{\Collapse}{\lambda_{\text{col}}}
\newcommand{\LDummy}{\lambda_{\dummy}}
\newcommand{\DmlNew}{\lambda^{\text{new}}}
\newcommand{\DmlFail}{\lambda^{\text{fail}}}
\newcommand{\Succ}{\sort{Yes?}}
\newcommand{\yes}{\text{\it yes}}
\begin{document}

\setcounter{page}{141}
\publyear{2021}
\papernumber{2095}
\volume{184}
\issue{2}

   %%\finalVersionForARXIV
    \finalVersionForIOS

\title{A Non-Deterministic Multiset Query Language}

\author{Bartosz Zieli\'nski\thanks{Address  for correspondence: Department of Computer Science, Faculty of Physics
                   and Applied Informatics, University of \L{}\'od\'z,  Pomorska 149/153 90-236 \L{}\'od\'z, Poland. \newline \newline
          \vspace*{-6mm}{\scriptsize{Received November 2021; \ revised November 2021.}}}
\\
Department of Computer Science\\
 Faculty of Physics and Applied Informatics, University of \L{}\'od\'z\\
Pomorska 149/153 90-236 \L{}\'od\'z, Poland\\
bartosz.zielinski@fis.uni.lodz.pl}

\maketitle

\runninghead{B. Zieli\'nski}{A Non-deterministic Multiset Query Language}

\begin{abstract}
  We develop a  multiset query and update language executable in a term rewriting system.
  Its most remarkable feature, besides non-standard approach to quantification
  and introduction of fresh values, is
  non-determinism --- a query result is not uniquely determined by the database.
  We argue that this feature is very useful, e.g.,  in modelling user choices during
simulation or reachability analysis of a data-centric business process --- the intended application of our work. Query evaluation is implemented by converting
  the query into a terminating term rewriting system and normalizing the initial term which encapsulates the current database. A normal form encapsulates a query result.
  We prove that our language
  can express any relational algebra query.
  Finally, we present
  a simple business process specification framework (and an example specification).
  Both syntax and semantics of our query language is implemented in Maude.
\end{abstract}

\begin{keywords}
term rewriting, query languages, business process modelling
\end{keywords}

\section{Introduction}

In a data-centric
approach to business process modelling (see, e.g.,
 \cite{hull2008artifact,calvanese2013foundations}), specification of
  data transformation during case execution
 is an integral part of the business process model.
This new paradigm requires new tools and formalisms for effective
specification, simulation and validation.
Task-centric models are commonly formalized using
Petri nets (see, e.g., \cite{van1998application,van2005yawl}).
Adapting Petri net-based formalizations to data-centric models is,
however, problematic: While
   simple transformations on data
 {\em can} be represented directly  within a coloured Petri net, Petri nets lack facilities for complex data processing and querying.
 Even so, there exists a large amound of literature (see e.g., \cite{rosa2011decidability, lasota2016decidability, montali2016model}) devoted to enriching Petri nets with with data processing capabilities and automated verification of their properties.
 Recent paper \cite{montali2017db}
introduced {\em DB-Nets} --- an attempt to integrate coloured Petri nets with relational databases.
The use of two separate formalisms
complicates verification and simulation (though it corresponds to actual  implementations
of BPM systems). In the following paper \cite{montali2019db} a subset of database operations was implemented inside coloured Petri nets with name creation and transition priorities.

\medskip
Conditional term rewriting \cite{meseguer1992conditional}
 was proposed  as an alternative (if less popular) generic framework for specification of dynamic systems \cite{RewLogSem}. It subsumes a variety of Petri nets \cite{stehr2001rewriting}
and their simulation is one of popular applications of the term rewriting system Maude \cite{padberg2016model,kheldoun2017formal}).
More precisely, it is well known (see e.g., \cite{stehr2001rewriting}) that coloured Petri nets can be implemented by multiset rewriting systems, and, conversely, rewriting systems which rewrite multisets of terms representing colour tokens can be interpreted as Petri nets: just identify places with colour tokens, and each rewriting rule of the form
\begin{equation*}
a_1a_2\ldots a_n\Rightarrow b_1b_2\ldots b_m
\end{equation*}
with a transition with input arcs from $a_1$, $a_2$, $\ldots$, $a_n$ and output arcs to $b_1$, $b_2$, $\ldots$, $b_m$. Other constructs, such as inhibitor arcs can be easily implemented with conditional rewriting rules. Rewriting systems are more general than Petri Nets since they are not limited to
rewriting multisets (on the other hand, Petri nets are much better supported by tools and programming libraries). However,  rewriting systems
still share with Petri nets the limitation and inconvenience of not directly
supporting bulk, complex operations on data which involve some kinds of quantification. For example, suppose that the state of a data driven business process related to e-commerce is represented by a multiset of terms. In particular, terms of the form $\Rel{item}(p, c)$ denote the presence of a product $p$ in the basket of a customer $c$. Suppose now that $c$ cancels the case, and so we need to remove all $c$'s items from the multiset. Describing removal of a single (nondeterministically chosen) item is easy
with the rule $\Rel{item}(c, x)\Rightarrow\emptyset$ where $x$ is a variable, but specifiying in a rewriting system (or a Petri net) that all of them need to be removed before the next business step is  more complex (though clearly possible) and would require auxilliary tokens and conditional rewrite rules
corresponding to inhibitor arcs in a Petri net preventing other transitions as long as there are still some $c$'s items present in the rewritten multiset. Thus, a high-level query language adding quantifier
constructs on top of conventional rewriting system or a Petri net is clearly desirable, particularly if rewriting systems or Petri nets are to become convenient formalisms to model data driven business processes.

\medskip
In this paper we present
 an expressive multiset query and update language $\QLang$, designed to be
 executable in a term rewriting system, useful for a unified
 and somewhat ``Petri nettish''  formalization of data-centric business processes. The connection with Petri nets is admittedly tenuous and follows from the fact that the language acts on data represented as multisets of terms which could be viewed as tokens (see the discussion above).
Since   this language specifies changes to data, instead of being a reimplementation
of relational calculus, it contains linear-like features fitting a term rewriting implementation.
Most remarkably, $\QLang$ is non-deterministic --- the result of a query or update
is not, in general, uniquely determined by the database. This permits
modelling user choices, just like in the case of Petri nets.
 $\QLang$ consists of three
 sublanguages, parametrized with respect to a signature $\Sigma$ and $\Sigma$-algebra of facts  $\cD$:
\begin{enumerate}
\item Language $\QLang^\cond$ of conditions
(Boolean queries)
which can be used independently as constraints, or as components of
queries in $\QLang^\query$ and $\QLang^\dml$.
\item Data manipulation language$\QLang^\dml$. A DML query $Q$ in $\QLang^\dml$
defines new facts to be added to the database.
Some of the old facts  used in constructing the new ones may be deleted.
\item Language $\QLang^\query$ of queries which only return facts but do not change the database.
Both syntax, and to some extent semantics of  $\QLang^\query$
is a restriction of syntax and semantics of  $\QLang^\dml$.
\end{enumerate}

A query $Q$ in $\QLang^\alpha$, $\alpha\in\{\cond, \query,\dml\}$, is given semantics by assignment of a rewriting system $\cR_{\Sigma, \cD}^\alpha(Q)$.
To evaluate $Q$ in a database $F$
we start with an initial term $\Init_Q(F)$.
A normal form of $\Init_Q(F)$
wraps a result of $Q$'s evaluation: a Boolean value indicating validity of a condition,  a query
answer or a new database resulting from execution of a DML query.
As remarked above, while $\cR_{\Sigma, \cD}^\alpha(Q)$ is always terminating
(i.e., there are no infinite execution paths, so we always do get {\em some}
result from evaluating $Q$), it is not confluent (i.e., divergent execution paths may not eventually converge) in general, hence we may get distinct results depending on nondeterministic choices. We identify syntactic constraints on queries in each of the sublanguages which ensure confluence for the rewriting system, and hence determinism for the results of the query.
 $\QLang$ shares with the language introduced in \cite{Bartek2017}
 a non-standard approach to variable binding. The approach
avoids  problems with capture-free substitutions without dispensing with explicit variables,
but at the price of non-compositionality: The surrounding context may determine whether
 a variable in a subterm is free or bound {\em in this subterm}.
$\QLang^\dml$ supports introduction of fresh values to the database,
which is used, e.g., to generate identifiers for newly created artifacts or to simulate user input (cf.~\cite{montali2016model}).

Since queries in $\QLang$ are converted to a term rewriting system,
verification of a
business process specified as a set of DML queries
(see Section~\ref{ExampleActionsSection}) can be assisted with
symbolic reachability analysis techniques based on narrowing (see e.g., \cite{meseguer2007symbolic}) .
 We plan to expand on this idea in future research.
Note that our results in this article regarding the confluence
 of the rewriting systems to which a particular subclass of
 queries compiles to, may be relevant for narrowing
(see e.g., \cite{fay1979first,hullot1980canonical}, c.f. \cite{alpuente2009termination}). E.g., narrowing a confluent system may provide a more efficient search procedure than in the case of a non-confluent one.

\subsection{Prior work}

\label{SectPrior}

The present paper builds on the previous paper \cite{Bartek2017}
(cf.~\cite{bpBDAS2017}) where a multiset query language
executed in Maude was proposed. $\QLang$
shares many similarities with the language described in \cite{Bartek2017}, particularly the treatment of quantification. It has, however, distinct syntax (with, e.g., fact markings in quantifiers) and distinct semantics. We consider both languages  to be alternatives, each of which with its own strengths.
A side-by-side comparison is presented in Table~\ref{Comparison}.
Observe that while the language described in \cite{Bartek2017} can be defined both in the set and multiset setting (in the former case we match multisets of facts modulo idempotence in addition to commutativity, associativity and identity), here we assume exclusively multiset setting. This is because the language described here is implemented through multiset rewriting. Making multiset constructor idempotent would make it impossible to consistently replace matched terms, a feature which is crucial for our formalism to behave sensibly.
E.g.,  given a rule $a\Rightarrow b$, term $ab=_{\mathcal{A}}aab$ rewrites in one step both to $bb=_{\mathcal{A}}b$ (intended) and $abb=_{\mathcal{A}}ab$ (not intended).

\begin{table}[h]
\vspace*{-2mm}
\footnotesize
\begin{center}
\caption{Comparison between $\QLang$ and $\CFd(X)$ from \cite{Bartek2017}}
\label{Comparison}\vspace*{-1mm}
\begin{tabular}{@{}p{0.46\textwidth}p{0.46\textwidth}@{}}
\toprule
\multicolumn{1}{c}{$\QLang$} & \multicolumn{1}{c}{$\CFd(X)$}\\
\midrule
Semantics of (possibly) non-deterministic queries is based on translation into
term rewriting systems.
&
Semantics of queries
 is based on term matching on the metalevel. Queries are always deterministic.\\
 \addlinespace[3pt]
 Queries return facts in the same signature as the database against which the query is evaluated. &
 Queries return objects of an arbitrary signature (as long as it contains a ``union'' operator).\\
  \addlinespace[3pt]
 DML queries construct multisets of facts to be added to the current database.
 Some of the current facts used in the construction may be deleted.
 &  DML expressions construct a pair of sets or multisets of facts --- those to be deleted from and those to be added to the current database.
 \\
  \addlinespace[3pt]
Fresh values are introduced through ``virtual fresh facts''.
Actual freshness in ensured through rewrite rules which define the semantics of DML query.&
Attributes of input facts can be marked by special sorts as fresh.
Testing framework ensures that values injected in those columns are actually fresh. \\
It is assumed that the database is a {\em multiset} of facts &
The language can be defined both in set and multiset setting\\
 \addlinespace[3pt]
  \multicolumn{2}{c}{Treatment of variable binding  is identical in both languages}\\
  \bottomrule
\end{tabular}
\end{center}
\end{table}

$\QLang$, has some similarities to matching logic \cite{rosu2010matching,rocsu2017matching}, as both are based on term matching. They
have different purposes, however, and different syntax and semantics. Unlike matching logic
statements,  $\QLang$ queries are non-deterministic.
Matching logic is used for software verification, while
$\QLang$ is a query language intended to be a {\em component} of the system.
Finally, matching logic has conventional quantifiers, whereas we use a  non-standard  quantification over ``relation patterns''.

CINNI \cite{cinni} is a generic calculus of substitutions implemented in Maude
which combines de Bruijn indices with explicit names to solve the problem
of capture-free substitutions.
To avoid the associated complexity,
we decided not to use conventional variable binding implemented, e.g., using CINNI.

Our  quantification
over ``relation patterns'' instead of variables, resembles quantifier constructs
 in description logic
\footnote{We are grateful to Prof. Andrzej Tarlecki for this observation.}
  \cite{baader2003description}. Our syntactic construct, however, uses explicit variable names and is not limited to binary relations where only the second column is bound by
 the quantifier.

  Data-centric business process models are formalized in a variety of ways.
First-order logic and its
 restricted variations (see e.g.,
\cite{de2012verification,hariri2011foundations,calvanese2015description,abdulla2016recency}),
datalog \cite{chen2006automating}, and  UML \cite{merouani2014formalizing},
are popular choices.
Those formalisms are excellent for the specification of data models, and they come with expressive query
languages; they are, however, not so ideally suited for modelling change, because of frame problems
\cite{mccarthy1969some}, where it is not always obvious what information is modified and what stays the
same. Rewriting formalisms \cite{meseguer1992conditional}, which are explicit about scope of change have
a clear advantage here. On the other hand, a great deal of work has been devoted to formal verification
of logic based business process formalism, see e.g., \cite{li2017verifas,deutsch2019verification}
in the context of hierarchical artifact systems. For another example see \cite{calvanese2019formal} where
a data aware extension of BPMN was proposed together with SMT (satisfiability modulo theories)  based
verification techniques.

In \cite{seco2018reseda} a language called Reseda,  was introduced for specification of data driven
business process. The language integrates data description with behaviour. What makes it relevant as a
prior work to the present paper is that Reseda's semantics is defined by associating with a Reseda
program a transition system. Such a program can then be executed by rewriting data in accordance with the
transition rules, similarly to the execution of the language described here.

As we remarked earlier,
a recent paper \cite{montali2017db}
introduced {\em DB-Nets} which integrate coloured Petri nets with relational databases.
Since the use of two separate formalisms
complicates verification and simulation in the following paper \cite{montali2019db} a subset of database
operations was implemented inside coloured Petri nets with name creation and transition priorities. Thus,
the motivation of \cite{montali2019db} is analogous to the motivation  of this paper, but in the world of
Petri nets instead of term rewriting systems. There are however two important differences: First, our
language is meant to provide complete specification of data driven business processes, whereas in
\cite{montali2019db} the business process is still specified as a Petri net, and there is just an
interface between relational queries (perhaps implemented inside the net itself) and the main net
describing the process. Secondly (and this is what makes the first point possible) we do not simply
implement a conventional relational dml and query language in a rewriting system. Instead, we implement a
{\em linear} and {\em non-deterministic} query language which can emulate user choices and creation of
new objects.

\subsection{Preliminaries on term rewriting}

We recall basic notions related to
 term rewriting \cite{meseguer1992conditional,RewLogSem},
 and many sorted equational logic \cite{Mmemb_1998}.

 Let $S$ be a poset (partially ordered set). A family $X=\{X_s\;|\;s\in S\}$ of sets
is called {\em an
 $S$-sorted set} if $X_s\subseteq X_{s'}$ whenever  $s\leq s'$.
 We abbreviate
$x\in\bigcup X$ as $x\in X$. We write $x:s$ iff $x\in X_s$.

\medskip
 {\em An algebraic signature} $\Sigma=(\Sigma_S, \Sigma_F)$
 consists of a finite poset of sorts
 $\Sigma_S$ and a finite set $\Sigma_F$ of function symbols.
The set of function symbols $\Sigma_F$ is $\Sigma_{S}^+$-sorted, where $\Sigma_{S}^+$
is the set of finite, non-empty sequences of elements of $\Sigma_{S}$ partially ordered
with $s_0\cdots s_{n}\leq t_0\cdots t_{m}$ iff $m=n$, $s_0\leq t_0$ and $s_i\geq t_i$ for all
$i\in\{1,\ldots,n\}$.
Traditionally we write $f:s_1\ldots s_{n}\rightarrow s_0$ when
$f\in(\Sigma_F)_{s_0\ldots s_n}$, where we denote by $(\Sigma_F)_{s_0\ldots s_n}$ the set of function symbols of sort $s_0\ldots s_n$. This explains the somewhat confusing ordering on $\Sigma_{S}^+$:
we are covariant on return value and contravariant on arguments.
 Symbols $c:\rightarrow s$
 are called {\em constants} of sort $s$.
 A $\Sigma$-algebra $\bA$ is an assignment of
a set $\llbracket s\rrbracket_\bA$
 to each  $s\in\Sigma_{S}$ such that $\llbracket s\rrbracket_\bA\subseteq \llbracket s'\rrbracket_\bA$
 if $s\leq s'$,
and a function
 $\llbracket f\rrbracket_\bA : \llbracket s_1\rrbracket_\bA \times\cdots\times
 \llbracket s_{n}\rrbracket_\bA\rightarrow\llbracket s\rrbracket_\bA$
 to each  $f:s_1\ldots s_n \rightarrow s$ in $\Sigma_F$.
 Let $V:=\{V_s\;|\;s\in\Sigma_{S}\}$ be a $\Sigma_{S}$-sorted set of variables.
 A term algebra $\cT_\Sigma(V)$ has ``sort-safe'' terms as elements
 and function symbols interpreted by themselves. We denote by $\cT_\Sigma$
 the algebra of ground $\Sigma$-terms.
 We often use mixfix
 syntax where underscores in the function name correspond to consecutive arguments.
 Thus, if $\Sigma_F$ contains  $\_+\_:A\;A\rightarrow A$ and $0:\rightarrow A$ then
  $0+0$ is a ground term of sort $A$.
Positions in a term are denoted by strings of positive integers.
Denote by $\varepsilon$ the empty string, and by $t|_\kappa$
the subterm of $t\in\cT_\Sigma(V)$ at position $\kappa\in\mathbb{Z}^*_+$
(if defined), i.e.,
$t|_\varepsilon := t$, and  $f(t_1,\ldots,t_n)|_{k\kappa}:=t_k|_\kappa$.
Let $\Pos(t):=\{\kappa\in\mathbb{Z}^*_+\;|\;t|_\kappa\;\text{is defined}\}$.
If $\kappa\in\Pos(t)$ and $u$ is a term of the same sort as $t|_\kappa$,
 then we denote by $t[u]_\kappa$ the result of replacing  $t|_\kappa$ in $t$ with $u$.
We use a standard notation for substitutions. Let $\vec{a}=a_1,\ldots,a_n$ be a list of terms,
 $\vec{v}=v_1,\ldots,v_n$ a list of distinct variables. Then we denote
  $\sigma=\{\vec{a}/\vec{v}\}=\{a_1/v_1,\ldots,a_n/v_n\}$
when $\sigma(v_i)=a_i$, $i\in\{1,\ldots,n\}$, and $\sigma(v)=v$ for any variable
$v\notin\{v_1,\ldots,v_n\}$.

A $\Sigma$-algebra may be defined  as a quotient of $\cT_\Sigma$ by a congruence generated by a
set $A\cup E$ of equalities, where equalities in $A$, referred to as {\em equational attributes}, define structural properties such as associativity, commutativity, or identity, and $E$ consists of
 conditional equalities interpreted as directed simplification rules
 on $\cT_\Sigma$.
 It is assumed that simplifications terminate and are confluent, hence each $t$ has the unique (modulo $A$)
irreducible form $t{\downarrow_{E/A}}\in\cT_\Sigma$ representing a class of $t$ in
$\cT_\Sigma/{=}_{A\cup E}$.

 Simplification with respect to equalities computes values. The behaviour  is represented with rewritings.
A rewriting system $\cR=(\Sigma, A, E, R)$ consists of a  signature $\Sigma$,
a set of equations $A\cup E$ were $E$ defines  confluent and terminating (modulo $A$) simplifications on
$\cT_\Sigma$ , and a finite set $R$ of conditional rewriting rules of
the form $\lambda:t_1\Rightarrow t_2\ \mathit{if}\ C$, where optional condition $C$ is a conjunction
of equalities, and $\lambda$ is the rule's label.
A one-step rewrite $u\xrightarrow{\lambda}_\cR u'$ from $u$ to $u'$ using such a rule
is possible if
there exists a position $\kappa$, term $v$, and a substitution $\sigma$ such that
$u=_Av$,
 $v|_\kappa=_A\sigma(t_1)$, $u'=_A v[\sigma(t_2)]_\kappa$ and
$\sigma(C)$ is satisfied. We write $u\rightarrow_\cR u'$ iff there exist terms
$s,s'\in\cT_\Sigma$ and  a label $\lambda$ of a rule in $R$ such that
$u{\downarrow}_{E/A}=_A s$, $s\xrightarrow{\lambda}_\cR s'$, and
$s'{\downarrow}_{E/A}=_Au'{\downarrow}_{E/A}$.
We denote by  $\rightarrow_\cR^{+}$ and $\rightarrow_\cR^{*}$
the transitive and reflexive-transitive closures of $\rightarrow_\cR$.
We also write $u\rightarrow_\cR^!u'$ if $u\rightarrow_\cR^{*}u'$ and there is no $u''$
such that $u'\rightarrow_\cR u''$.
If $\cR$ is implied by the context, we omit $\cR$ from arrows.

Variants of the following definition and easy to prove lemma appear in the literature (see, e.g.,$\,$\cite{huet1980confluent}):
\begin{definition}
\label{SemiConflMod}
Let $(X, \rightarrow)$, where $\rightarrow\subseteq X\times X$, be a transition system.
Assume that $\equiv$ is an equivalence  on $X$ which is a bisimulation on
$(X,\rightarrow)$. We call $\rightarrow$ {\em semiconfluent at $x\in X$ modulo $\equiv$}
if for all $y,y'\in X$ such that $y\leftarrow x\rightarrow^*y'$ there exist
$z,z'\in X$ such that $y\rightarrow^*z\equiv z'\;{}^*\!\!\leftarrow y'$.
We call $\rightarrow$ {\em semiconfluent modulo $\equiv$}
if $\rightarrow$ is semiconfluent at all $x\in X$.
We call $\rightarrow$ {\em confluent at $x\in X$ modulo $\equiv$}
if for all $y,y'\in X$ such that $y\;{}^*\!\!\leftarrow x\rightarrow^*y'$ there exist
$z,z'\in X$ such that $y\rightarrow^*z\equiv z'\;{}^*\!\!\leftarrow y'$.
We call $\rightarrow$ {\em confluent modulo $\equiv$}
if $\rightarrow$ is confluent at all $x\in X$.
\end{definition}
\begin{lemma}
Semiconfluence modulo equivalence implies confluence modulo equivalence.
\end{lemma}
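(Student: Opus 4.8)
The plan is to reduce the asymmetric ``one step versus many steps'' joinability supplied by semiconfluence to the fully symmetric ``many steps versus many steps'' joinability demanded by confluence, by an induction on the length of one of the two reductions in a fork. Throughout I would lean on the two properties of $\equiv$: it is an equivalence (hence transitive and reflexive), and it is a bisimulation for $\rightarrow$, so that a single step from a point can always be transported across an $\equiv$-link.

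First I would isolate an auxiliary lemma that lifts the bisimulation property from single steps to arbitrary finite reductions: if $a\equiv b$ and $a\rightarrow^* c$, then there is a $d$ with $b\rightarrow^* d$ and $c\equiv d$. This follows by a routine induction on the length of $a\rightarrow^* c$, applying the single-step bisimulation clause at each step and chaining the resulting $\equiv$-links by transitivity. Having this at hand is what makes the ``modulo $\equiv$'' bookkeeping go through.

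The main argument proceeds as follows. I would prove the statement in the form: for every $x$ and every fork $y\;{}^*\!\!\leftarrow x\rightarrow^* y'$ whose left branch $x\rightarrow^* y$ has length $n$, the two sides can be joined modulo $\equiv$; the induction is on $n$, with the hypothesis quantified over all starting points (so it may be reinvoked at an intermediate point). For $n=0$ we have $y=x$, so $y\rightarrow^* y'$ directly and the pair $(y',y')$ joins the fork. For the step, write $x\rightarrow x_1\rightarrow^n y$. I would apply semiconfluence at $x$ to the fork $x_1\leftarrow x\rightarrow^* y'$ to obtain $w,w'$ with $x_1\rightarrow^* w\equiv w'\;{}^*\!\!\leftarrow y'$. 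The induction hypothesis, applied at $x_1$ to the reductions $x_1\rightarrow^n y$ and $x_1\rightarrow^* w$, then yields $u,u'$ with $y\rightarrow^* u\equiv u'$ and $w\rightarrow^* u'$.

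It remains to splice the two diagrams across the equivalence $w\equiv w'$, and this is the only place where genuine care is needed. From $w\equiv w'$ and $w\rightarrow^* u'$ the auxiliary lemma produces $u''$ with $w'\rightarrow^* u''$ and $u'\equiv u''$; since $y'\rightarrow^* w'$ we get $y'\rightarrow^* u''$, while $y\rightarrow^* u$, and $u\equiv u'\equiv u''$ collapses to $u\equiv u''$ by transitivity, so $(u,u'')$ joins the original fork. The main obstacle is therefore not the induction skeleton — which is the classical Huet-style chase — but ensuring the equivalence bookkeeping is sound: each time a reduction must cross an $\equiv$-boundary I must invoke the lifted bisimulation lemma rather than transitivity alone, and I would verify that semiconfluence is indeed available at the intermediate point $x_1$, which it is, since we assume $\rightarrow$ is semiconfluent modulo $\equiv$ at every point.
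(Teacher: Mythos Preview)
Your proof is correct; the induction on the length of one branch together with the lifted bisimulation lemma is exactly the standard Huet-style argument. The paper does not give its own proof of this lemma --- it merely states it as ``easy to prove'' and refers to \cite{huet1980confluent} --- so your argument is precisely the kind of proof the citation points to.
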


\section{Multisets of facts, fresh facts and patterns}
\label{MofFandPSect}

Our queries are evaluated against, or  act on, finite multisets of facts.
Duplicate facts can be genuinely useful and removing them is computationally expensive.
If necessary, duplicates can be removed explicitly or, better, one can ensure
that no duplicates are introduced in the first place by judicious choice of DML operations.
In fact, SQL is a multiset query language as well, hence by using multisets we
are closer to the actual relational database practice than  formal systems based on sets.

$\QLang$ is parametrized with respect to a signature of facts $\Sigma$ and a  $\Sigma$-algebra
of facts  $\cD$. $\Sigma_S$ must contain sorts $\Fact$ and $\Bool$ for facts and Booleans, respectively.
All constructors for facts
are contained in $\Sigma_F$. Facts are reifications of predicate instances.
A typical
fact has the form $f(a_1,\ldots, a_n)$, where $f:s_1\ldots s_n\rightarrow\Fact$
is a fact constructor.
 $\cD$ defines all the data types used in facts and is
  specifiable in terms of directed equations and equational attributes.
  $\cD$ must define  Boolean connectives and Boolean-valued
 equality $\_=\_:s\;s\rightarrow\Bool$ for all $s\in\Sigma_{S,K}$.
We assume that all ground terms of sort $\Bool$ simplify to either $\ttrue$ or $\ffalse$.
This is non-trivial: Define a function $f:\Nat\rightarrow\Bool$
with a single equation $f(0)=\ttrue$. Then $f(1)$ is fully reduced and
distinct from both $\ttrue$ and $\ffalse$.

\smallskip
{\bf Multisets of facts.}
The signature of  multisets extends $\Sigma_S$ with sorts ($\sort{Ne}$)$\FactSet$
 of finite (non-empty) multisets of facts. The subsort ordering is given by
$\Fact<\NeFactSet<\FactSet$. In particular, each fact is a non-empty multiset of facts.
Finite multisets of facts are constructed with an associative and commutative binary operator
$
\_\circ\_:\FactSet\;\FactSet\rightarrow\FactSet
$ (cf.~\cite{thielscher1998introduction}) with identity element $\Empty:\rightarrow\FactSet$. Operator
 $\_\circ\_$ is subsort overloaded with the additional declaration
 $\_\circ\_:\FactSet\;\NeFactSet\rightarrow\NeFactSet$.
Thus, a multiset constructed from a multiset and a non-empty multiset is non-empty.

\smallskip
{\bf Freshness and nominal sorts.}
Support for creation of fresh values is a common requirement
(cf.~\cite{montali2016model}): Identifiers for new objects
 must not belong to the present nor any past  active domain of the database. To understand why reusing identifiers from past domains is bad consider situation where a new business object is created with the identifier of a previously deleted one. In this case the attempt to verify if the deleted object is present in the final database may yield an incorrect affirmative answer.
We support creation of fresh values of {\em nominal sorts} only.
Usually this suffices, and
freshness  for non-nominal data types is problematic
(cf.~\cite{ochremiak2014nominal}). A sort $s$ is nominal (relative to $\Sigma$-algebra $\cD$)
 if values of this sort
have no non-trivial algebraic or relational structure beside equality. In particular,
for nominal $s$, $s'\leq s\leq s''$ if and only if $s'=s=s''$. To create values of each nominal sort $s$ we have constructor
$\imath^s_{\_}:\Nat \rightarrow s$ which belongs {\em neither}  to $\Sigma$ {\em nor} to the
signature of $\QLang$.

\begin{example}
\label{Basket1}
Consider a client basket database.
Identifiers of customers, products and baskets have sorts
$\mathfrak{c}$, $\mathfrak{p}$, and $\mathfrak{b}$, respectively.
We use two fact constructors:
$\_\Rel{owns}\_:\mathfrak{c}\;\mathfrak{b}\rightarrow\Fact$
and $\_\Rel{in}\_:\mathfrak{p}\;\mathfrak{b}\rightarrow\Fact$.
Multiset of facts
$(\imath^{\mathfrak{c}}_1\;\Rel{owns}\; \imath^{\mathfrak{b}}_1)\circ
 (\imath^{\mathfrak{p}}_2\;\Rel{in}\;\imath^{\mathfrak{b}}_1)\circ
 (\imath^{\mathfrak{p}}_3\;\Rel{in}\;\imath^{\mathfrak{b}}_1)$
denotes the state in which customer
$\imath^{\mathfrak{c}}_1$ is the owner of basket  $\imath^{\mathfrak{b}}_1$ containing products
$\imath^{\mathfrak{p}}_2$ and $\imath^{\mathfrak{p}}_3$. Using multisets instead of sets can be useful:
multiset $(\imath^{\mathfrak{p}}_3\;\Rel{in}\;\imath^{\mathfrak{b}}_1)\circ
(\imath^{\mathfrak{p}}_3\;\Rel{in}\;\imath^{\mathfrak{b}}_1)$
denotes  the situation where basket $\imath^{\mathfrak{b}}_1$ contains two items of
$\imath^{\mathfrak{p}}_3$.
\end{example}

Constructing values of nominal sorts from natural numbers simplifies creation of fresh values: to ensure freshness one can construct the new value with the smallest natural number which was not used so far.
To keep track of those ``smallest unused naturals'', and
to make retrieval of  fresh values similar to retrieval of data we use
fresh facts of sort $\Fact^\FrMark$ unrelated to $\Fact$.
 For each nominal sort $s$ we have a single-argument constructor of the form
$C_s:s\rightarrow\Fact^\FrMark$ which wraps value $\imath^s_n$ such that for all $m\geq n$, $\imath^s_m$
was never used before.  When  fresh value of sort $s$ is requested, we
return $\imath^s_n$ and update the fresh fact to $C_s(\imath^s_{n+1})$.
Fresh facts are combined into (non-empty) multisets of sort ($\sort{Ne}$)$\FactSet^\FrMark$
using commutative and associative operator
$\_\circ\_:\FactSet^\FrMark\;\FactSet^\FrMark\rightarrow\FactSet^\FrMark$ with
identity $\Empty:\rightarrow\FactSet^\FrMark$.
To facilitate bulk updates of fresh facts needed in the semantics of DML queries, we define
the following function:
\begin{equation}
\label{FreshBulkUpd}
\upsilon:\FactSet^\FrMark\rightarrow\FactSet^\FrMark,\quad
\upsilon\bigl(C_{s_1}(\imath^{s_1}_{m_1})\circ\cdots\circ C_{s_n}(\imath^{s_n}_{m_n})\bigr)
=C_{s_1}(\imath^{s_1}_{m_1+1})\circ\cdots\circ
C_{s_n}(\imath^{s_n}_{m_n+1}).
\end{equation}

{\bf Patterns.}
Quantifiers in $\QLang$ quantify over patterns (of sort $\Pattern$) containing
 non-ground multisets of facts and fresh facts marked
by modalities, which control retention of matched facts (i.e., whether upon matching they are removed temporarily, permanently, or not at all from the database),
 and syntactically wrap fresh facts (i.e., fresh facts can appear in the pattern only inside specialized modality).
Patterns can be preserving (of sort $\PPattern$),
semi-terminating  (of sort $\STPattern$),
 terminating (of sort $\TPattern$),
terminating and preserving (of sort $\TPPattern$), or neither.
The subsort relation is defined by $\TPPattern<\TPattern < \STPattern<\Pattern$ and
$\TPPattern < \PPattern<\Pattern$.
Patterns are constructed with modalities
\begin{equation*}
\Keep{\_}:\NeFactSet\rightarrow\PPattern,\quad\!\!
\Ret{\_}:\NeFactSet\rightarrow\TPPattern,\quad\!\!
\Del{\_}:\NeFactSet\rightarrow\STPattern,\quad\!\!
\Fresh{\_}:\NeFactSet^\FrMark\rightarrow\Pattern.
\end{equation*}
and associative and commutative, subsort overloaded operator:
\begin{gather*}
\_\circ\_ : \Pattern\;\Pattern \rightarrow\Pattern.\quad
\_\circ\_:\Pattern\;\TPattern\rightarrow\TPattern,\quad
\_\circ\_:\PPattern\;\PPattern\rightarrow\PPattern,\\
\_\circ\_:\PPattern\;\TPPattern\rightarrow\TPPattern,\quad
\_\circ\_:\STPattern\;\Pattern\rightarrow\STPattern.
\end{gather*}
Thus, a terminating (resp. a semi-terminating) pattern has to contain at least one
fact wrapped with $\Ret{\_}$ (resp. with either $\Ret{\_}$ or $\Del{\_}$).
A terminating {\em and} preserving pattern
consists of  facts marked only with $\Keep{\_}$ or $\Ret{\_}$, and it contains at least one fact wrapped with $\Ret{\_}$.
Directed  equalities $[F_1]_m\circ [F_2]_m=[F_1\circ F_2]_m$,
where  $m\in\{!,?,0,\FrMark\}$, and $F_1$, $F_2$
are non-empty multisets of (fresh) facts
guarantee that fully reduced patterns have facts gathered in groups of the same modality.

\begin{example}
Let $\sort{Id}$ be a sort, let $f, g:\sort{Id}\;\Nat\rightarrow\Fact$ and $h:\sort{Id}\rightarrow\Fact$
be fact constructors and let $x$, $y$, $z$ and $t$ be variables. Then
\begin{equation*}
[f(x,y)\circ h(x)]_?\circ[g(x, y)]_!\circ[C_{\sort{Id}}(t)]_\FrMark\circ[g(x, 1)]_0
\end{equation*}
is a pattern. It is terminating because of a presence of $[f(x,y)\circ h(x)]_?$ subpattern. It is not, however, preserving since it contains facts wrapped in  $[\_]_0$ and $[\_]_\FrMark$.
\end{example}

The informal meaning of modalities is as follows:
Facts matched by those marked by $\Ret{\_}$  can be considered at most once during quantifier evaluation, but they are not removed from the database.
Facts marked by $\Del{\_}$ are removed from the database when matched, but they are returned
if the computation branch this matching leads to is unsuccessful. Thus, the presence of facts marked by $\Del{\_}$ in the pattern may not guarantee termination, unless one can prove that the formula under quantifier is always successful. Facts marked by $\Keep{\_}$
are always retained in the database, and $\Fresh{\_}$ wraps fresh facts.

\begin{remark}
\label{PatternNotationRemark}
In what follows we use the following notation.
Let $P$ be a pattern. We denote by $P_0$, $P_!$, $P_?$, $P_\FrMark$ the multisets of facts
consisting of those facts in $P$ which are wrapped by modalities $\Del{\_}$,
$\Keep{\_}$, $\Ret{\_}$, and $\Fresh{\_}$, respectively. Thus, e.g.,
$(\Keep{F_1}\circ\Ret{F_2})_?:=F_2$ and  $(\Keep{F_1}\circ\Ret{F_2})_{\FrMark}:=\Empty$.
\end{remark}

\section{Query and condition languages}
This section introduces the three sublanguages of $\QLang$, their syntax and informal semantics.
Formal semantics based on conditional term rewriting is provided in the subsequent sections.

\subsection{Conditions}

The language $\QLang^\cond$ of conditions
on finite multisets of facts is analogous
to first-order logic with quantification restricted to the active domain.

\begin{definition}
Let $\Sigma$ be a signature and let $\cD$ be
a $\Sigma$-algebra of facts.
Formulas of $\QLang^\cond$ are (generally non-ground) terms of sort $\Cond$
constructed with
\begin{gather*}
\False:\rightarrow\Cond,\ \{\_\}:\Bool\rightarrow\Cond,\
\neg\_:\Cond\rightarrow\Cond,\
\_\vee\_:\Cond\;\Cond\rightarrow\Cond,\ \exists\_.\_:\TPPattern\;\Cond\rightarrow\Cond.
\end{gather*}
\end{definition}
Thus, $\False$, $\{B\}$, $\neg\psi$, $\psi\vee\psi'$ and $\exists P \mathbin{.} \psi$
are conditions if $B$ is a term of sort $\Bool$, $P$ is a terminating and preserving pattern,
and $\psi$ and $\psi'$ are conditions.
Consider condition $T:=\exists P\mathbin{.}\psi$. Existential quantifier
$\exists$ binds in $\psi$ all the  variables
 appearing in $P$ which were not bound by the term surrounding $T$. Thus,
the meaning of the formula may change when it is placed in a different context.

\begin{example}
\label{firstexex}
Let $R:\Nat\ \Nat\rightarrow\Fact$, and suppose that terms $R(t_1, t_2)$ represent rows of a relation
$\mathbf{R}\subseteq\mathbf{N}\times\mathbf{N}$.
  Let  $x$, $y$, and $z$ be distinct variables. Then condition
  $\neg\exists\Rett{R(x,y)}.\exists\Rett{R(x,z)}.\neg\{y=z\}$
   expresses functional dependency from the first to the second column of $\mathbf{R}$.
   The first quantifier binds $x$ and $y$, the second one binds $z$.
  The condition is closed. The
  subcondition  $\exists\Rett{R(x,z)}.\neg\{y=z\}$ taken on its own is open,
  but only $y$ is free and the quantifier now binds {\em both} $x$ and $z$.
\end{example}

 Closed formulas in $\QLang^\cond$ are called {\em sentences} in $\QLang^\cond$.
Let $\Var(t)$ be the set of  variables of  $t$, and
let $\closed(\phi)$ iff  $\phi$ in $\QLang^\cond$ is closed.
To define $\closed(\phi)$ by structural recursion we need to keep track
of variables bound by the context of $\phi$. Thus,
$\closed(\phi):=\closed(\phi,\emptyset)$, where,
for any set of variables $V$,
\begin{gather}
\closed(\False, V)=\ttrue,\quad \closed(\neg\phi, V)=\closed(\phi, V),\quad
\closed(\{t\}, V)=\Var(t)\subseteq V\nonumber\\
\closed(\phi_1\vee\phi_2, V)
= \closed(\phi_1, V)\wedge\closed(\phi_2, V),
\quad
\closed(\exists P\mathbin{.}\phi, V)
=\closed\bigl(\phi, V\cup\Var(P)\bigr).
\end{gather}

As the syntactic sugar we define  operators
$\True:\rightarrow\Cond$,
$\_\wedge\_:\Cond\;\Cond\rightarrow\Cond$,
$\forall\_.\_:\TPPattern\;\Cond\rightarrow\Cond$.
with equalities
$\True=\neg\False$,
$\phi_1\wedge\phi_2=\neg(\neg\phi_1\vee\neg\phi_2)$,
$\forall P\mathbin{.}\phi=\neg\exists P\mathbin{.}\neg\phi$.
Later we prove that, for any condition $\phi$,
$\neg\neg\phi$ is logically equivalent to $\phi$. Then the
functional dependency in Example~\ref{firstexex} can be equivalently written as
$
\forall\Rett{R(x,y)}.\forall\Rett{R(x,z)}.\{y=z\}.
$
\begin{definition}
\label{SubCondDef}
A subcondition $\psi$ of $\phi$ in $\QLang^\cond$ is a subterm of $\phi$
of sort $\Cond$.
\end{definition}

\subsection{Syntax of queries and DML queries}

Syntactically $\QLang^\query$ is a restriction of $\QLang^\dml$.
Therefore, we define their syntax jointly  as follows:
\begin{definition}
Queries in
$\QLang^\query$ are terms of sort $\Query$.
DML queries in $\QLang^\dml$ are terms of sort $\DQuery$.
Success assured (DML) queries are terms of sorts $\DQuery^s$.
The sorts are ordered with
$\Fact<\Query<\DQuery$ and $\Fact<\DQuery^s<\DQuery$.
Thus, every query in $\QLang^\query$ can be also interpreted as DML query (which
inserts  but doesn't delete). Every
fact is also a query. Success assured DML queries are DML queries guaranteed to return some facts (or at least $\dummy$).
Terms of sort $\DQuery$ are constructed with
\begin{gather*}
\dummy:\rightarrow\DQuery^s, \quad
 \_\Con\_:\DQuery\;\DQuery\rightarrow\DQuery,\quad
\_\Con\_:\Query\;\Query\rightarrow\Query,\quad
 \_\Con\_:\DQuery^s\;\DQuery\rightarrow\DQuery^s,\nonumber\\
 \_\Con\_:\DQuery\;\DQuery^s\rightarrow\DQuery^s,\quad \Empty:\rightarrow\Query,\quad
\_\Rightarrow\_:\Cond\;\DQuery\rightarrow\DQuery,\quad
  \_\Rightarrow\_:\Cond\;\Query\rightarrow\Query,\nonumber\\
  \From\_.\_:\TPattern\;\DQuery \rightarrow\DQuery,\quad
\From\_.\_:\TPPattern\;\Query \rightarrow\Query,\quad
\From\_.\_:\STPattern\;\DQuery^s \rightarrow\DQuery.
\end{gather*}
\end{definition}
Thus, $\Empty$, $f$, $Q\Con Q'$, $\phi\Rightarrow Q$, and $\From P\mathbin{.}Q$ are queries
in $\QLang^\query$ if $f$ is a fact, $P$ is a terminating and preserving pattern,  $Q$ and $Q'$ are
queries in $\QLang^\query$, and $\phi$ is a condition. Similarly, $\Empty$, $\dummy$,
$f$, $D\Con D'$, $\phi\Rightarrow D$, and $\From P\mathbin{.}D$ are DML queries
in $\QLang^\dml$ if $f$ is a fact, $P$ is a terminating  pattern,  $D$ and $D'$ are
DML queries in $\QLang^\query$, and $\phi$ is a condition.
In addition, $\From  P\mathbin{.}D$ is a DML query if $P$ is a more general semi-terminating
pattern, but $D$ is success-assured, i.e., either $\dummy$, a fact, or
 of the form $D_1\Con D_2$ where at least one of $D_1$, $D_2$ is success assured.
We force $D$ to be success assured when $P$ is semi-terminating, but not terminating
because quantification over semi-terminating pattern may not terminate
 if the quantified expression can fail.
Informal semantics of (DML) queries is given by:
\begin{enumerate}
\item Let $f:\Fact$. A query $f$ returns $f$. A DML query $f$ adds $f$ to the current database.
\item $\dummy$ is used to mark a branch of a DML query as successful even if it does not
add any facts.
\item $\Empty$ is a query returning the empty multiset of facts or a DML query which does nothing.
\item A query $Q\Con Q'$ returns the multiset union of results of $Q$ and $Q'$.
A DML query $D\Con D'$ adds to and removes from the database a multiset union of what $D$ and $D'$ add
and remove, respectively. Since facts are removed immediately,
$\_\Con\_$ is not commutative for DML queries.
\item A query $\phi\Rightarrow Q$ returns what $Q$ returns if $\phi$ is satisfied. It returns $\Empty$
otherwise. A DML query $\phi\Rightarrow D$ does what $D$ does if $\phi$ is satisfied. It
does nothing otherwise.
\item Quantifier $\From P.Q$
 denotes iteration over facts in the database matching $P$. At each iteration step $\sigma(Q)$
is executed, where $\sigma$ is the matching substitution.
If $P$ contains fresh facts,
execution of a DML query $\From  P\mathbin{.}Q$ may introduce fresh values.
\end{enumerate}
Let
$\closed(Q)$ if and only if the query $\phi$ in $\QLang^\dml$ (or $\QLang^\query$) is closed.
Similarly as in the case of conditions,
$\closed(Q):=\closed(Q,\emptyset)$, where,
for any set of variables $V$,
\begin{gather*}
\closed(\Empty, V)=\closed(\dummy, V)=\ttrue,\quad
\closed(\phi\Rightarrow Q, V)=\closed(\phi, V)\wedge\closed(Q, V),\nonumber\\
\closed(f, V)=\Var(f)\subseteq V\ \text{if}\ f:\Fact\nonumber\\
\closed(Q_1\Con Q_2, V)
= \closed(Q_1, V)\wedge\closed(Q_2, V),
\quad
\closed(\From P\mathbin{.}Q, V)
=\closed\bigl(Q, V\cup\Var(P)\bigr).
\end{gather*}

\begin{definition}
A (DML) subquery of a (DML) query $Q$ is a subterm of
$Q$ of sort ($\sort{D}$)$\Query$.
\end{definition}

\begin{example}
The following query in $\QLang^\query$ is closed:
\begin{equation*}
\From [f(x)\circ g]_?\circ[h(x)]_!\mathbin{.}\bigl(\{x > 5\}\Rightarrow f(x + 1)\bigr).
\end{equation*}
It returns a multiset of facts consisting of facts of the form $f(x+1)$ where $f(x)$ and $h(x)$ belong to the multiset we query and $x>5$. When evaluating the query each source fact
$f(x)$ (such that $h(x)$ is in the multiset and $x>5$) and some fact $g$ are matched only once during the evaluation of this query (on the other hand, facts of the form $h(x)$ can be matched multiple times).
Thus, we return at most $n$ facts $f(x+1)$, where $n$ is the number of $g$ facts in the database. For example, for multiset $f(7)\circ f(8)\circ f(7)\circ f(4)\circ h(8)\circ h(7)\circ g\circ g$ the query returns either $f(7)\circ f(8)$ or $f(7)\circ f(7)$ depending on whether we match $f(7)\circ g$ and  $f(8)\circ g$ or we match $f(7)\circ g$ twice (and then, in both cases, we run out of $g$-facts).
\end{example}

\begin{example}
The following (closed) ``DML query'' is not syntactically correct (it cannot be assigned  sort
$\DQuery$):
\begin{equation*}
\From [f(x)]_0\mathbin{.}\bigl(\{x>5\}\Rightarrow g(x)\bigr)
\end{equation*}
This is because the pattern $[f(x)]_0$ is only semiterminating, but not terminating. In this case the subquery in the scope of $\From [f(x)]_!\mathbin{.}\_$ should be success assured, but
$\{x>5\}\Rightarrow g(x)$ clearly isn't (if $x\leq 5$ then it fails).
If we would execute this query against a multiset $f(1)\circ f(6)$ it would loop forever
matching $f(1)$, removing it, failing when executing $\{1>5\}\Rightarrow g(1)$, returning $f(1)$ to the multiset, matching it again, and so on.
\end{example}

As the following example demonstrates, some incorrect DML queries of the form
$\From P\mathbin{.}\psi$, where $P$ is semiterminating but not terminating and $\phi$ is not success assured, provably always terminate. Nevertheless, we prefer to reject them regardless, since usually making them syntactically correct is not difficult:
\begin{example}
Assume that $x$ is a variable of sort $\Nat$ (natural numbers).
Execution of the following syntactically incorrect DML query always terminates
(it replaces each fact of the form $f(x)$ where $x>5$ with fact $g(x)$, and returns to the multiset all other $f(x)$'s unchanged):
\begin{equation*}
\From [f(x)]_0\mathbin{.}\bigl((\{x>5\}\Rightarrow g(x))\Con(\{x\leq 5\}\Rightarrow f(x))\bigr).
\end{equation*}
Termination follows from the fact that
 $(\{x>5\}\Rightarrow g(x))\Con(\{x\leq 5\}\Rightarrow f(x)\bigr)$ always succeeds regardless of which natural number is bound to $x$: depending on whether  $x>5$ or $x\leq 5$ (and one of these must be true) either left or right argument of $\Con$ succeeds and hence the whole subquery succeeds.

\medskip
 Unfortunately, neither $\{x>5\}\Rightarrow g(x)$ nor $\{x\leq 5\}\Rightarrow f(x)$ is syntactically success assured and hence  $(\{x>5\}\Rightarrow g(x))\Con(\{x\leq 5\}\Rightarrow f(x)\bigr)$ is also not success assured. However, it is very easy to modify the above query so that it describes the same modification to the database, but is now syntactically correct and can be assigned sort $\DQuery$:
\begin{equation*}
\From [f(x)]_0\mathbin{.}\bigl((\{x>5\}\Rightarrow g(x))\Con(\{x\leq 5\}\Rightarrow f(x))\Con \Ok\bigr).
\end{equation*}
\end{example}

\section{Rewriting semantics of $\QLang^\cond$}

Semantics of a sentence $\phi$ in $\QLang^\cond$ is given by
the rewriting system $\cR_{\Sigma, \cD}^\cond(\phi)$.
Terms rewritten by $\cR_{\Sigma, \cD}^\cond(\phi)$ are of the form $\{F,S\}^c$, where
 $\{\_,\_\}^c:\FactSet\;\StackC\rightarrow\StateC$, $F$ is the  database of facts
 on which $\phi$ is checked,
and  $S$ is a stack of sort $\StackC$ implementing structural recursion.
  Normal forms, constructed with
$\Sat:\Bool\rightarrow\StateC$, encapsulate the result of $\phi$'s evaluation.
Sort $\StateC$ of terms holding the full state of evaluation must be distinct from all the other sorts and it must not have any super- or sub-sort relation to other sorts. This guarantees that there are no constructors accepting terms of sort $\StateC$ as arguments. Consequently, if all rewrite rules have terms of this sort on the left-hand side, then no subterms can be rewritten, i.e., the defined rewriting system is top-level.
 Terms of sort $\StackC$ are built from frames of sort
$\NodeC$, where $\NodeC<\StackC$, using an associative
binary operator  $\_\_:\StackC\;\StackC\rightarrow\StackC$ with identity $\Empty$.
Most constructors
of frames are indexed by subconditions $\psi$ of $\phi$,
 and lists of distinct variable names
$\vec{v}:=v_1,\ldots, v_n$  of respective sorts $s_1,\ldots, s_n$
($\vec{v}$ can be of any length, even be empty,
as long as $\{\vec{v}\}\subseteq\Var(\phi)$ and
it contains free variables of $\psi$):
\begin{gather*}
\Res:\Bool\rightarrow\NodeC,\quad \Not:\rightarrow\NodeC,\quad
\quad [\_,\ldots,\_]^{\vec{v}| c}_{\psi},[\_,\ldots,\_]^{\vec{v}|c,\downarrow}_{\psi}:
s_1\ldots s_n\rightarrow\NodeC,
\\
[\_|\_,\ldots,\_]^{\vec{v}|c}_{\exists P\mathbin{.}\psi}:\FactSet\;s_1\ldots s_n\rightarrow\NodeC%\
%\text{where}\ \psi=\exists P\mathbin{.}\psi'.
\end{gather*}
As $\vec{v}$ can be empty, the above signature templates include
$[]^{|c}_\psi, []^{|c,\downarrow}_\psi:\rightarrow\NodeC$ and
 $[\_|]^{|c}_\psi:\FactSet\rightarrow\NodeC$.
$\Res(B)$ encapsulates the result of evaluation of a subcondition. Frame
 $\Not$ negates the result of the next frame on the stack.
Frames of the form $[\vec{a}]^{\vec{v}|c}_\psi$, $[\vec{a}]^{\vec{v}|c,\downarrow}_\psi$,
or $[F'|\vec{a}]^{\vec{v}|c}_\psi$
are called $(\psi, \sigma)$-frames, where $\sigma:=\{\vec{a}/\vec{v}\}$
 is the current substitution.
They are related to evaluation of $\sigma(\psi)$.
Marked frames $[\vec{a}]^{\vec{v}|c,\downarrow}_{\psi}$
occur in evaluation of disjunctions $\_\vee\_$.
Iterator frames $[F'|\vec{a}]^{\vec{v}|c}_{\exists P\mathbin{.}\psi}$
represent iterative evaluation of quantifiers.
Multiset $F'$, called iterator state, contains facts available for matching with
$P_!\circ P_?$. Given a database $F$,
to evaluate sentence $\phi$ we rewrite a state term
$\Init_\phi(F):=\{F,[]^{|c}_\phi\}$
until a normal form $\Sat(B)$ is reached. If $B$ then $\phi$ is satisfied in $F$.

\medskip
 Now we present rule schemata for $\cR_{\Sigma, \cD}^\cond(\phi)$ instantiated for a given formula
$\phi$ in $\QLang$. It is important to distinguish between object variables
substituted when applying actual rules, and
metavariables used to define rule templates. Below,
$F, F':\FactSet$, $S:\StackC$,  and $B:\Bool$ are object variables. We also denote
by $\vec{v}:=v_1, \ldots, v_n$ and $\vec{w}=w_1,\ldots,w_m$
 sequences of object variables in $\Var(\phi)$. Metavariables,
$\psi$, $\psi_1$, and $\psi_2$ stand for arbitrary subconditions of $\phi$, metavariable $\cB$
stands for Boolean subterms of $\phi$.  Metavariable $P$
stands for arbitrary patterns in $\phi$, and
$P_!$ and $P_{?}$ denote multisets of facts in the instance of pattern
$P$ marked by respective modalities
(see Remark~\ref{PatternNotationRemark}).

\medskip
Constant $\False$ evaluates to $\ffalse$, and
$\{\cB\}$ evaluates to $\sigma(\cB)$, where $\sigma$ is the current substitution of variables $\vec{v}$
 ($\sigma(\cB)$ is ground since $\{\vec{v}\}$ contains all free variables (i.e., all variables) of $\cB$, hence, by assumption, it
 simplifies to either $\ffalse$ or $\ttrue$):
 \begin{equation}
\LFalse:\{F,S [\vec{v}]_\False^{\vec{v}|c}\}^c\Rightarrow\{F, S\Res(\ffalse)\}^c,
 \quad\quad
 \LBool:\{F, S[\vec{v}]_{\{\cB\}}^{\vec{v}|c}\}^c\Rightarrow\{F, S\Res(\cB)\}^c.
 \label{Bool-schema}
\end{equation}
To evaluate $\sigma(\neg\psi)$, we ``unfold'' by replacing the $(\neg\psi, \sigma)$-frame with
$\neg$ and $(\psi,\sigma)$-frame. When $\sigma(\psi)$ is evaluated we ``fold'' by negating the result:
\begin{equation}
\LNotUnfl :\{F,S [\vec{v}]_{\neg\psi}^{\vec{v}|c}\}\Rightarrow\{F,S\Not[\vec{v}]_{\psi}^{\vec{v}|c}\}^c,\quad\quad
\LNotFld:\{F,S\Not\Res(B)\}\Rightarrow\{F,S\Res(\neg B)\}^c.
\label{NotEq}
\end{equation}

\begin{remark}
\label{MetaRem} In our notation the same symbols often play a dual role --- as subterms, and as part of
function symbols (in sub- and super-scripts).
Consider the following instantiation of schema $\LNotUnfl$:
\begin{equation*}
\LNotUnfl :\{F, S[x,y]^{x,y|c}_{\neg\{x=y\}}\}^c\Rightarrow
\{F, S\Not[x,y]^{x,y|c}_{\{x=y\}}\}^c
\end{equation*}
Variables in sub- and super-scripts are never substituted:
with the above rule we have a one step rewrite of ground terms
$
\{\Empty,[0,1]^{x,y|c}_{\neg\{x=y\}}\}^c\xrightarrow{\LNotUnfl }
\{\Empty,\Not[0,1]^{x,y|c}_{\{x=y\}}\}^c.
$
In schema $\LBool$ metavariable $\cB$ occurs both as  part of a name
and as a subterm. An instantiation
$
\LBool:\{F,S[x,y]^{x,y|c}_{\{x=y\}}\}^c\Rightarrow\{F, S\Res(x=y)\}^c
$  yields (with $S=\Not$, $x=0$, and $y=1$) a one step rewrite
$
\{\Empty,\Not[0,1]^{x,y|c}_{\{x=y\}}\}^c\xrightarrow{\LBool}
\{\Empty, \Not\Res(0=1)\}^c=\{\Empty, \Not\Res(\ffalse)\}^c.
$
\end{remark}

 To evaluate
 disjunction $\psi_1\vee\psi_2$ we create two frames corresponding to the disjuncts.
  If $\psi_2$ evaluates to
 $\ttrue$, the frame marked by $\downarrow$ is dropped (disjunctions are short circuited). If $\psi_2$
 evaluates to $\ffalse$, the frame corresponding to $\psi_1$ drops $\downarrow$ and is evaluated normally.
 \begin{gather}
 \DisjUnfl:\{F, S[\vec{v}]_{\psi_1\vee\psi_2}^{\vec{v}|c}\}^c\Rightarrow
 \{F,S[\vec{v}]_{\psi_1}^{\vec{v}|c,\downarrow}[\vec{v}]_{\psi_2}^{\vec{v}|c}\}^c,\nonumber\\
 \DisjFldt:\{F,S[\vec{v}]_{\psi}^{\vec{v}|c,\downarrow}\Res(\ttrue)\}^c\Rightarrow
 \{F,S\Res(\ttrue)\}^c,\quad
  \DisjFldf:\{F,S[\vec{v}]_{\psi}^{\vec{v}|c,\downarrow}\Res(\ffalse)\}^c\Rightarrow
 \{F,S[\vec{v}]_{\psi}^{\vec{v}|c}\}^c,\quad
 \label{DisjEq}
 \end{gather}

Quantifier evaluation is initialized with the whole database available for matching:
\begin{equation}
\QuantInit:\{F, S[\vec{v}]^{\vec{v}|c}_{\exists P\mathbin{.}\psi}\}^c\Rightarrow
\{F, S[F\;|\;\vec{v}]^{\vec{v}|c}_{\exists P\mathbin{.}\psi}\}^c.
\end{equation}
Let  $\vec{w}$ be a sequence of
 all the distinct variables in $\Var(P)\setminus\{\vec{v}\}$. Let $\sigma$ be the current substitution.
Rule $\QuantUnfl$ pushes onto
the stack a $(\sigma',\psi)$-frame, where $\sigma'=\sigma\cup\{\vec{b}/\vec{w}\}$
is defined by matching $F'\circ\sigma(P_!\circ P_?)$ with iterator state, and it removes $\sigma'(P_?)$
from the iterator state.
We keep applying $\QuantUnfl$ until $\sigma'(\psi)$ evaluates to $\ttrue$ or we cannot match
$\sigma(P_!\circ P_?)$ with iterator state:
 \begin{gather}
 \QuantUnfl:
 \bigl\{F,S[F'\circ P_!\circ P_{?}\;|\;\vec{v}]^{\vec{v}|c}_{\exists P\mathbin{.}\psi}\bigr\}^c
 \Rightarrow
  \bigl\{F,S[F'\circ P_!\;|\;\vec{v}]^{\vec{v}|c}_{\exists P\mathbin{.}\psi}
  [\vec{v},\vec{w}]^{\vec{v},\vec{w}|c}_{\psi}\bigr\}^c,\\
    \QuantFldf:\bigr\{F,S[F'|\vec{v}\bigr]^{\vec{v}|c}_{\exists P\mathbin{.}\psi}
  \Res(\ffalse)\bigr\}^c\Rightarrow
  \bigl\{F,S[F'|\vec{v}]^{\vec{v}|c}_{\exists P\mathbin{.}\psi}\bigr\}^c,\quad
  \QuantFldt:\bigl\{F,S[F'|\vec{v}\bigr]^{\vec{v}|c}_{\exists P\mathbin{.}\psi}
  \Res(\ttrue)\bigr\}^c\Rightarrow
  \{F,S\Res(\ttrue)\bigr\}^c.\nonumber
 \end{gather}
Let $\Succ$ be a sort and let $\yes:\rightarrow\Succ$. For all $\vec{v}\subseteq\Var(\phi)$ and
patterns $P$ occurring in $\phi$ we define a function
$\mu_{P, \vec{v}}:\TPPattern\;s_1\ldots s_n\rightarrow\Succ$ with the single equation
 \begin{equation}
 \label{EndFuncDefEq}
 \mu_{P,\vec{v}}(F'\circ P_!\circ P_?,\vec{v})=\yes.
\end{equation}
Thus, $\mu_{P,\vec{v}}(F,\vec{a})=\yes$ if and only if  $F$ matches with $F'\circ\{\vec{a}/\vec{v}\}(P_?\circ P_0)$.
Since facts matched by $\sigma(P_?)$ are removed from the iteration state, $\QuantUnfl$ cannot
be applied infinitely many times. The following rule schema makes $\sigma(\exists P\mathbin{.}\psi)$
evaluate to $\ffalse$ when
$\QuantUnfl$ can no longer be applied:
\begin{equation}
\label{SatEq}
\QuantEnd:\bigl\{F,S[F'\;|\;\vec{v}]^{\vec{v}|c}_{\exists P\mathbin{.}\psi}\}^c
\Rightarrow\{F,S\Res(\ffalse)\}^c\quad\text{if}\ (\mu_{P,\vec{v}}(F', \vec{v})=\yes)=\ffalse.
\end{equation}

Finally, the rule $\LSat:\{F,\Res(B)\}^c\Rightarrow\Sat(B)$ finishes evaluation of $\phi$.

\begin{theorem}
\label{CondTerminating}
$\cR_{\Sigma, \cD}^\cond(\phi)$ is a terminating rewriting system.
\end{theorem}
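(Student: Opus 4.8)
The plan is to prove termination by a monotone interpretation into $\mathbb{N}$: I assign to each state term $\{F,S\}^c$ a natural number $\mu(\{F,S\}^c)$ and verify that every instance of every one of the thirteen rule schemata strictly decreases $\mu$; well-foundedness of $\mathbb{N}$ then forbids infinite derivations. The observation that makes such a measure available is that the database component is an \emph{invariant}: in each schema the first component $F$ of the state is copied verbatim from left to right (even $\QuantInit$ merely \emph{duplicates} $F$ into the iterator slot). Hence $N:=|F|$ is constant along any derivation, and every iterator state $F'$ occurring in a reachable term satisfies $|F'|\le N$, since $\QuantInit$ initialises it to $F$ and the only other rule touching it, $\QuantUnfl$, only \emph{removes} facts.

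First I define a weight $W(\psi)\in\mathbb{N}$ for each subcondition $\psi$ of $\phi$, by recursion on $\psi$ and relative to the fixed bound $N$: put $W(\False)=W(\{\cB\})=2$, $W(\neg\psi)=W(\psi)+2$, $W(\psi_1\vee\psi_2)=W(\psi_1)+W(\psi_2)+2$, and $W(\exists P.\psi)=N\,(W(\psi)+2)+3$. Finiteness is immediate by structural induction, the quantifier clause being the only place the bound $N$ is used. From $W$ I build $\mu$ as a sum of frame weights, setting $\mu(\Res(B))=1$, $\mu(\Not)=1$, $\mu([\vec a]^{\vec v|c}_\psi)=W(\psi)$ for an unmarked frame, $\mu([\vec a]^{\vec v|c,\downarrow}_\psi)=W(\psi)+1$ for a marked one, and $\mu([F'\,|\,\vec a]^{\vec v|c}_{\exists P.\psi})=|F'|\,(W(\psi)+2)+2$ for an iterator frame; then $\mu(\{F,S\}^c)$ is the sum of the weights of the frames in $S$, with $\mu(\Sat(B))=0$. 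One must check that $\mu$ descends to $=_{A\cup E}$-classes, on which $\to_\cR$ actually lives: it is invariant under associativity and identity of stack concatenation (being a sum), it depends on each $F'$ only through the cardinality $|F'|$ (hence is invariant under the AC-with-identity structure of $\circ$), and $E$-simplification rewrites only data subterms and regroups patterns, leaving both the indexing subcondition $\psi$ and every multiset cardinality untouched.

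The core is a finite, rule-by-rule check that $\mu$ strictly decreases; since each schema has the shape $\{F,S\,w\}^c\Rightarrow\{F,S\,w'\}^c$ with identical context $S$ and database $F$, it suffices to compare the weights of the local top windows $w$ and $w'$. The Boolean, negation and disjunction rules each drop $\mu$ by $1$ or $2$ through the constants built into $W$ (for instance $\DisjUnfl$ replaces a window of weight $W(\psi_1)+W(\psi_2)+2$ by one of weight $W(\psi_1)+W(\psi_2)+1$, and $\LNotFld$ turns $\Not\,\Res(B)$ of weight $2$ into $\Res(\neg B)$ of weight $1$). The \textbf{main obstacle} is the quantifier block, and it has two faces. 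First, $\QuantUnfl$ rewrites an iterator window of weight $|F'\circ P_!\circ P_?|\,(W(\psi)+2)+2$ into a smaller iterator together with a fresh $\psi$-frame, of total weight $|F'\circ P_!|\,(W(\psi)+2)+2+W(\psi)$, so the net change is $|P_?|\,(W(\psi)+2)-W(\psi)$; this is $\ge 2>0$ \emph{precisely because} $P$ is terminating and preserving, which by the grammar of $\TPPattern$ forces $P_?\neq\Empty$, i.e.\ $|P_?|\ge 1$. This is the one point where the syntactic restriction on quantifier patterns is indispensable. Second, the ``continue iterating'' step $\QuantFldf$ discards a $\Res(\ffalse)$ but leaves the iterator frame \emph{unchanged}, so it would fail to decrease a naive measure that weighted $\Res$ by $0$; giving $\Res$ weight $1$ repairs this, the window dropping from $|F'|\,(W(\psi)+2)+3$ to $|F'|\,(W(\psi)+2)+2$, while $\QuantFldt$ and $\QuantEnd$ remain strictly decreasing because $\mu$ of any iterator frame is $\ge 2$.

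It remains to dispatch $\QuantInit$, which drops $\mu$ from $W(\exists P.\psi)=N\,(W(\psi)+2)+3$ to $N\,(W(\psi)+2)+2$ by the very choice of the quantifier clause of $W$, and $\LSat$, which sends a state of measure $1$ to the normal form $\Sat(B)$ of measure $0$; combining with the $=_{A\cup E}$-invariance (so that $\mu(u)=\mu(u{\downarrow}_{E/A})$ and a $\to_\cR$-step inherits the strict drop of its underlying $\xrightarrow{\lambda}$-step) completes the argument. I expect the genuinely delicate points to be the verification of this invariance and the calibration of the additive constants so that \emph{every} schema --- including the three otherwise ``free-looking'' folds $\QuantFldf$, $\QuantFldt$, $\QuantEnd$ --- strictly decreases; once the constants are fixed, nested quantifiers need no separate treatment, since $\QuantInit$ re-initialises each iterator state to the full database and $W$ absorbs this uniformly through the common bound $N$.
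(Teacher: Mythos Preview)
Your proof is correct and takes a genuinely different route from the paper's.

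The paper defines a partial well-order $<_c$ on state terms by lexicographically extending a frame order $<_f$ (roughly: $\Res <_f \Not <_f$ iterator frames $<_f$ unmarked frames $<_f$ marked frames, with iterator frames ordered by strict multiset inclusion of their iterator state, and any frame indexed by $\psi$ sitting below iterator frames indexed by a proper supercondition $\psi'$). Because the lexicographic extension of a Noetherian order to sequences of \emph{bounded} length is again Noetherian, the paper must argue separately that stacks stay bounded: it observes that every stack-growing rule replaces a frame indexed by $\psi_1$ with two frames, the top one indexed by a proper subformula $\psi_2$, so stack depth is bounded by the depth of $\phi$.

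You instead build a single additive interpretation $\mu:\StateC\to\mathbb{N}$. The key idea making this possible is the observation that the database $F$ is invariant along any derivation, so its size $N=|F|$ can be baked into the recursive weight $W$ on subconditions via the clause $W(\exists P.\psi)=N(W(\psi)+2)+3$. This lets the stack-growing rules $\DisjUnfl$, $\LNotUnfl$, $\QuantUnfl$ strictly decrease $\mu$ directly, so no separate stack-length bound is needed. (Your remark that $|F'|\le N$ for reachable iterator states is, incidentally, not actually used: the per-rule checks go through for arbitrary $F'$, since $\QuantUnfl$ only shrinks the iterator state and $\QuantInit$ sets it to $F$ of size exactly $N$.) Both proofs hinge on the same syntactic fact --- that $\TPPattern$ forces $P_?\neq\Empty$, so each $\QuantUnfl$ genuinely removes at least one fact from the iterator state --- but your argument makes the dependence on $N$ explicit and avoids the lexicographic machinery, at the cost of some care in calibrating the additive constants.
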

\begin{proof}
It suffices to define a partial well-order $\_<_c\_$ on terms of sort
$\StateC$ which makes rewriting strictly monotonic, i.e., such that
  $t_1\rightarrow t_2$ implies $t_2<_ct_1$
for all  $t_1, t_2:\StateC$.
The  order is defined by
\begin{equation*}
\Sat(B)<_c\{F,S\}^c,\quad \{F,S\}^c<_c\{F,S'\}^c\ \text{iff}\ S<_sS',
\end{equation*}
for all $F$, $S$, $S'$, $B$.
Here $\_<_s\_$ is the lexicographic order on stacks derived from partial order  $\_<_f\_$
on frames, i.e., for all frames $D_1,\ldots, D_n$, $E_1,\ldots,E_m$,
$D_1\ldots D_n<_sE_1\ldots E_m$ iff either (1)
for some $k$, $D_k<_f E_k$ and $D_i=E_i$ for $i\in\{1,\ldots,k-1\}$,
or (2) $n<m$ and $D_i=E_i$ for $i\in\{1,\ldots,n\}$.
Frame ordering is defined by
$
\Res(B)
<_f\Not
<_f[F\;|\;\vec{a}]^{\vec{v}|c}_\psi
<_f[F'\;|\;\vec{a}]^{\vec{v}|c}_\psi
<_f[\vec{b}]_\psi^{\vec{w}|c}
<_f[\vec{b}]_\psi^{\vec{w}|c,\downarrow}
<_f[F''\;|\;\vec{c}]^{\vec{x}|c}_{\psi'}
$,
for all $F$, $F'$, $F''$, $\psi$, $\psi'$, $B$, $\vec{v}$,
$\vec{w}$, $\vec{x}$, $\vec{a}$, $\vec{b}$, $\vec{c}$ such that
 $F\subsetneq F'$ and $\psi$ is a proper subcondition of $\psi'$.
Partial order $\_<_f\_$ is Noetherian because multisets of facts $F$, $F'$ and conditions $\psi$, $\psi'$
are  finite terms.
Hence, if the stacks are of bounded size, also $\_<_c\_$ is Noetherian.
The size of stacks is bounded because each stack size increasing rule is of the form
$\{F, S[\ldots]^{\ldots}_{\psi_1}\}^c\rightarrow\{F, SA[\ldots]^{\ldots}_{\psi_2}\}^c$,
where $A$ is a frame and $\psi_2$ is a proper subterm of $\psi_1$.
Since rules in $\cR_{\Sigma, \cD}^\cond(\phi)$ are topmost, the  rewriting is strictly monotonic
because $t_2<_c t_1$ for each rule schema
$t_1\Rightarrow t_2\;\text{if}\;C$ in $\cR_{\Sigma,\cD}^\cond(\phi)$.
\end{proof}

The following useful observation can be trivially verified by examining the rule schemas:
\begin{lemma}
\label{SubLemma}
Let $\psi$ be a subcondition of $\phi$. For any finite multiset of facts $F$, stack $S$,
Boolean $B$, variables $\vec{v}=v_1,\ldots,v_n$ and values $\vec{a}=a_1,\ldots, a_n$,
 $\{F, S[\vec{a}]^{\vec{v}|c}_\psi\}^c\rightarrow^*\{F, S\Res(B)\}^c$
in $\cR_{\Sigma, \cD}^\cond(\phi)$
iff
$\{F,[]^{|c}_{\sigma(\psi)}\}^c\rightarrow^*\Sat(B)$ in $\cR_{\Sigma, \cD}^\cond(\sigma(\psi))$,
where substitution $\sigma:=\{\vec{a}/\vec{v}\}$.
\end{lemma}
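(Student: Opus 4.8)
The plan is to establish a step-for-step correspondence (a bisimulation) between reductions of $\{F, S[\vec a]^{\vec v|c}_\psi\}^c$ in $\cR_{\Sigma,\cD}^\cond(\phi)$ and reductions of $\{F, []^{|c}_{\sigma(\psi)}\}^c$ in $\cR_{\Sigma,\cD}^\cond(\sigma(\psi))$, where $\sigma=\{\vec a/\vec v\}$. Since $\{\vec v\}$ contains all free variables of $\psi$, the term $\sigma(\psi)$ is a sentence, so the right-hand initial term is exactly $\Init_{\sigma(\psi)}(F)$ and $\cR_{\Sigma,\cD}^\cond(\sigma(\psi))$ is well defined. The correspondence rests on two observations, each verifiable by inspecting the rule schemas: first, that the ambient stack $S$ and the database $F$ are inert; second, that recording a ground substitution in a frame's superscript/value slots is interchangeable with having applied that substitution inside the subscript condition.

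For the first observation, note that $\cR_{\Sigma,\cD}^\cond(\phi)$ is topmost and that, apart from $\LSat$, every rule schema has the form $\{F, S\,\alpha\}^c\Rightarrow\{F, S\,\beta\}^c$ with $S$ an object variable and $F$ unchanged (conditions never modify the database; only the iterator states $F'$ internal to quantifier frames evolve). Consequently $S$ is carried untouched through every step, and a routine invariant --- the fragment of the stack above $S$ stays nonempty and is reduced to a single $\Res(B)$ before any rule can reach into $S$, while $\LSat$ cannot fire as long as $S\neq\Empty$ --- shows that no reduction consumes $S$ before reaching $\{F, S\Res(B)\}^c$. Hence reductions $\{F, S[\vec a]^{\vec v|c}_\psi\}^c\to^*\{F, S\Res(B)\}^c$ are in one-to-one, step-preserving bijection with reductions $\{F, [\vec a]^{\vec v|c}_\psi\}^c\to^*\{F, \Res(B)\}^c$ that ignore $S$ entirely.

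For the second observation, I would set up the bijection on frames that sends a left-hand frame $[\vec a,\vec b]^{\vec v,\vec w|c}_\chi$ to the right-hand frame $[\vec b]^{\vec w|c}_{\sigma(\chi)}$ (and analogously for marked and iterator frames), where $\chi$ ranges over subconditions of $\psi$, $\vec w$ over the quantifier-bound variables accumulated so far, and $\vec b$ over their current values. Because $\sigma$ substitutes only the free variables $\vec v$ by ground values while the bound variables satisfy $\Var(\sigma(P))=\Var(P)\setminus\{\vec v\}$, the two frames encode the same ground instance $\{\vec a,\vec b/\vec v,\vec w\}(\chi)$, and the subconditions of $\sigma(\psi)$ are precisely the $\sigma(\chi)$. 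One then checks schema-by-schema that each rule of $\cR_{\Sigma,\cD}^\cond(\phi)$ acting on a left frame is mirrored, under this map, by the corresponding rule of $\cR_{\Sigma,\cD}^\cond(\sigma(\psi))$ acting on its image, and conversely. The base cases $\LFalse,\LBool$ are immediate (the Boolean evaluated is the common ground instance $\sigma(\cB)$), and $\LNotUnfl/\LNotFld$ together with $\DisjUnfl/\DisjFldt/\DisjFldf$ are purely structural.

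The main obstacle is the quantifier, i.e.\ $\QuantInit$, $\QuantUnfl$, $\QuantFldt$, $\QuantFldf$ and $\QuantEnd$: here one must verify that the outer substitution $\sigma$ commutes with matching. Concretely, $\QuantUnfl$ matches $F'\circ\sigma(P_!\circ P_?)$ against the iterator state and records fresh bindings for $\vec w=\Var(P)\setminus\{\vec v\}$; I must check that this yields the same residual iterator state and the same bindings on both sides, which holds because $\sigma$ is ground with $\Var(\sigma(P))=\{\vec w\}$, so matching against $\sigma(P)$ and matching its substituted instance coincide. Likewise the side condition of $\QuantEnd$, phrased via $\mu_{P,\vec v}$, must be shown stable under $\sigma$: on both sides it reduces to the single assertion that $F'$ matches $F''\circ\tau(P_?\circ P_0)$, where $\tau$ is the common ground substitution recorded by the frame. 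Once this commutation is settled, the frame bijection is a genuine bisimulation; combining it with the first observation, a normal form $\{F, S\Res(B)\}^c$ on the left corresponds to $\{F,\Res(B)\}^c$ on the right, which the single remaining rule $\LSat$ turns into $\Sat(B)$, and conversely any reduction to $\Sat(B)$ factors through $\{F,\Res(B)\}^c$ as its penultimate state. This yields the claimed equivalence in both directions.
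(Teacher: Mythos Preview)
Your proposal is correct and follows exactly the approach the paper intends: the paper does not spell out a proof at all, merely stating that the lemma ``can be trivially verified by examining the rule schemas,'' and your bisimulation argument is precisely that verification carried out in detail. One small slip: in your discussion of $\QuantEnd$ you write $P_?\circ P_0$, but for conditions the patterns are terminating and preserving, so $P_0=\Empty$ and the relevant match is against $P_!\circ P_?$ as in the definition of $\mu_{P,\vec v}$; this does not affect the argument.
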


The following example shows verification of a condition in $\QLang^\cond$ for a given multiset of
facts using rewriting semantics. It also shows non-confluence of the resulting rewrite system.
\begin{example}
\label{NonConflEx}
Suppose $p, q:\Nat\rightarrow\Fact$. Let $\psi:=\exists\;\Rett{q(z)}\mathbin{.}\{z=x\}$,
$K:=p(1)\circ q(0)$, $L:=p(0)\circ q(0)$, and $M:=p(0)\circ p(0)\circ p(1)$. To check if condition
\begin{equation*}
\phi:=\exists [p(x)\circ p(y)]_{?}\circ[q(y)]_!\mathbin{.}
\bigl(\neg\psi\vee\{x=y\}\bigr).
\end{equation*}
is satisfied in a multiset
$H:=p(0)\circ p(0)\circ K=L\circ p(0)\circ p(1)=M\circ q(0)$ we normalize
\begin{align*}
\Init_\phi(H)\xrightarrow{\QuantInit}&
\bigl\{H,[H|\;]^{|c}_\phi\bigr\}^c
\xrightarrow{\QuantUnfl}\bigl\{H,
[K|\;]^{|c}_\phi[0,0]^{x,y|c}_{\neg\psi\vee\{x=y\}}\bigr\}^c
\xrightarrow{\DisjUnfl}\bigl\{H,
[K|\;]^{|c}_\phi[0,0]^{x,y|c,\downarrow}_{\neg\psi}
[0,0]^{x,y|c}_{\{x=y\}}\bigr\}^c\\
\xrightarrow{\LBool}&\bigl\{H,
[K|\;]^{|c}_\phi[0,0]^{x,y|c,\downarrow}_{\neg\psi}
\Res(0=0)\bigr\}^c
\xrightarrow{\DisjFldt}\bigl\{H,
[K|\;]^{|c}_\phi\Res(\ttrue)\bigr\}^c
\xrightarrow{\QuantFldt}\bigl\{H,\Res(\ttrue)\bigr\}^c
\xrightarrow{\LSat}\Sat(\ttrue).
\end{align*}
Thus, $\phi$ is satisfied in $H$.
However, we have also a normalizing sequence ending with $\Sat(\ffalse)$:
\begin{align*}
\Init_\phi(H)\xrightarrow{\QuantInit}&
\bigl\{H,[H\;|\;]^{|c}_\phi\bigr\}^c
\xrightarrow{\QuantUnfl}\bigl\{H,
[L\;|\;]^{|c}_\phi[0,1]^{x,y|c}_{\neg\psi\vee\{x=y\}}\bigr\}^c
\rightarrow^*\bigl\{H,
[L\;|\;]^{|c}_\phi[0,1]^{x,y|c}_{\neg\psi}\bigr\}^c\\
\rightarrow^*&\bigl\{H,
[L\;|\;]^{|c}_\phi\Not
[H\;|\;0,1]^{x,y|c}_{\psi}\bigr\}^c
\xrightarrow{\QuantUnfl}\bigl\{H,
[L\;|\;]^{|c}_\phi\Not
[M\;|\;0,1]^{x,y|c}_{\psi}
[0,1,0]^{x,y,z|c}_{\{z=x\}}\bigr\}^c\\
\rightarrow^*&\bigl\{H,
[L\;|\;]^{|c}_\phi\Not\Res(\ttrue)\bigr\}^c
\rightarrow^!\Sat(\ffalse).
\end{align*}
\end{example}

Thus, evaluation of conditions does not necessarily lead to a unique result
 (the rewriting system is not confluent).
This requires making the definition of logical equivalence bisimulation-like:

\begin{definition}
Let $\phi_1$ and $\phi_2$ be conditions in $\QLang^\cond$.
We say that $\phi_1$ is logically equivalent to $\phi_2$, writing $\phi_1\equiv\phi_2$, if and only if
for all ground multisets of facts $F$, ground substitutions $\sigma$ such that $\sigma(\phi_1)$
and $\sigma(\phi_2)$ are closed, and a Boolean $B\in\{\ttrue,\ffalse\}$
we have
\begin{equation*}
\Init_{\sigma(\phi_1)}(F)\rightarrow^!\Sat(B)\quad\text{if and only if}\quad \Init_{\sigma(\phi_2)}(F)\rightarrow^!\Sat(B).
\end{equation*}
\end{definition}

The following result is an immediate consequence of Lemma~\ref{SubLemma}:
\begin{lemma}
Logical equivalence on conditions in $\QLang^\cond$ is an equivalence and a congruence, i.e., if
$\kappa$ is a position in a condition $\phi$ in $\QLang^\cond$ such that $\phi|_\kappa$ is
a condition, and $\psi\equiv \phi|_\kappa$, then $\phi\equiv\phi[\psi]_\kappa$.
\end{lemma}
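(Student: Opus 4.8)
The plan is to prove the two assertions separately; the work is concentrated in the congruence, which reduces to a single application of Lemma~\ref{SubLemma}.

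\textbf{Equivalence relation.} Reflexivity is trivial and symmetry is built into the symmetric biconditional of the definition. For transitivity, assume $\phi_1\equiv\phi_2$ and $\phi_2\equiv\phi_3$; given a ground $F$, a ground $\sigma$ closing $\sigma(\phi_1)$ and $\sigma(\phi_3)$, and $B\in\{\ttrue,\ffalse\}$, I would pass to an extension of $\sigma$ that additionally grounds the free variables of $\phi_2$ (renaming the bound variables of $\phi_1,\phi_3$ away from the added names so that the derivations of $\sigma(\phi_1)$ and $\sigma(\phi_3)$ are unaffected), and then chain the two biconditionals through $\sigma(\phi_2)$. This is routine bookkeeping about free versus bound variables and I would not dwell on it.

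\textbf{Congruence, setup.} Write $\chi:=\phi|_\kappa$ and $\phi':=\phi[\psi]_\kappa$, and assume $\psi\equiv\chi$. Fix a ground $F$, a ground $\sigma$ closing both $\sigma(\phi)$ and $\sigma(\phi')$, and $B\in\{\ttrue,\ffalse\}$; I must show $\Init_{\sigma(\phi)}(F)\rightarrow^!\Sat(B)$ iff $\Init_{\sigma(\phi')}(F)\rightarrow^!\Sat(B)$. Since substitution commutes with taking subterms, $\sigma(\phi)$ and $\sigma(\phi')$ coincide everywhere except that the subterm at position $\kappa$ is $\sigma(\chi)$ in the former and $\sigma(\psi)$ in the latter. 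I would set up a lockstep correspondence between derivations of the two initial states: because every rule schema of $\cR_{\Sigma,\cD}^\cond$ is topmost and acts only on the top of the stack, merely inspecting and propagating the Boolean carried by $\Res$-frames, the two computations fire the very same schemata and build frames with corresponding indices — differing only in that a frame indexed by a subcondition spanning $\kappa$ carries $\chi$ on one side and $\psi$ on the other — right up to the moment a frame indexed by $\sigma(\chi)$ (resp. $\sigma(\psi)$) reaches the top of the stack.

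\textbf{Congruence, reconciling the subcondition.} At such a moment the state has the form $\{F, S[\vec a]^{\vec v|c}_{\sigma(\chi)}\}^c$ (resp. with $\sigma(\psi)$), where $\vec v$ lists the variables bound by the quantifier prefix enclosing $\kappa$ and $\vec a$ are their current values; put $\tau:=\{\vec a/\vec v\}$. By Lemma~\ref{SubLemma}, the machine can rewrite this to $\{F, S\Res(B_0)\}^c$ — leaving $S$ and $F$ untouched and affecting the continuation only through the emitted $\Res(B_0)$ — precisely when $\Init_{(\tau\sigma)(\chi)}(F)\rightarrow^!\Sat(B_0)$, and symmetrically for $\psi$. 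The hypotheses that $\sigma(\phi)$ and $\sigma(\phi')$ are both closed are exactly what guarantees that $(\tau\sigma)(\chi)$ and $(\tau\sigma)(\psi)$ are both ground sentences, so $\psi\equiv\chi$ applies and gives that these two standalone evaluations reach $\Sat(B_0)$ for the same values of $B_0$. Thus each $\sigma(\chi)$-frame can be matched by a $\sigma(\psi)$-frame emitting the same $\Res(B_0)$, and the lockstep resumes with identical states below. Running the correspondence to completion in both directions shows that $\Init_{\sigma(\phi)}(F)$ and $\Init_{\sigma(\phi')}(F)$ reach exactly the same normal forms $\Sat(B)$, which is $\phi\equiv\phi'$.

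\textbf{Main obstacle.} The one delicate point is the lockstep/segmentation claim: justifying that the only place the two computations can diverge is the evaluation of the $\sigma(\chi)$- versus $\sigma(\psi)$-frame, and that this frame behaves as a self-contained module whose sole interface with the surrounding computation is the Boolean it emits. Both facts rest on the topmost, top-of-stack-local shape of the rule schemata, and they are exactly what Lemma~\ref{SubLemma} abstracts away. Granting that lemma (whose stack-locality justification is the ``trivial'' verification referred to in its statement), the congruence, and hence the whole result, is immediate.
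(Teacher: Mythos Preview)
Your proposal is correct and follows the same route as the paper, which simply declares the result an ``immediate consequence of Lemma~\ref{SubLemma}''; you have spelled out the lockstep correspondence and the use of Lemma~\ref{SubLemma} at the $\sigma(\chi)$/$\sigma(\psi)$ frame that the paper leaves implicit. The transitivity bookkeeping you flag (extending $\sigma$ and renaming bound occurrences) is indeed the only subtlety beyond Lemma~\ref{SubLemma}, and the paper does not address it either.
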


A renaming is an injective substitution $\sigma$ mapping variables to variables.
\begin{lemma}
For any closed condition $\phi$ in $\QLang^\cond$, and any renaming $\sigma$,
$\phi\equiv\sigma(\phi)$.
\end{lemma}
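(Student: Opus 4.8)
The plan is to turn the renaming into an isomorphism between the reduction graphs of $\cR_{\Sigma,\cD}^\cond(\phi)$ and $\cR_{\Sigma,\cD}^\cond(\sigma(\phi))$, and to read the required biconditional off that isomorphism. First I would dispose of two preliminaries. A routine induction on the clauses defining $\closed$, using $\Var(\sigma(P))=\sigma(\Var(P))$, $\Var(\sigma(t))=\sigma(\Var(t))$ together with injectivity of $\sigma$ (so that $\sigma(\Var(t))\subseteq\sigma(V)$ iff $\Var(t)\subseteq V$), gives $\closed(\psi,V)=\closed(\sigma(\psi),\sigma(V))$ for every subcondition $\psi$; at $V=\emptyset$ this shows $\sigma(\phi)$ is again closed. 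Secondly, the ground substitution in the definition of $\equiv$ (written $\sigma$ there, which I rename to $\theta$ to avoid a clash with the renaming) is a closing substitution for the \emph{free} variables of the two conditions; since $\phi$ and $\sigma(\phi)$ are closed they have none, so $\theta$ leaves both fixed and the defining biconditional collapses to
\[
\Init_\phi(F)\rightarrow^!\Sat(B)\quad\text{iff}\quad\Init_{\sigma(\phi)}(F)\rightarrow^!\Sat(B),\qquad F\ \text{ground},\ B\in\{\ttrue,\ffalse\},
\]
which is the whole content of the lemma.

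Next I would define the action $\hat\sigma$ of the renaming on state terms: on frames by $\hat\sigma(\Res(B))=\Res(B)$, $\hat\sigma(\Not)=\Not$, $\hat\sigma([\vec a]^{\vec v|c}_\psi)=[\vec a]^{\sigma\vec v|c}_{\sigma\psi}$, and analogously on the marked and iterator frames, so that $\hat\sigma$ renames only the superscript variable-lists and the subcondition subscripts while leaving the ground multisets, the ground value-lists $\vec a$, and the Booleans untouched; I extend it homomorphically to stacks and by $\hat\sigma(\{F,S\}^c)=\{F,\hat\sigma(S)\}^c$, $\hat\sigma(\Sat(B))=\Sat(B)$. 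Because every variable occurring in a frame of a state of $\cR_{\Sigma,\cD}^\cond(\phi)$ lies in $\Var(\phi)$, where $\sigma$ is injective, $\hat\sigma$ is a bijection from the states of $\cR_{\Sigma,\cD}^\cond(\phi)$ onto those of $\cR_{\Sigma,\cD}^\cond(\sigma(\phi))$, and evidently $\hat\sigma(\Init_\phi(F))=\Init_{\sigma(\phi)}(F)$ and $\hat\sigma(\Sat(B))=\Sat(B)$.

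The core step is to show $t\to t'$ in $\cR_{\Sigma,\cD}^\cond(\phi)$ iff $\hat\sigma(t)\to\hat\sigma(t')$ in $\cR_{\Sigma,\cD}^\cond(\sigma(\phi))$, by inspecting each rule schema. For the purely structural rules ($\LFalse$, the two negation rules, the three disjunction rules, $\QuantInit$, the two quantifier-folding rules, and $\LSat$) a schema instance for $\phi$ maps under $\hat\sigma$ verbatim to the corresponding instance for $\sigma(\phi)$, with $\psi,P,\vec v,\vec w$ replaced by $\sigma\psi,\sigma P,\sigma\vec v,\sigma\vec w$. The cases with genuine content are $\LBool$, $\QuantUnfl$, and the side condition of $\QuantEnd$. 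For $\LBool$ the produced Boolean is $\cB\{\vec a/\vec v\}$ on the $\phi$-side and $(\sigma\cB)\{\vec a/\sigma\vec v\}$ on the $\sigma(\phi)$-side; these agree because $\sigma$ followed by $\{\vec a/\sigma\vec v\}$ equals $\{\vec a/\vec v\}$ on $\Var(\cB)\subseteq\{\vec v\}$, the well-definedness of $\{\vec a/\sigma\vec v\}$ using injectivity. For $\QuantUnfl$ the matched multiset (the iterator state) is ground and preserved by $\hat\sigma$, while the two patterns to match, $\{\vec a/\vec v\}(P_!\circ P_?)$ and $\{\vec a/\sigma\vec v\}(\sigma(P_!\circ P_?))$, differ only by the injective renaming $\sigma$ of the not-yet-bound variables $\vec w$; hence a match exists on one side iff on the other, the extracted values $\vec b$ coincide, and the resulting substitutions $\{\vec b/\vec w\}$ and $\{\vec b/\sigma\vec w\}$ are $\sigma$-related, so the successor states again correspond under $\hat\sigma$. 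The side condition of $\QuantEnd$ is the negation of exactly this matchability test (the defining equation of $\mu_{P,\vec v}$), so it holds on one side iff on the other. Since $\hat\sigma$ is a bijection with $t\to t'\iff\hat\sigma(t)\to\hat\sigma(t')$, a term is a normal form iff its image is, and $\hat\sigma$ restricts to a bijection between normalizing sequences carrying $\Init_\phi(F)\rightarrow^!\Sat(B)$ to $\Init_{\sigma(\phi)}(F)\rightarrow^!\Sat(B)$ with the same $B$; this yields the displayed biconditional and hence $\phi\equiv\sigma(\phi)$.

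I expect the main obstacle to be the matching step in $\QuantUnfl$ and its companion test in $\QuantEnd$: one must verify that term matching modulo the associative–commutative multiset structure is invariant under the renaming, i.e.\ that success or failure and the extracted bindings are unchanged, and this is precisely where injectivity of $\sigma$ is indispensable—without it two distinct pattern variables could be identified, altering which multisets match. A secondary point needing care is the reduction over $\theta$: because binding in $\QLang^\cond$ is non-compositional, one must be explicit that $\theta$ acts only on free variables, so that closedness of $\phi$ and $\sigma(\phi)$ genuinely renders $\theta$ inert.
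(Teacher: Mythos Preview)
The paper states this lemma without proof, so there is no authorial argument to compare against; your isomorphism argument via $\hat\sigma$ is the natural route and is essentially sound. The rule-by-rule verification is correct: injectivity of $\sigma$ is precisely what makes $\{\vec a/\sigma\vec v\}$ well-defined and guarantees $(\sigma\cB)\{\vec a/\sigma\vec v\}=\cB\{\vec a/\vec v\}$, and your handling of $\QuantUnfl$ and $\QuantEnd$ via renaming of the free variables in an AC-matching problem is right.

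There is, however, a genuine gap in the preliminary reduction over $\theta$ that you flag but do not actually resolve. In this calculus a closed condition may still contain syntactic variable occurrences (they are ``bound'' only by the semantic reading of patterns, not by any syntactic binder), and substitution is not capture-avoiding. Hence a ground substitution $\theta$ need \emph{not} fix a closed $\phi$: with $\phi=\exists[f(x)]_?\mathbin{.}\True$, renaming $\sigma(x)=y$, and $\theta=\{1/x,2/y\}$, one obtains $\theta(\phi)=\exists[f(1)]_?\mathbin{.}\True$ and $\theta(\sigma(\phi))=\exists[f(2)]_?\mathbin{.}\True$, both closed yet distinguishable on $F=f(1)$. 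Under the literal reading of the paper's definition of $\equiv$ (quantifying over \emph{all} ground $\theta$ making both sides closed), the lemma would therefore fail. The only coherent reading is the one you tacitly adopt---that $\theta$ ranges over substitutions whose support lies in the free variables---but this is an interpretive repair of the definition, not a consequence of it, and should be stated explicitly rather than asserted as ``$\theta$ leaves both fixed''. Once that reading is granted (or the definition is amended), your isomorphism proof goes through verbatim, since for closed $\phi$ and $\sigma(\phi)$ the only admissible $\theta$ is the identity.
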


The following result clarifies elements of rewriting semantics of sentences in $\QLang^\cond$:
\begin{lemma}
\label{LemmaClearSem}
For each ground multiset of facts $F$, and all sentences $\phi$, $\phi_1$,
$\phi_2$ and $\exists P\mathbin{.}\psi$
:
\begin{enumerate}
\item $\Init_\phi(F)\rightarrow^!\Sat(\ttrue)$ or $\Init_\phi(F)\rightarrow^!\Sat(\ffalse)$,
and these are the only possible normal forms of $\Init_\phi(F)$.
\item $\Init_\False(F)\rightarrow^!\Sat(\ffalse)$ and never $\Init_\False(F)\rightarrow^!\Sat(\ttrue)$.
\item $\Init_\phi(F)\rightarrow^!\Sat(B)$ iff $\Init_{\neg\phi}(F)\rightarrow^!\Sat(\neg B)$
for all $B:\Bool$.
\item $\Init_{\phi_1\vee\phi_2}(F)\rightarrow^!\Sat(\ttrue)$ iff
$\Init_{\phi_1}(F)\rightarrow^!\Sat(\ttrue)$ or
$\Init_{\phi_2}(F)\rightarrow^!\Sat(\ttrue)$.
\item $\Init_{\phi_1\vee\phi_2}(F)\rightarrow^!\Sat(\ffalse)$ iff
$\Init_{\phi_1}(F)\rightarrow^!\Sat(\ffalse)$ and
$\Init_{\phi_2}(F)\rightarrow^!\Sat(\ffalse)$.
\item $\Init_{\exists P\mathbin{.}\psi}(F)\rightarrow^!\Sat(\ttrue)$ iff
there exists a substitution $\sigma$, and a multiset  $F'$ such that
$F'\circ\sigma(P_{?}\circ P_!)=F$ and $\Init_{\sigma(\psi)}(F)\rightarrow^!\Sat(\ttrue)$.
\item $\Init_{\exists P\mathbin{.}\psi}(F)\rightarrow^!\Sat(\ffalse)$ iff
there exist two sequences of ground
multisets of facts $F_0, F_1,\ldots,F_n$ and $G_0,G_1,\ldots, G_{n-1}$,
 and a sequence
of substitutions $\sigma_0,\sigma_1,\ldots,\sigma_{n-1}$ such that
\begin{enumerate}
\item  $F_0=F$,
$F_{i+1}=G_i\circ \sigma_i(P_!)$, and $F_i=G_i\circ\sigma_i(P_!\circ P_{?})$,
for all $i\in\{0,\ldots, n-1\}$,
\item
$\Init_{\sigma_i(\psi)}(F)\rightarrow^!\Sat(\ffalse)$, for all $i\in\{0,\ldots, n-1\}$,
\item there exists no substitution $\sigma_n$ and multiset of facts $G_n$ such that
$F_n=G_n\circ\sigma_n(P_!\circ P_{?})$.
\end{enumerate}
\item $\Init_{\phi\vee\neg\phi}(F)\rightarrow^!\Sat(\ttrue)$.
If both  $\Init_\phi(F)\rightarrow^!\Sat(\ttrue)$ and
$\Init_\phi(F)\rightarrow^!\Sat(\ffalse)$  then also
$\Init_{\phi\vee\neg\phi}(F)\rightarrow^!\Sat(\ffalse)$.
\end{enumerate}
\end{lemma}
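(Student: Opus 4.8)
The plan is to take part~(1) as the backbone and to read off the remaining parts from a rule-by-rule trace of the unique reduction that $\Init_\phi(F)$ is forced to take at its topmost frame, no structural induction on $\phi$ being needed beyond the fact that part~(1) holds for every condition at once. Two preliminary observations drive everything. First, every rule schema of $\cR_{\Sigma,\cD}^\cond(\phi)$ leaves the first component $F$ of the state $\{F,S\}^c$ untouched, so \emph{every} subcondition is evaluated against the original database $F$ rather than against the shrinking iterator state; this is exactly why the database argument in clauses (6) and (7) is $F$. Second, Lemma~\ref{SubLemma} lets me ``bottle'' the reduction of any pushed frame $[\vec a]^{\vec v|c}_\psi$ into an independent evaluation $\Init_{\sigma(\psi)}(F)\rightarrow^!\Sat(B)$ of the sentence $\sigma(\psi)$, to which part~(1) (for that condition) applies.

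For part~(1) I would argue existence and shape separately. Existence of a normal form is immediate from Theorem~\ref{CondTerminating}. For the shape I establish a well-formedness invariant on reachable stacks --- a $\Res$-frame occurs only at the bottom of the stack or immediately to the right of a $\Not$-frame, a $\downarrow$-marked frame, or an iterator frame, and $\Not$- and $\downarrow$-frames are never topmost --- and check it is preserved by each schema. A case analysis on the topmost frame then shows every well-formed state other than $\Sat(B)$ is reducible; for iterator frames this uses that the guard of $\QuantEnd$, $(\mu_{P,\vec v}(F',\vec v)=\yes)=\ffalse$, is exactly the negation of the match precondition of $\QuantUnfl$, so the two are mutually exclusive and jointly exhaustive. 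Hence normal forms are exactly the $\Sat(B)$. Finally the carried Boolean is always in $\{\ttrue,\ffalse\}$: a $\Res(\cB)$ is created only by $\LFalse$, by $\LBool$ (where $\cB$ is already ground, since for a sentence all variables of a Boolean leaf are bound by enclosing quantifiers and hence instantiated to ground values from $F$, so $\cB$ simplifies to $\ttrue$ or $\ffalse$ by the standing assumption on $\cD$), or by the explicit $\ttrue/\ffalse$ constants in the disjunction and quantifier schemas. Part~(2) is the degenerate instance: the only rule applicable to $[]^{|c}_\False$ is $\LFalse$, forcing $\Sat(\ffalse)$ and never $\Sat(\ttrue)$.

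The connective cases are then direct traces combined with part~(1). For part~(3), $\Init_{\neg\phi}(F)$ must first fire $\LNotUnfl$, after which (by Lemma~\ref{SubLemma}) it reaches $\{F,\Not\Res(B)\}^c$ precisely when $\Init_\phi(F)\rightarrow^!\Sat(B)$, and $\LNotFld$ then $\LSat$ deliver $\Sat(\neg B)$. For parts~(4) and~(5), $\Init_{\phi_1\vee\phi_2}(F)$ must fire $\DisjUnfl$, reduce the $\phi_2$-frame to $\Res(B_2)$, and then either short-circuit via $\DisjFldt$ (on a path giving $B_2=\ttrue$) or drop $\downarrow$ via $\DisjFldf$ and reduce the $\phi_1$-frame (on a path giving $B_2=\ffalse$). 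Reading off which Booleans can appear, with part~(1) for each disjunct, yields ``$\Sat(\ttrue)$ iff some disjunct can yield $\ttrue$'' and ``$\Sat(\ffalse)$ iff both disjuncts can yield $\ffalse$'', the nondeterminism being realised by the choice of evaluation path for $\phi_2$ and, when it folds false, for $\phi_1$.

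The quantifier clauses are the crux and will take the most care. After $\QuantInit$ seeds the iterator state with the whole $F$, the run alternates $\QuantUnfl$ (matching $\sigma_i(P_!\circ P_?)$ against the current state, deleting the matched $P_?$-facts but retaining the $P_!$-facts) with an evaluation of $\sigma_i(\psi)$ against the fixed $F$ and a fold by $\QuantFldt$ or $\QuantFldf$; by the invariant the iterator state is always $F$ with some matched $P_?$-facts removed, hence a sub-multiset of $F$. For part~(6) the forward direction extracts from the closing $\QuantFldt$ step a match $\sigma$ with $\sigma(P_!\circ P_?)\subseteq(\text{iterator state})\subseteq F$, whence $F=F'\circ\sigma(P_?\circ P_!)$ and, by Lemma~\ref{SubLemma}, $\Init_{\sigma(\psi)}(F)\rightarrow^!\Sat(\ttrue)$; the backward direction simply schedules that $\sigma$ at the very first $\QuantUnfl$, legal because all of $F$ is available there. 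For part~(7) the data in (a)--(c) is exactly the record of a maximal run: states $F_0=F$, $F_{i+1}=G_i\circ\sigma_i(P_!)$ and $F_i=G_i\circ\sigma_i(P_!\circ P_?)$ (condition (a)), each round forced to fold false because $\Init_{\sigma_i(\psi)}(F)\rightarrow^!\Sat(\ffalse)$ (condition (b)), terminating with $\QuantEnd$ precisely when no further match exists (condition (c)); I match these choices to a concrete rewrite path and conversely read such a path off any reduction ending in $\Sat(\ffalse)$, using that the $\QuantUnfl$-loop terminates (as $P$ is terminating, $P_?\neq\Empty$, so each round deletes at least one fact). Part~(8) is then a corollary: by part~(1) $\Init_\phi(F)\rightarrow^!\Sat(B_0)$ for some $B_0\in\{\ttrue,\ffalse\}$, by part~(3) the matching run of $\neg\phi$ yields $\Sat(\neg B_0)$, and since one of $B_0,\neg B_0$ is $\ttrue$ part~(4) gives $\Init_{\phi\vee\neg\phi}(F)\rightarrow^!\Sat(\ttrue)$; if moreover $\phi$ yields both $\ttrue$ and $\ffalse$, then $\phi$ yields $\ffalse$ and, by part~(3), so does $\neg\phi$, whence part~(5) gives $\Init_{\phi\vee\neg\phi}(F)\rightarrow^!\Sat(\ffalse)$. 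I expect the only genuine difficulty to be this quantifier case: both the iterator-state invariant and the careful correspondence between the nondeterministic scheduling of matches and the existential/sequential data recorded in (6) and (7).
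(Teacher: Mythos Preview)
Your proposal is correct and follows essentially the same outline as the paper's (very terse) proof: points 2--7 are read off from the rule schemata together with Lemma~\ref{SubLemma}, and point 8 is derived from points 3--5 (with point 1 supplying the needed $B_0$). The one methodological difference is in point~(1): the paper appeals to structural recursion on $\phi$ (via Lemma~\ref{SubLemma} at each inductive step), whereas you prove it by a progress-and-preservation argument --- a stack invariant preserved by every rule, plus a case analysis on the topmost frame showing every well-formed non-$\Sat$ state is reducible. Both are sound; your route is slightly more work up front (the invariant has several clauses to check) but has the virtue of making explicit why no stuck states other than $\Sat(B)$ can arise, and it cleanly isolates the $\QuantUnfl$/$\QuantEnd$ exhaustiveness fact that the structural-induction proof would also need at the quantifier step. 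Your treatment of parts (6)--(7), in particular the observation that the database component $F$ is never modified so all subevaluations are against the original $F$, and the explicit bookkeeping of the iterator-state sequence $F_0,\dots,F_n$, spells out exactly what the paper leaves to the reader.
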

\begin{proof}
The first point is  verified by structural recursion using Lemma~\ref{SubLemma}
and rules in Equations~\eqref{Bool-schema}--\eqref{SatEq}. Points 2--7 are
verified using rules in Equations~\eqref{Bool-schema}--\eqref{SatEq}. Point 8 is verified using
points 3-5.
\end{proof}

\begin{lemma}
The following logical equivalences hold between conditions in $\QLang^\cond$:
\begin{gather*}
\phi\vee\False\equiv\phi,\quad
\phi_1\vee\phi_2\equiv\phi_2\vee\phi_1,\quad
\phi_1\vee(\phi_2\vee\phi_3)\equiv (\phi_1\vee\phi_2)\vee\phi_3,\\
\neg\neg\phi\equiv\phi,\quad
\exists P\mathbin{.}(\phi_1\vee\phi_2)\equiv
(\exists P\mathbin{.}\phi_1)\vee
(\exists P\mathbin{.}\phi_2).
\end{gather*}
\end{lemma}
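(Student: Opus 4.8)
The plan is to reduce the relation $\equiv$ to a statement about which normal forms are reachable, and then discharge the five identities one at a time using the characterisations gathered in Lemma~\ref{LemmaClearSem}. By definition $\chi_1\equiv\chi_2$ means that for every ground multiset $F$, every closing ground substitution $\sigma$, and every $B\in\{\ttrue,\ffalse\}$ we have $\Init_{\sigma(\chi_1)}(F)\rightarrow^!\Sat(B)$ iff $\Init_{\sigma(\chi_2)}(F)\rightarrow^!\Sat(B)$. Since all five identities are schematic in their component conditions, I fix $F$ and a $\sigma$ closing both sides, pass to closed instances, and then argue separately for $B=\ttrue$ and for $B=\ffalse$. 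In effect everything reduces to the Boolean algebra of the two predicates ``$\Init_{(-)}(F)\rightarrow^!\Sat(\ttrue)$'' and ``$\Init_{(-)}(F)\rightarrow^!\Sat(\ffalse)$'' on the relevant subconditions.

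The four propositional identities are then immediate. For $\phi\vee\False\equiv\phi$ I combine point~(4) with point~(2) (which says $\Init_\False(F)$ never reaches $\Sat(\ttrue)$) in the $\ttrue$ case, and point~(5) with point~(2) (which says $\Init_\False(F)$ always reaches $\Sat(\ffalse)$) in the $\ffalse$ case. Commutativity and associativity of $\vee$ follow by reading points~(4) and~(5) as asserting that reachability of $\Sat(\ttrue)$ behaves as a meta-level disjunction, and reachability of $\Sat(\ffalse)$ as a meta-level conjunction, over the immediate disjuncts; commutativity and associativity of these meta-level connectives then transfer directly. For $\neg\neg\phi\equiv\phi$ I apply point~(3) twice and use $\neg\neg B=B$ for $B\in\{\ttrue,\ffalse\}$.

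The remaining identity $\exists P\mathbin{.}(\phi_1\vee\phi_2)\equiv(\exists P\mathbin{.}\phi_1)\vee(\exists P\mathbin{.}\phi_2)$ is where the real work is. The $\ttrue$ case is clean: by point~(6) the left-hand side reaches $\Sat(\ttrue)$ iff there is a single match $\sigma$, with $F'\circ\sigma(P_?\circ P_!)=F$, for which $\sigma(\phi_1\vee\phi_2)=\sigma(\phi_1)\vee\sigma(\phi_2)$ reaches $\Sat(\ttrue)$; by point~(4) this holds iff $\sigma(\phi_1)$ or $\sigma(\phi_2)$ does. Distributing the existential over this meta-level disjunction, applying point~(6) to each disjunct, and finishing with point~(4) on the outside produces exactly the right-hand side.

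The $\ffalse$ case is the main obstacle. Here I invoke point~(7): the left-hand side reaches $\Sat(\ffalse)$ iff there is a maximal matching schedule $(F_i,G_i,\sigma_i)$ — which, through conditions~(a) and~(c), is determined by $P$ and $F$ alone and is therefore independent of the body — along which every $\sigma_i(\phi_1\vee\phi_2)$ reaches $\Sat(\ffalse)$; by point~(5) this is the same as every $\sigma_i(\phi_1)$ \emph{and} every $\sigma_i(\phi_2)$ reaching $\Sat(\ffalse)$. Dually, via point~(5) and two applications of point~(7), the right-hand side reaches $\Sat(\ffalse)$ iff there is a schedule making all $\phi_1$-bodies false together with a \emph{possibly different} schedule making all $\phi_2$-bodies false. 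One direction is painless: a single schedule witnessing the left-hand side is simultaneously a witness for both quantifiers on the right. The converse is the crux and the step I expect to be hardest: from one schedule witnessing $\exists P\mathbin{.}\phi_1=\ffalse$ and another witnessing $\exists P\mathbin{.}\phi_2=\ffalse$ I must extract a single schedule along which both bodies are false at every step. The two schedules are forced to remove the same $P_?$-facts but may choose different $P_!$-bindings, and the bodies depend precisely on those bindings; reconciling the two iterations into one common schedule — rather than any of the propositional manipulations — is the delicate point on which the whole lemma turns, and I would devote the bulk of the argument to a careful induction on the shrinking iterator state to carry it out.
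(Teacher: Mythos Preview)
Your treatment of the four propositional identities is correct and matches the paper's approach exactly: both of you reduce everything to points~(2)--(5) of Lemma~\ref{LemmaClearSem}. For the distribution of $\exists$ over $\vee$ you also correctly isolate the crux of the $\ffalse$-direction: from two possibly different schedules witnessing $\exists P\mathbin{.}\phi_1\rightarrow^!\Sat(\ffalse)$ and $\exists P\mathbin{.}\phi_2\rightarrow^!\Sat(\ffalse)$, one must manufacture a \emph{single} schedule along which both bodies fail. The paper's proof asserts this merging step as an unargued ``iff''; you at least flag it as the hard part and propose to handle it by induction on the iterator state.

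Unfortunately that induction cannot be carried out, because the step is false. Take $P=\Ret{f(x)}\circ\Keep{g(y)}$, $\phi_1=\{y=2\}$, $\phi_2=\{y=1\}$, and $F=f(1)\circ g(1)\circ g(2)$. Any schedule for $P$ on $F$ has length one (there is a single $P_?$-fact $f(1)$) and binds $x=1$ together with either $y=1$ or $y=2$. Choosing $y=1$ makes $\phi_1$ false, so $\exists P\mathbin{.}\phi_1\rightarrow^!\Sat(\ffalse)$; choosing $y=2$ makes $\phi_2$ false, so $\exists P\mathbin{.}\phi_2\rightarrow^!\Sat(\ffalse)$; hence by point~(5) the right-hand side $(\exists P\mathbin{.}\phi_1)\vee(\exists P\mathbin{.}\phi_2)$ reaches $\Sat(\ffalse)$. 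But under \emph{either} binding the body $\phi_1\vee\phi_2=\{y=2\}\vee\{y=1\}$ evaluates deterministically to $\ttrue$, so $\exists P\mathbin{.}(\phi_1\vee\phi_2)$ reaches only $\Sat(\ttrue)$ and never $\Sat(\ffalse)$. Thus the two sides are not logically equivalent, and no argument---yours or the paper's---can close this direction. Your instinct that this was ``the delicate point on which the whole lemma turns'' was exactly right; the point simply does not hold when $P$ contains $\Keep{\_}$-facts whose bindings the two schedules may choose differently.
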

\begin{proof}
The above equivalences can be proven
using points~1-7 in Lemma~\ref{LemmaClearSem}. Only the
last equivalence's proof
is non-trivial.
Let $F$ be a ground multiset of facts and let $\sigma$ be a ground substitution such that
$\sigma(\exists P\mathbin{.}(\phi_1\vee\phi_2))$ (or, equivalently,
$\sigma((\exists P\mathbin{.}\phi_1)\vee
(\exists P\mathbin{.}\phi_2))$) is closed.
Denote $Q:=\sigma(P)$, $\psi_i:=\sigma(\phi_i)$, for $i\in\{1,2\}$.
Using Lemma~\ref{LemmaClearSem}, p.~6, we see that
$\Init_{\exists Q\mathbin{.}(\psi_1\vee\psi_2)}(F)\rightarrow^!\Sat(\ttrue)$
iff  there exists a substitution $\sigma'$, and a multiset of facts $F'$ such that
$F'\circ\sigma'(Q_{?}\circ Q_!)=F$ and
$\Init_{\sigma'(\psi_1)\vee\sigma'(\psi_2)}(F)\rightarrow^!\Sat(\ttrue)$.
The latter holds iff there exists $i\in\{1,2\}$ such that
$\Init_{\sigma'(\psi_i)}(F)\rightarrow^!\Sat(\ttrue)$, by Lemma~\ref{LemmaClearSem}, p.~4.
It follows, again using Lemma~\ref{LemmaClearSem}, point~6,  that
$\Init_{\exists Q\mathbin{.}(\psi_1\vee\psi_2)}(F)\rightarrow^!\Sat(\ttrue)$
iff
$\Init_{\exists Q\mathbin{.}\psi_i}(F)\rightarrow^!\Sat(\ttrue)$
for some $i\in\{1,2\}$, i.e., iff (by Lemma~\ref{LemmaClearSem}, p.~4)
$\Init_{(\exists Q\mathbin{.}\psi_1)\vee(\exists Q\mathbin{.}\psi_2)}(F)\rightarrow^!\Sat(\ttrue)$.
The part of the proof with falsity is more complex. Using Lemma~\ref{LemmaClearSem}, p.~7,
we see that $\Init_{\exists Q\mathbin{.}(\psi_1\vee\psi_2)}(F)\rightarrow^!\Sat(\ffalse)$
iff
there exist two sequences of ground
multisets of facts $F_0,\ldots,F_n$ and $G_0,\ldots, G_{n-1}$,
 and a sequence
of substitutions $\sigma_0,\ldots,\sigma_{n-1}$ such that
(a) $F_0=F$, $F_{i+1}=G_i\circ \sigma_i(Q_!)$ and $F_i=G_i\circ\sigma_i(Q_!\circ Q_{?})$,
for all $i\in\{0,\ldots, n-1\}$,
(b) $\Init_{\sigma_i(\psi_1)\vee\sigma_i(\psi_2)}(F)\rightarrow^!\Sat(\ffalse)$, for all $i\in\{0,\ldots, n-1\}$, (c)
there exists no substitution $\sigma_n$ and multiset of facts $G_n$ such that
$F_n=G_n\circ\sigma_n(Q_!\circ Q_{?})$. Then by Lemma~\ref{LemmaClearSem}, p.~5,
(b) iff, for $j\in\{1,2\}$, (b${}_j$) $\Init_{\sigma_i(\psi_j)}(F_i)\rightarrow^!\Sat(\ffalse)$, for all $i\in\{0,\ldots, n-1\}$. Then (by
Lemma~\ref{LemmaClearSem}, p.~7) (a), (b$_1$), (b$_2$) and (c)
iff
$\Init_{\exists Q\mathbin{.}\psi_j}(F)\rightarrow^!\Sat(\ffalse)$
for $j\in\{1,2\}$ iff, by Lemma~\ref{LemmaClearSem}, p.~5,
$\Init_{(\exists Q\mathbin{.}\psi_1)\vee (\exists Q\mathbin{.}\psi_2)}(F)\rightarrow^!\Sat(\ffalse)$.
\end{proof}

Non-confluence of $\cR_{\Sigma, \cD}^\cond(\phi)$ in
Example~\ref{NonConflEx} depended on  patterns in $\phi$ with more than one fact.
It turns out that $\cR_{\Sigma, \cD}^\cond(\phi)$ may be non-confluent
even if $\phi$ contains only  single-fact patterns:

\begin{example}
Let
$r:\Nat\;\Nat\rightarrow\Fact$ be commutative. Consider condition
$\phi:=\exists \Rett{r(x,y)}\mathbin{.}\{x=1\}$ evaluated in a database $F=r(1,2)$. Since  $r$ is commutative, there are two distinct substitutions
$\{1/x,2/y\}$ and $\{2/x,1/y\}$ which match $r(x,y)$ with $r(1,2)$. Consequently, there are
two distinct paths of evaluating $\phi$:
$
\Init_\phi(F)\rightarrow^*
\bigl\{F,[F|]_{\phi}^{|c}\bigr\}^c
\xrightarrow{\QuantUnfl}
\left\{
\begin{array}{l}
\bigl\{F,[\Empty|]_{\phi}^{|c}[1,2]^{x,y|c}_{\{x=1\}}\bigr\}^c
%\xrightarrow{Bool}\bigl\{F,[\Empty|]_{\phi}^{|c}\Res(\ttrue)\bigr\}^c
\rightarrow^!\Sat(\ttrue)\\
\bigl\{F,[\Empty|]_{\phi}^{|c}[2,1]^{x,y|c}_{\{x=1\}}\bigr\}^c
%\xrightarrow{Bool}\bigl\{F,[\Empty|]_{\phi}^{|c}\Res(\ffalse)\bigr\}^c
\rightarrow^!\Sat(\ffalse)
\end{array}
\right..
$
\end{example}

\begin{definition}
\label{UniqMatchPropDef}
A fully reduced term $t:\Fact$
is said to have a {\em unique matching} property iff  for any ground, fully reduced
 term $t':\Fact$ there exists at most one
substitution $\sigma$ such that $\sigma(t)=_At'$.
\end{definition}

\begin{definition}
\label{NonDeterministicCondDef}
A condition $\phi$ in $\QLang^\cond$ is called {\em deterministic} if and only if
 all quantification patterns in $\phi$ contain
only single facts with unique matching property.
\end{definition}

The following theorem states that while evaluation of a deterministic condition is not itself deterministic, but its results are.

\begin{theorem}
\label{ConfluentPartCond}
Let $\phi$ be a deterministic condition in $\QLang^\cond$. Then
$\cR_{\Sigma,\cD}^\cond(\phi)$ is confluent. In particular, given a ground multiset of facts $F$,
there is a unique $B\in\{\ttrue,\ffalse\}$ such that
$\Init_\phi(F)\rightarrow^!\Sat(B)$.
\end{theorem}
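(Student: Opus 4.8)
The plan is to exploit termination. Because $\cR_{\Sigma,\cD}^\cond(\phi)$ terminates (Theorem~\ref{CondTerminating}), confluence is equivalent to the assertion that every term of sort $\StateC$ has a unique normal form, and I would prove the latter by well-founded induction along the order $<_c$ constructed in that proof. The first step is a purely syntactic inspection of the schemata in Equations~\eqref{Bool-schema}--\eqref{SatEq}: each rewrites a whole state $\{F,S\}^c$ (the rules are topmost), and for each the applicable schema together with its effect is uniquely determined by the top frame, or top two frames, of $S$. Indeed $\QuantInit$ and $\QuantUnfl$ are separated by whether the iterator state is present, $\QuantUnfl$ and $\QuantEnd$ by the side condition $\mu_{P,\vec v}(F',\vec v)=\yes$, the folding rules by the presence of a $\Res$-frame on top, and the Boolean inside $\Res$ selects between $\DisjFldt$ and $\DisjFldf$, and between $\QuantFldt$ and $\QuantFldf$. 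The sole exception is $\QuantUnfl$, which may fire through several matches of $\sigma(P_!\circ P_?)$ against the iterator state. Hence the only way a state can have two distinct one-step reducts is through two applications of $\QuantUnfl$ to a common iterator frame.

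Next I would use the determinism hypothesis to pin down these branches. Since patterns in conditions are terminating and preserving, a single-fact pattern must have the form $P=\Ret{f}$, so $P_!=\Empty$ and $P_?=f$. A $\QuantUnfl$ step therefore selects a fact $g$ of the iterator state for which $f$ matches $g$, and by the unique matching property (Definitions~\ref{UniqMatchPropDef},~\ref{NonDeterministicCondDef}) the matching substitution $\sigma_g$ is forced by $g$ alone. Thus the branches of $\QuantUnfl$ are indexed exactly by the matching facts $g$; each removes $g$ from the iterator state and pushes the frame $[\vec v,\vec w]^{\vec v,\vec w|c}_\psi$ carrying the unique values $\sigma_g(\vec w)$. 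The only genuine freedom is the order in which the matching facts are consumed.

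The heart of the argument is that this order is irrelevant. I would isolate it as a lemma proved by structural induction on the condition: for every deterministic $\chi$ and every ground $F$, the term $\Init_\chi(F)$ has a unique result. The cases $\False$, $\{\cB\}$, $\neg\psi$, and $\psi_1\vee\psi_2$ are immediate from Lemma~\ref{LemmaClearSem}, points~1--5, together with the induction hypothesis. For $\exists P\mathbin{.}\psi$ with $P=\Ret{f}$, point~6 says the quantifier reaches $\Sat(\ttrue)$ exactly when some matching $g$ has $\Init_{\sigma_g(\psi)}(F)\rightarrow^!\Sat(\ttrue)$, while point~7 degenerates (because $P_!=\Empty$, so each loop step merely deletes one matching fact, clause~(c) forces all of them to be exhausted, and clause~(b) forces each to evaluate to $\ffalse$ through the unique substitution $\sigma_g$) to the statement that it reaches $\Sat(\ffalse)$ exactly when every matching $g$ has $\Init_{\sigma_g(\psi)}(F)\rightarrow^!\Sat(\ffalse)$. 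Since each $\sigma_g(\psi)$ is again a deterministic condition, the induction hypothesis gives it a unique Boolean $B_g$, so the two alternatives are mutually exclusive and jointly exhaustive and the quantifier's value is $\ttrue$ iff $B_g=\ttrue$ for some matching $g$ and $\ffalse$ otherwise. This value is manifestly independent of which matching fact is consumed first; so, using Lemma~\ref{SubLemma} to reduce each pushed $\psi$-frame to $\Res(B_g)$ and then $\QuantFldt$ or $\QuantFldf$ to resume the loop, the competing $\QuantUnfl$-reducts all reduce to one and the same $\{F,S\Res(B)\}^c$. This closes the only critical divergence, yielding unique normal forms and hence confluence; the ``in particular'' claim is the root case $\Init_\phi(F)=\{F,[]^{|c}_\phi\}^c$, whose normal forms are $\Sat(\ttrue)$ or $\Sat(\ffalse)$ by Lemma~\ref{LemmaClearSem}, point~1.

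The main obstacle I anticipate is precisely the quantifier case: the two $\QuantUnfl$-branches do not join in a single step, since each first pushes a subcondition frame that must be evaluated to completion before the iteration can continue, so a naive critical-pair calculation does not close the diagram. The way around this is to route the joining through the existential characterization of Lemma~\ref{LemmaClearSem}, points~6 and~7, where the single-fact and unique-matching hypotheses turn the quantifier's value into an order-independent existential function of the per-fact subresults, and to supply the uniqueness of those subresults from the induction hypothesis via Lemma~\ref{SubLemma}, which bridges a pushed $\psi$-frame and the root evaluation of the smaller condition $\sigma_g(\psi)$. A small but necessary side check is that determinism is inherited under the matching substitutions: applying a ground $\sigma_g$ to $\psi$ keeps every quantifier pattern a single fact and cannot destroy unique matching, so the induction hypothesis genuinely applies to $\sigma_g(\psi)$.
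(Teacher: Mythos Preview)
Your proposal is correct and follows essentially the same route as the paper's own proof: both arguments identify $\QuantUnfl$ as the sole source of branching, use the single-fact pattern $P=\Ret{f}$ with unique matching to index the branches by facts $g$, and close the divergence by an induction on the structure of the condition indexing the top frame, invoking Lemma~\ref{SubLemma} and Lemma~\ref{LemmaClearSem} to reduce to the uniquely determined subresults $B_g$. Your write-up is somewhat more explicit than the paper's (in particular, spelling out how point~7 degenerates when $P_!=\Empty$ and flagging the inheritance of determinism under ground substitution), but the underlying strategy is the same.
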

\begin{proof}
We argue by induction on the complexity of  formulas indexing frames on the top of  a stack.
First observe that  only terms of the form $t:=\bigl\{F, S[F'\;|\;\vec{a}]^{\vec{v}|c}_{\exists P\mathbin{.}\psi}\bigr\}^c$ can be rewritten in a single step into
two distinct terms. As semiconfluence implies confluence it suffices to prove that if
$t'\leftarrow t\rightarrow^{+} t''$
then there exists $s$ such that
$t'\rightarrow^* s\;{}^*\!\!\leftarrow t''$.
By Lemmas~\ref{SubLemma} and \ref{LemmaClearSem}, p.~1, a rewrite sequence
$t\rightarrow^{+}t''$ must (1) contain $\{F,S\Res(B)\}^c$, or (2) can be extended to the sequence
ending in $\{F,S\Res(B)\}^c$. If we prove that $t'\rightarrow^*\{F,S\Res(B)\}^c$ then
we can set $s=t''$ (if (1)) or $s=\{F,S\Res(B)\}^c$ (if (2)).
It remains to prove that $t'\rightarrow^*\{F,S\Res(B)\}^c$.
Under the theorem's
assumption, $P=\Rett{f}$, where $f$ is a fact with
a unique matching property (Definition~\ref{UniqMatchPropDef}).
Thus, if for some fact $g$ in $F'$ there exists  a substitution
 $\sigma_g$ extending
$\{\vec{a}/\vec{v}\}$ such that $g=\sigma_g(f)$ and
$\Init_{\sigma_g(\psi)}(F)\rightarrow\Res(\ttrue)$
(which implies, by inductive assumption, that $\Init_{\sigma_g(\psi)}(F)\not\rightarrow^*\Res(\ffalse)$),
 it is unique, and cannot be missed
during evaluation of the iterator frame.
Either
such $g$ exists, and then $t\rightarrow^*\{F,S\Res(\ttrue)\}^c$
but $t\not\rightarrow^*\{F,S\Res(\ffalse)\}^c$ (hence necessarily
both $B=\ttrue$ and $t'\rightarrow^*\{F,S\Res(\ttrue)\}^c$),
or it doesn't, and hence,
both $B=\ffalse$ and $t'\rightarrow^*\{F,S\Res(\ffalse)\}^c$
\end{proof}

\begin{example}
\label{DeterministicNonconfluentEx}
Suppose $p:\Nat\rightarrow\Fact$, $q:\rightarrow\Fact$, and let $r:\Nat\;\Nat\rightarrow\Fact$ be a commutative operator (i.e., $r(x, y)=_A r(y, x)$). Clearly, $p(x)$ and $q$ have unique matching property
(Definition~\ref{UniqMatchPropDef}), while $r(x, y)$ does not, since, e.g., if
$\sigma_1:=\{0/x, 1/y\}$ and $\sigma_2:=\{1/x, 0/y\}$ then $\sigma_1(r(x, y))=_A\sigma_2(r(x,y))=_Ar(0,1)$. Let
\begin{equation*}
\phi_1:=\exists [p(x)]_{?}\mathbin{.}\{x=0\},\quad
\phi_2:=\exists [p(x)\circ q]_{?}\mathbin{.}\{x=0\},\quad
\phi_3:=\exists [r(x, y)]_{?}\mathbin{.}\{x=0\}.
\end{equation*}
Then $\phi_1$ is deterministic (Definition~\ref{NonDeterministicCondDef}), while $\phi_2$ and
$\phi_3$ are not deterministic.
 Let $F:=p(0)\circ p(1)\circ q\circ r(0,1)$. We now consider evaluation of all the $\phi_i$'s on $F$. First, the reader will easily verify that while evaluating the existential quantifier in $\phi_1$ we can either first match $p(x)$ with $p(1)$ and then, upon failure, with $p(0)$, or first match with $p(0)$. Eventually, both paths yield satisfaction of $\phi_1$ on $F$ (although the first path is is longer). On the other hand, evaluation of $\phi_2$ and $\phi_3$ demonstrate two ways in which non-determinism occurs in evaluation of conditions in $\QLang^\cond$.
First, consider two non-convergent paths of rewriting $I_{\phi_2}(F)$:
\begin{equation*}
\xymatrix{
\bigl\{F,[p(1)\circ r(0,1)|\;]^{|c}_{\phi_2}[0]^{x|c}_{\{x=0\}}\bigr\}^c\ar[r]^-{\LBool}&
\bigl\{F,[p(1)\circ r(0,1)|\;]^{|c}_{\phi_2}\Res(\ttrue)\bigr\}^c\ar[r]^-{*}&\Sat(\ttrue)\\
\bigl\{F,[p(0)\circ p(1)\circ q\circ r(0,1)|\;]^{|c}_{\phi_2}\bigr\}^c\ar[u]^{\QuantUnfl}\ar[d]_{\QuantUnfl}&I_{\phi_2}(F) \ar[l]_-\QuantInit&\Sat(\ffalse)\\
\bigl\{F,[p(0)\circ r(0,1)|\;]^{|c}_{\phi_2}[1]^{x|c}_{\{x=0\}}\bigr\}^c\ar[r]^-{\LBool}&
 \bigl\{F,[p(0)\circ r(0,1)|\;]^{|c}_{\phi_2}\Res(\ffalse)\bigr\}^c\ar[r]^-{\QuantEnd}&
\bigl\{F,\Res(\ffalse)\bigr\}^c\ar[u]^-{*}
}
\end{equation*}
Here the reason for non-determinism which ultimately leads to non-convergent paths of execution
 is that when evaluating the existential quantifier at each attempt we have to consume both $p(x)$-fact and $q$-fact: since there is only one $q$ fact, if we start from wrong $p(x)$-fact (i.e., $p(1)$), we do not get the second chance.

 Denote for brevity $K:=p(0)\circ p(1)\circ q$. Recall that $r(x,y)=_Ar(y,x)$. In particular,
 $r(0,1)=_Ar(1,0)$.
 Consider now two non-convergent paths of rewriting $I_{\phi_3}(F)$:
 \begin{equation*}
\xymatrix{
\bigl\{F,[K|\;]^{|c}_{\phi_3}[0,1]^{x,y|c}_{\{x=0\}}\bigr\}^c\ar[r]^-{\LBool}&
\bigl\{F,[K|\;]^{|c}_{\phi_3}\Res(\ttrue)\bigr\}^c\ar[r]^-{*}&\Sat(\ttrue)\\
\bigl\{F,[K\circ r(0,1)|\;]^{|c}_{\phi_3}\bigr\}^c\ar[u]^{\QuantUnfl}\ar[d]_{\QuantUnfl}&I_{\phi_3}(F) \ar[l]_-\QuantInit&\Sat(\ffalse)\\
\bigl\{F,[K|\;]^{|c}_{\phi_3}[1,0]^{x,y|c}_{\{x=0\}}\bigr\}^c\ar[r]^-{\LBool}&
 \bigl\{F,[K|\;]^{|c}_{\phi_3}\Res(\ffalse)\bigr\}^c\ar[r]^-{\QuantEnd}&
\bigl\{F,\Res(\ffalse)\bigr\}^c\ar[u]^-{*}
}
\end{equation*}
Thus, in this case the reason of non-determinism leading to non-convergent paths was the possibility of
two distinct matchings of $r(0,1)$ with $r(x, y)$ given by $\{0/x, 1/y\}$ and $\{1/x, 0/y\}$.
\end{example}

\section{Rewriting semantics of $\QLang^\query$}

Let $Q$ be a query in $\QLang^\query$.
We associate with $Q$ the rewriting system $\cR_{\Sigma, \cD}^\query(Q)$.
Terms rewritten with the rules of the rewriting system $\cR_{\Sigma, \cD}^\query(Q)$
 are of the form $\{F,F',S\}^q$, where
 $\{\_,\_,\_\}^q:\FactSet\;\FactSet\;\StackQ\rightarrow\StateQ$, $F$ is a database of facts
 against which we issue the query, $F'$ is a partial answer (i.e., an answer built so far in the rewriting process), and $S$ is a stack of sort $\StackQ$ which simulates structural recursion.
 Normal forms encapsulating an answer to $Q$ are constructed with
$\Ans:\FactSet\rightarrow\StateQ$.
Terms of sort $\StackQ$ are constructed from local computation frames of sort
$\NodeQ$, where $\NodeQ<\StackQ$, using an associative
binary operator  $\_\_:\StackQ\;\StackQ\rightarrow\StackQ$ with identity element $\Empty$.
Constructors
of frames are indexed by sub-queries $R$ of $Q$,
 and lists of distinct variable names
$\vec{v}:=v_1,\ldots, v_n$  of respective sorts $s_1,\ldots, s_n$
($\vec{v}$ can be of any length
as long as $\{\vec{v}\}\subseteq\Var(Q)$ and
it contains all free variables of $R$):
\begin{gather*}
[\_,\ldots,\_]^{\vec{v}|q}_R:s_1\ldots s_n\rightarrow\NodeQ,\quad
[\_,\ldots,\_|\_]^{\vec{v}|q}_{R}:s_1\ldots s_n\;\StackC\rightarrow\NodeQ,\\
[\_|\_,\ldots,\_]^{\vec{v}|q}_{R}:\FactSet\;s_1\ldots s_n\rightarrow\NodeQ\ \text{if}\ R=\From P\mathbin{.}R'.
\end{gather*}
As $\vec{v}$ can be empty, the above signature templates include
$[]^{|c}_R:\rightarrow\NodeQ$, etc.
As in the case of $\QLang^\cond$, variables in super- and sub-scripts are
part of function names and are never matched or substituted --- Remark~\ref{MetaRem}  applies  here
 and in the next section.
Frames of the form $[\vec{a}]^{\vec{v}|q}_R$, $[\vec{a}|S]^{\vec{v}|q}_R$,
or $[F'|\vec{a}]^{\vec{v}|q}_R$
are called $(R, \sigma)$-frames, where $\sigma:=\{\vec{a}/\vec{v}\}$
 is the current substitution.
They indicate evaluation of $\sigma(R)$.
Conditional frames  $[\vec{a}|S]^{\vec{v}|q}_R$ are used in
 evaluation of conditionals $\phi\Rightarrow R$, where
$S:\StackC$ represents evaluation of $\phi$.
Iterator frames $[F'|\vec{a}]^{\vec{v}|q}_{\From P\mathbin{.}R}$,
represent iterative evaluation of $\sigma(\From P\mathbin{.}R)$.
Multiset $F'$, called iterator state, contains facts available for matching with
$P_!\circ P_?$. Given a database $F$,
to evaluate a closed query $Q$ we rewrite a state term
$\Init_Q(F):=\{F, \Empty, []^{|q}_Q\}^q$
until a normal form $\Ans(F')$ is reached.
Then we conclude that
 evaluation of $Q$ on $F$ yields $F'$ as an answer.

 Now we are ready to define the rules of $\cR_{\Sigma, \cD}^\query(Q)$.
Literal  facts
are added to the partial
answer multiset after applying the current substitution, and empty queries return nothing:
 \begin{equation}
 \LFact:\bigl\{F, F', S[\vec{v}]^{\vec{v}|q}_{f}\bigr\}^q\Rightarrow
  \bigl\{F,F'\circ f,S\bigr\}^q,\quad
  \LEmpty:\bigl\{F, F', S[\vec{v}]^{\vec{v}|q}_{\Empty}\bigr\}^q\Rightarrow
  \bigl\{F,F',S\bigr\}^q.
 \end{equation}

Evaluation of ``union'' $\_\Con\_$  is implemented by replacing the frame corresponding to $R_1\Con R_2$
with two frames corresponding to $R_1$ and $R_2$, respectively:
\begin{equation}
\LUnUnfl:\ \bigl\{F, F', S[\vec{v}]^{\vec{v}|q}_{R_1\Con R_2}\bigr\}^q\Rightarrow
 \bigl\{F,F', S[\vec{v}]^{\vec{v}|q}_{R_2}[\vec{v}]^{\vec{v}|q}_{R_1}\bigr\}^q
\end{equation}

To compute a conditional $\phi\Rightarrow R$ we first embed a stack representing computation of condition
$\phi$ within the
frame corresponding to the conditional. Once this condition is evaluated, we either evaluate $R$
if the condition is satisfied, or drop the conditional if it is not:
 \begin{align}
 \LCondUnfl:&\ \bigl\{F, F', S[\vec{v}]^{\vec{v}|q}_{\phi\Rightarrow R}\bigr\}^q\Rightarrow
 \bigl\{F, F', S\bigl[\vec{v}\;|\;[\vec{v}]^{\vec{v}|c}_\phi\bigr]^{\vec{v}|q}_{R}\bigr\}^q,\nonumber\\
 \LCondFldf:&\ \bigl\{F, F', S[\vec{v}\;|\;\Res(\ffalse)]^{\vec{v}|q}_{R}\bigr\}^q\Rightarrow
 \{F, F', S\}^q,\nonumber\\
 \LCondFldt:&\ \bigl\{F, F', S[\vec{v}\;|\;\Res(\ttrue)]^{\vec{v}|q}_{R}\bigr\}^q\Rightarrow
 \bigl\{F, F', S[\vec{v}]_R^{\vec{v}|q}\bigr\}^q.
 \label{CondEqQuery}
 \end{align}
 To compute  $\phi$ we add, for every
 rule $\lambda:\{F, S'\}^c\Rightarrow\{F,S''\}^c\ \text{if}\ C$ in $\cR_{\Sigma, \cD}^\cond(\phi)$
the rule schema
\begin{equation}
\lambda^q:\{F,F',S[\vec{v}\;|\;S']^{\vec{v}|q}_R\}^q\Rightarrow\{F,F', S[\vec{v}\;|\;S'']^{\vec{v}|q}_R\}^q\ \text{if}\ C
\label{CondForAllEq}
\end{equation}
Evaluation of $\From\_.\_$ subquery is initialized with the whole database available for matching:
\begin{equation}
\FromInit:\bigl\{F, F', S[\vec{v}]^{\vec{v}|q}_{\From P\mathbin{.}R}\bigr\}^q
\Rightarrow
\bigl\{F, F', S[F\;|\; \vec{v}]^{\vec{v}|q}_{\From P\mathbin{.}R}\bigr\}^q
\end{equation}
Let  $\vec{w}$ be a sequence of
 all the distinct variables in $\Var(P)\setminus\{\vec{v}\}$. Let $\sigma$ be the current substitution.
Rule $\FromUnfl$ pushes onto
the stack a $(\sigma',R)$-frame, where $\sigma'=\sigma\cup\{\vec{b}/\vec{w}\}$
is defined by matching $F''\circ\sigma(P_!\circ P_?)$ with iterator state, and it removes $\sigma'(P_?)$
from the iterator state:
 \begin{equation}
 \FromUnfl:
 \bigl\{F,F',S\bigl[F''\circ P_!\circ P_{?}\;|\;
 \vec{v}\bigr]^{\vec{v}|q}_{\From P\mathbin{.}R}\bigr\}^c
 \Rightarrow
  \bigl\{F,F',S\bigl[F''\circ P_!\;|\; \vec{v}\bigr]^{\vec{v}|q}_{\From P\mathbin{.}R}
 [\vec{v},\vec{w}]^{\vec{v},\vec{w}|q}_{R}\bigr\}^q
 \end{equation}
 We keep applying $\FromUnfl$ until we cannot match
$F''\circ\sigma(P_!\circ P_?)$ with iterator state. Then we remove
the iterator frame from the stack. To prevent
premature application, rule schema $\FromEnd$ is conditional, where the condition uses
functions $\mu_{P, \vec{v}}:\TPPattern\;s_1\ldots s_n\rightarrow\Succ$ defined for each
$\vec{v}\subseteq\Var(Q)$ and
pattern $P$ occurring in $Q$ with the single equation
$ \mu_{P,\vec{v}}(F\circ P_!\circ P_?,\vec{v})=\yes$ (cf. Equation~\eqref{EndFuncDefEq}):
\begin{equation}
\FromEnd:\{F, F', S[F''\;|\;\vec{v}]^{\vec{v}|q}_{\From P\mathbin{.}R}\}^q
\Rightarrow\{F,F',S\}^q\quad\text{if}\ (\mu_{P,\vec{v}}(F'', \vec{v})=\yes)=\ffalse.
\end{equation}

Finally, the rule $\LAns:\{F,F',\Empty\}^c\Rightarrow\Ans(F')$
finishes evaluation of $Q$.

The following result can be proven similarly to Theorem~\ref{CondTerminating}:
\begin{theorem}
$\cR_{\Sigma, \cD}^\query(R)$ is a terminating rewriting system.
\end{theorem}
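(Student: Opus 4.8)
The plan is to follow the proof of Theorem~\ref{CondTerminating} in spirit: I construct a Noetherian partial order $\prec$ on terms of sort $\StateQ$ with $t_1\to t_2$ implying $t_2\prec t_1$ for every rule of $\cR_{\Sigma,\cD}^\query(Q)$; since the rules are topmost, strict monotonicity together with well-foundedness yields termination. Two components of a state $\{F,F',S\}^q$ are harmless for $\prec$. The database $F$ is invariant throughout the rewrite (no rule rewrites the first component), so $\prec$ need only compare states sharing a common $F$; and the partial answer $F'$ may be ignored, since the one rule that enlarges it, $\LFact$, simultaneously pops a frame and so already shrinks the stack. I therefore set $\Ans(F')\prec\{F,F',S\}^q$ and declare $\{F,F_1',S_1\}^q\prec\{F,F_2',S_2\}^q$ iff $S_1\prec_s S_2$, where $\prec_s$ is the lexicographic order on $\StackQ$ induced by a frame order $\prec_f$ on $\NodeQ$, exactly as $<_s$ was induced by $<_f$ in Theorem~\ref{CondTerminating}.

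I build $\prec_f$ lexicographically from three keys, in decreasing significance: (i) the proper-subterm order on indexing subqueries, so a frame whose index is a proper subquery of another's is smaller; (ii) among equally-indexed frames, the three-valued type order iterator $\prec_f$ plain $\prec_f$ conditional; and (iii) an innermost key that, on iterator frames $[F''\,|\,\vec{a}]^{\vec{v}|q}_{\From P\mathbin{.}R}$, makes a strictly smaller iterator state (under $\subsetneq$) smaller, while on conditional frames $[\vec{a}\,|\,S']^{\vec{v}|q}_R$ it compares the embedded condition stacks $S'$ using the stack order $<_s$ from the proof of Theorem~\ref{CondTerminating}. Key (i) makes $\LUnUnfl$, $\LCondUnfl$ and $\FromUnfl$ decreasing, as each places on top a frame indexed by a proper subterm of the former top frame's index; key (ii) handles $\FromInit$ (plain $\to$ iterator) and $\LCondFldt$ (conditional $\to$ plain); key (iii) handles $\FromUnfl$, which strips $\sigma'(P_?)$ from the iterator state --- a strict decrease precisely because the quantifier pattern $P$ is terminating, hence $P_?\neq\Empty$.

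The embedded condition stacks are the only genuinely new feature relative to Theorem~\ref{CondTerminating}, and they are where the real work lies. Each lifted rule $\lambda^q$ of Equation~\eqref{CondForAllEq} rewrites $[\vec{a}\,|\,S']^{\vec{v}|q}_R$ to $[\vec{a}\,|\,S'']^{\vec{v}|q}_R$ exactly when $\{F,S'\}^c\to\{F,S''\}^c$ in $\cR_{\Sigma,\cD}^\cond(\phi)$; by Theorem~\ref{CondTerminating} this forces $S''<_s S'$, so $\lambda^q$ strictly decreases key (iii), hence $\prec$. In other words the nested condition evaluation is tamed by reusing its own termination order as a subcomponent of $\prec_f$, and $\prec_f$ is then Noetherian as a lexicographic composite of Noetherian orders: the proper-subterm order on the finitely many subqueries of $Q$, the finite type order, strict multiset inclusion, and $<_s$ (Noetherian by Theorem~\ref{CondTerminating}).

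What remains is to bound the length of $\StackQ$-stacks, without which $\prec_s$ would not be well-founded (lexicographic comparison of unbounded-length sequences admits infinite descent). The argument is the one used for conditions: the only stack-lengthening rules are $\LUnUnfl$ and $\FromUnfl$, each of which pushes a top frame indexed by a proper subterm of the former top index, so a structural recursion on $Q$ bounds the attainable depth --- concretely $\mathrm{md}(R_1\Con R_2)=\max(1+\mathrm{md}(R_1),\mathrm{md}(R_2))$, $\mathrm{md}(\From P\mathbin{.}R)=1+\mathrm{md}(R)$, $\mathrm{md}(f)=\mathrm{md}(\Empty)=1$, and, crucially, $\mathrm{md}(\phi\Rightarrow R)=\mathrm{md}(R)$, since the whole evaluation of $\phi$ is performed inside one conditional frame and never lengthens the query stack. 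With $\prec$ Noetherian, one then verifies routinely that $\LFact$, $\LEmpty$, $\LCondFldf$, $\FromEnd$ and $\LAns$ shrink the stack or drop to $\Ans$, that $\LCondUnfl$, $\FromInit$ and $\LCondFldt$ decrease through keys (i)--(ii), and that $\FromUnfl$ and $\lambda^q$ decrease through key (iii). The main obstacle is thus not any individual rule but the bookkeeping that at once folds the condition order into $\prec_f$ to absorb the $\lambda^q$ steps and certifies that the nested condition stacks stay bounded and contribute nothing to the length of the outer stack.
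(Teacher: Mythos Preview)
Your proof is correct and follows exactly the approach the paper intends: the paper's own proof is a one-line deferral to Theorem~\ref{CondTerminating}, and you have carried out the analogous construction in detail, correctly identifying that the only genuinely new ingredient is the conditional frame carrying an embedded $\StackC$, which you handle by folding the well-order $<_s$ from Theorem~\ref{CondTerminating} into key~(iii). One small expository slip: you list $\FromUnfl$ under key~(i) in the second paragraph, but as you yourself note a few lines later, the lexicographic comparison is decided at the iterator-frame position by key~(iii) (the shrinking iterator state), not by the index of the newly pushed top frame; this does not affect correctness.
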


The following useful observation can be trivially verified by examining the rule schemas:
\begin{lemma}
\label{SubLemmaQuery}
Let $R$ be a subcondition of $Q$. Then, for all multisets of facts $F$, $F'$, $F''$, stacks $S$,
 lists of variables $\vec{v}=v_1,\ldots,v_n$ and values $\vec{a}=a_1, \ldots,a_n$,
 $\{F, F',S[\vec{a}]^{\vec{v}|q}_R\}^q\rightarrow^*\{F, F'\circ F'', S\}^q$
in $\cR_{\Sigma, \cD}^\query(Q)$
if and only if
$\{F,\Empty,[]^{|q}_{\sigma(R)}\}^q\rightarrow^!\Ans(F'')$ in $\cR_{\Sigma, \cD}^\query(\sigma(R))$,
where substitution $\sigma:=\{\vec{a}/\vec{v}\}$.
\end{lemma}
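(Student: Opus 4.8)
The plan is to mirror the proof of the analogous Lemma~\ref{SubLemma} for conditions, proceeding by a step-by-step simulation argument built on two structural features of $\cR_{\Sigma,\cD}^\query(Q)$ that are read off directly from the rule schemas. First, no rule touches the database component $F$, so it is an invariant of every rewrite and may be held fixed throughout. Second, every rule is stack-local: its left- and right-hand sides have the shape $\{F,F',S\,\tau\}^q$ and $\{F,H,S\,\tau'\}^q$ with $S$ a stack metavariable matching an arbitrary prefix, $\tau,\tau'$ the active top-of-stack fragments, and $H$ equal to $F'$ or $F'\circ f$. The prefix $S$ below the active frames is never inspected, matched against, or modified, and the answer component is changed only by $\LFact$, which appends $f$ irrespective of $F'$.

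First I would turn stack-locality into a framing property: for a fixed prefix $S$ and partial answer $F'$, a rewrite issued from $\{F,F',S[\vec{a}]^{\vec{v}|q}_R\}^q$ keeps $S$ intact and the answer of the form $F'\circ(\text{facts produced so far})$ until the single top frame $[\vec{a}]^{\vec{v}|q}_R$ is fully resolved, at which point the stack has returned to exactly $S$. Consequently $\{F,F',S[\vec{a}]^{\vec{v}|q}_R\}^q\rightarrow^*\{F,F'\circ F'',S\}^q$ holds iff $\{F,\Empty,[\vec{a}]^{\vec{v}|q}_R\}^q\rightarrow^*\{F,F'',\Empty\}^q$, since $S$ and $F'$ are merely carried along under $\circ$. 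This reduces the statement to the case $S=\Empty$, $F'=\Empty$.

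The remaining and only delicate point is the substitution correspondence: evaluating $[\vec{a}]^{\vec{v}|q}_R$ (empty prefix and partial answer) produces, in lockstep, the same rewrites as evaluating the closed query $\sigma(R)$ from $\Init_{\sigma(R)}(F)=\{F,\Empty,[]^{|q}_{\sigma(R)}\}^q$, where $\sigma=\{\vec{a}/\vec{v}\}$. Since $\vec{v}$ contains every free variable of $R$ and the $\vec{a}$ are ground, $\sigma(R)$ is closed, so $\cR_{\Sigma,\cD}^\query(\sigma(R))$ is well defined and, by inspection together with the termination theorem for $\cR^\query$ just established, its only normal forms are of the form $\Ans(\cdot)$. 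I would prove the correspondence by induction on rewrite length (equivalently on the structure of $R$), matching each frame $[\vec{b}]^{\vec{w}|q}_{R'}$ arising in the $Q$-system (with $R'$ a subquery of $R$ and $\{\vec{b}/\vec{w}\}$ extending $\sigma$) with the frame $[\,]^{|q}_{\{\vec{b}/\vec{w}\}(R')}$ of the $\sigma(R)$-system. Each query rule then transfers verbatim: $\LFact$ adds $\{\vec{b}/\vec{w}\}(f)$ on both sides, $\LEmpty$ contributes nothing, $\LUnUnfl$ together with $\FromInit$, $\FromUnfl$, $\FromEnd$ push and pop matching frames, and $\FromUnfl$/$\FromEnd$ consult the substituted pattern $\{\vec{b}/\vec{w}\}(P_!\circ P_?)$, which is identical on the two sides.

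I expect the conditional constructor $\phi\Rightarrow R$ to be the main obstacle, since $\LCondUnfl$, $\LCondFldf$, $\LCondFldt$ and the $\lambda^q$ schema of Equation~\eqref{CondForAllEq} delegate evaluation of $\phi$ to an internal copy of the condition rewriting system carried inside the frame. Here the query-layer correspondence must be fed by the already-established correspondence for conditions: applying Lemma~\ref{SubLemma} to the embedded stack shows that the internal evaluation of $\phi$ under the carried substitution reaches $\Res(B)$ exactly when $\Init_{\{\vec{b}/\vec{w}\}(\phi)}(F)\rightarrow^*\Sat(B)$, so $\LCondFldt$ (resp.\ $\LCondFldf$) fires on one side precisely when it fires on the other. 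Combining the framing property with the substitution correspondence, the left-hand rewrite reaches $\{F,F'\circ F'',S\}^q$ iff the isolated evaluation of $\sigma(R)$ reaches a normal form, which by termination must be $\Ans(F'')$; this is the desired biconditional.
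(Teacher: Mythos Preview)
Your proposal is correct and follows exactly the approach the paper intends: the paper does not give a proof but merely states that the lemma ``can be trivially verified by examining the rule schemas,'' and your framing/stack-locality argument together with the substitution correspondence is precisely that verification spelled out in detail. Your invocation of Lemma~\ref{SubLemma} for the embedded condition stacks is appropriate and matches how the paper layers the query semantics over the condition semantics.
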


The following example shows that evaluation of queries in $\QLang^\query$ does not,
in general, return a unique answer, and,
in particular, that $\cR_{\Sigma, \cD}^\query(Q)$ for general queries $Q$ is not confluent.
\begin{example}
Let $\Token:\rightarrow\Fact$ and $b:\Nat\rightarrow \Fact$.
Consider the query $Q:=\From \Rett{\Token}\circ \Keep{b(x)}\mathbin{.}b(x)$ executed against
database $F:=\Token\circ b(1)\circ b(2)$. Then $\Init_Q(F)$ can normalized in two ways:
\begin{equation*}
\Init_Q(F)\xrightarrow{\FromInit}
\bigl\{F, \Empty, [F\;|\;]^{|q}_{Q}\bigr\}^q
\xrightarrow{\FromUnfl}
\left\{
\begin{array}{l}
\bigl\{F, \Empty, [b(1)\circ b(2)\;|\;]^{|q}_{Q}[1]^{x|q}_{b(x)}\bigr\}^q
\rightarrow^!\Ans(b(1))\\
\bigl\{F, \Empty, [b(1)\circ b(2)\;|\;]^{|q}_{Q}[2]^{x|q}_{b(x)}\bigr\}^q
\rightarrow^!\Ans(b(2))
\end{array}
\right..
\end{equation*}
Queries  like $Q$ are useful as a
simulation of a non-deterministic choice (say, by a human agent) of a subset of
values stored in the database with a fixed maximal cardinality.
E.g., $\From \Rett{b(x)}\mathbin{.}b(x)$ returns all ``$b$-facts''
stored in the database.
Query $Q$ defined above, however, chooses (with repetitions)
at most as many $b$-facts  as there are tokens $\Token$. Query
$\From \Rett{\Token\circ b(x)}\mathbin{.}b(x)$ avoids repetitions.
\end{example}

The following is the bisimulation-like definition of
logical equivalence between queries in $\QLang^\query$:
\begin{definition}
Let $Q_1$ and $Q_2$ be two conditions in $\QLang^\query$. Recall that
$\Init_Q(F):=\{F,\Empty,[]^{|q}_Q\}^q$.
We say that $Q_1$ is logically equivalent to $Q_2$, writing $Q_1\equiv Q_2$, if and only if,
for all ground multisets of facts $F$ and $F'$, and ground substitutions $\sigma$ such that $\sigma(Q_1)$
and $\sigma(Q_2)$ are closed,
we have
\begin{equation*}
\Init_{\sigma(Q_1)}(F)\rightarrow^!\Ans(F')\quad\text{iff}\quad \Init_{\sigma(Q_2)}(F)\rightarrow^!\Ans(F').
\end{equation*}
\end{definition}

In other words, queries are equivalent if they can match each other's answers.
The following result is an immediate consequence of Lemma~\ref{SubLemmaQuery}:
\begin{lemma}
Logical equivalence on queries in $\QLang^\query$ is an equivalence relation and a congruence, i.e., if
$\kappa$ is a position in a query $Q$ in $\QLang^\query$ such that $Q|_\kappa$ is
a query, and $R\equiv Q|_\kappa$, then $Q\equiv Q[R]_\kappa$.
\end{lemma}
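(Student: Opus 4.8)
The plan is to read both assertions directly off the bisimulation-style definition of $\equiv$, using Lemma~\ref{SubLemmaQuery} as the only nontrivial tool. That $\equiv$ is reflexive and symmetric is immediate, since $Q_1\equiv Q_2$ is a biconditional ``$\Init_{\sigma(Q_1)}(F)\rightarrow^!\Ans(F')$ iff $\Init_{\sigma(Q_2)}(F)\rightarrow^!\Ans(F')$'' under a side condition (``$\sigma$ closes both'') that is itself symmetric. For transitivity the only delicate point is the domain of the closing substitution: given $Q_1\equiv Q_2$, $Q_2\equiv Q_3$ and a ground $\sigma$ closing $\sigma(Q_1)$ and $\sigma(Q_3)$, I would extend $\sigma$ to a ground $\sigma'$ that additionally substitutes the free variables of $Q_2$. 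Those extra variables occur in neither $Q_1$ nor $Q_3$, so $\sigma'(Q_1)=\sigma(Q_1)$ and $\sigma'(Q_3)=\sigma(Q_3)$ while all three instances are now closed; chaining the two biconditionals at $\sigma'$ then gives the biconditional for $Q_1\equiv Q_3$ at $\sigma$.

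For congruence I would reduce an arbitrary position $\kappa$ to the depth-one case by induction on $|\kappa|$. When $\kappa=\varepsilon$ the claim $Q\equiv Q[R]_\varepsilon$ is just $Q\equiv R$, which is symmetry applied to the hypothesis $R\equiv Q$. In the inductive step it suffices to prove congruence one constructor at a time, namely that $A\equiv A'$ implies $A\Con Q_2\equiv A'\Con Q_2$ (and the right-hand variant), $\phi\Rightarrow A\equiv\phi\Rightarrow A'$, and $\From P\mathbin{.}A\equiv\From P\mathbin{.}A'$; peeling off the outermost constructor and invoking the induction hypothesis then handles general $\kappa$. In every constructor case the strategy is identical: fix a $\sigma$ closing both sides, take a normalizing computation of one side, and replay it rewrite-for-rewrite on the other side, changing only the steps that occur \emph{inside} an occurrence of the replaced subquery. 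Lemma~\ref{SubLemmaQuery} is exactly what licenses the substitution there: the net effect of a subquery frame $[\vec{a}]^{\vec{v}|q}_{A}$ on the partial answer is to adjoin some $F''$ with $\Init_{\tau(A)}(F)\rightarrow^!\Ans(F'')$ for $\tau=\{\vec a/\vec v\}$, and by $A\equiv A'$ at the (closing) substitution $\tau$ the set of admissible $F''$ coincides with that for $A'$; just as importantly, the lemma records that such a subcomputation leaves the database $F$ and the whole stack below the frame untouched.

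The $\Con$ and $\phi\Rightarrow\_$ cases are then routine. After $\LUnUnfl$ (respectively after $\LCondUnfl$ and the evaluation of $\phi$) the frame for the replaced subquery sits atop an unchanged stack, so Lemma~\ref{SubLemmaQuery} makes the overall answer of $\sigma(A\Con Q_2)$ a multiset union of an answer of $\sigma(A)$ with an answer of $\sigma(Q_2)$, and the answer of $\sigma(\phi\Rightarrow A)$ either an answer of $\sigma(A)$ or the empty multiset; substituting $A'$ for $A$ alters none of the other ingredients. The one subtlety in the conditional case is that evaluation of $\phi$ is itself non-confluent, so I must replay the particular branch in which $\phi$ produced $\ttrue$ or $\ffalse$ — but that branch is determined by $\phi$, $F$ and $\sigma$ alone and never by the query argument, so it transfers verbatim.

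I expect the $\From$ case to be the main obstacle, because the subquery is evaluated repeatedly under different matching substitutions. The key observation is that the applicability and outcome of $\FromInit$, $\FromUnfl$ and $\FromEnd$ — the matchings of $P_!\circ P_?$ against the iterator state and the resulting evolution of that state, governed by $\mu_{P,\vec v}$ — depend only on $P$ and the iterator frame, and that by Lemma~\ref{SubLemmaQuery} a pushed subquery frame never modifies the iterator frame beneath it. Hence $\From P\mathbin{.}A$ and $\From P\mathbin{.}A'$ can traverse exactly the same sequence of matchings and terminate via $\FromEnd$ at the same point; I would replay a given normalizing computation choice-for-choice, and at the $k$-th iteration, where the pushed frame contributes an answer $F''_k$ of the standalone query $\tau_k(A)$, use $A\equiv A'$ at the closing substitution $\tau_k$ to realize the same $F''_k$ as an answer of $\tau_k(A')$. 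Taking the multiset union over iterations shows both sides reach the identical $\Ans(F')$, and the symmetry of the construction supplies the converse.
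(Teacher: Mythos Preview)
Your proposal is correct and follows essentially the same approach as the paper, which simply states that the lemma is an immediate consequence of Lemma~\ref{SubLemmaQuery}. You have supplied the details the paper elides: the transitivity argument via extending the closing substitution, and the constructor-by-constructor induction showing that the context surrounding the replaced subquery is insensitive to the replacement, with Lemma~\ref{SubLemmaQuery} providing exactly the interface needed at each occurrence of the subquery frame.
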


We leave  proof of the next observation to the reader:

\begin{lemma}
\label{RenameQuery}
For any closed query $Q$ in $\QLang^\query$, and any renaming $\sigma$,
$Q\equiv\sigma(Q)$.
\end{lemma}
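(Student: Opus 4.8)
The plan is to exhibit, for the given renaming $\sigma$, an isomorphism between the rewriting systems $\cR_{\Sigma,\cD}^\query(Q)$ and $\cR_{\Sigma,\cD}^\query(\sigma(Q))$ that sends $\Init_Q(F)$ to $\Init_{\sigma(Q)}(F)$ and fixes every normal form $\Ans(F')$; the equivalence then follows because the two initial terms normalise to exactly the same answers. First I would eliminate the ground substitution quantified in the definition of $\equiv$ (I write $\gamma$ for it, to keep it apart from the renaming $\sigma$): a substitution acts only on the free variables of a term, and both $Q$ and $\sigma(Q)$ are closed (a renaming preserves closedness), so $\gamma(Q)=Q$ and $\gamma(\sigma(Q))=\sigma(Q)$. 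The statement therefore collapses to
\[
\Init_Q(F)\rightarrow^!\Ans(F')\ \text{iff}\ \Init_{\sigma(Q)}(F)\rightarrow^!\Ans(F')\qquad\text{for all ground }F,F'.
\]

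Next I would define a map $\Phi$ on states $\{F,F',S\}^q$ that leaves $F$, $F'$, and every ground argument value untouched while renaming the bookkeeping data of each frame through $\sigma$: $[\vec a]^{\vec v|q}_R\mapsto[\vec a]^{\sigma(\vec v)|q}_{\sigma(R)}$, and analogously on conditional frames $[\vec a\mid S']^{\vec v|q}_R$ (renaming the embedded condition stack $S'$ by the same recipe) and iterator frames $[F''\mid\vec a]^{\vec v|q}_{\From P\mathbin{.}R}$ (renaming $P$ and $R$ in the index). Since $\sigma$ is injective it restricts to a bijection on the finite variable set occurring in $Q$, so $\Phi$ is a bijection onto the states of $\cR_{\Sigma,\cD}^\query(\sigma(Q))$, with $\Phi(\Init_Q(F))=\Init_{\sigma(Q)}(F)$ and $\Phi$ fixing each $\Ans(F')$. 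It remains to check $s\to s'$ iff $\Phi(s)\to\Phi(s')$, rule by rule.

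The structural rules ($\LEmpty$, $\LUnUnfl$, $\LCondUnfl$, $\FromInit$) commute with $\Phi$ on sight. For $\LFact$ the only point is that the fact appended to $F'$ agrees: a literal fact $f$ has all its variables among $\vec v$, so $\{\vec a/\vec v\}(f)$ and $\{\vec a/\sigma(\vec v)\}(\sigma(f))$ are the same ground fact (renaming each $v_i$ to $\sigma(v_i)$ and then grounding it to $a_i$ is the same as grounding $v_i$ to $a_i$ directly). The quantifier rules $\FromUnfl$ and $\FromEnd$ are settled by the same observation: matching $\{\vec a/\sigma(\vec v)\}(\sigma(P))$ against the unchanged ground iterator state is the same matching problem as matching $\{\vec a/\vec v\}(P)$ — the two patterns differ only by the renaming $\vec w\mapsto\sigma(\vec w)$ of the fresh pattern variables — so the same sub-multisets are consumed and the same ground bindings $\vec b$ arise; hence the new body frame $[\vec a,\vec b]^{\vec v,\vec w|q}_R$ is sent by $\Phi$ to $[\vec a,\vec b]^{\sigma(\vec v),\sigma(\vec w)|q}_{\sigma(R)}$, exactly the frame created on the $\sigma(Q)$-side, and the side condition $(\mu_{P,\vec v}(\cdot)=\yes)=\ffalse$ of $\FromEnd$ transfers verbatim.

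The delicate point — and the main obstacle — is the conditional, whose evaluation embeds a non-deterministic run of the condition machinery through the lifted rules $\lambda^q$ and then $\LCondFldt$, $\LCondFldf$, and this run lives in a different rewriting system on each side. Rather than commute $\Phi$ with each $\lambda^q$ step, I would treat the embedded run as a single block and invoke Lemma~\ref{SubLemma}: on the $Q$-side the block from $[\vec a]^{\vec v|c}_\phi$ reaches $\Res(B)$ iff $\Init_\psi(F)\rightarrow^!\Sat(B)$, where $\psi:=\{\vec a/\vec v\}(\phi)$ is closed, while on the $\sigma(Q)$-side the block from $[\vec a]^{\sigma(\vec v)|c}_{\sigma(\phi)}$ reaches $\Res(B)$ iff $\Init_{\{\vec a/\sigma(\vec v)\}(\sigma(\phi))}(F)\rightarrow^!\Sat(B)$. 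A short computation gives the identity $\{\vec a/\sigma(\vec v)\}(\sigma(\phi))=\sigma(\psi)$ (renaming the free occurrences before grounding equals grounding and then renaming what survives, namely the bound variables of $\phi$). By the already-established equivalence for conditions, $\psi\equiv\sigma(\psi)$, so the two blocks have the same set of attainable Booleans $B$; hence $\LCondFldt$ and $\LCondFldf$ select matching branches, and when $B=\ttrue$ the resulting body frames again correspond under $\Phi$. Consequently $\Phi$ carries each maximal reduction of $\Init_Q(F)$ to one of $\Init_{\sigma(Q)}(F)$ ending in the same $\Ans(F')$, and conversely, which proves the displayed equivalence and hence $Q\equiv\sigma(Q)$. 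The crux is exactly this reconciliation of the internally non-deterministic, separately-indexed condition evaluation, which the renaming lemma for conditions together with the identity $\{\vec a/\sigma(\vec v)\}(\sigma(\phi))=\sigma(\{\vec a/\vec v\}(\phi))$ resolves.
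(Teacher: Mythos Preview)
The paper gives no proof of this lemma (it writes ``We leave proof of the next observation to the reader''), so there is no reference argument to compare against. Your plan is the natural one and is sound: build a bijection $\Phi$ on evaluation states that relabels each frame's variable list and indexing sub-query through $\sigma$ while fixing all ground data, verify rule by rule that $\Phi$ intertwines the two rewrite relations, and handle the embedded condition runs as blocks by invoking the condition-level renaming lemma together with the substitution identity $\{\vec a/\sigma(\vec v)\}(\sigma(\phi))=\sigma(\{\vec a/\vec v\}(\phi))$, which holds precisely because $\sigma$ is injective.

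One caveat on your opening reduction. You justify $\gamma(Q)=Q$ by asserting that ``a substitution acts only on the free variables of a term,'' but in this paper substitutions are ordinary first-order substitutions acting on every occurrence, including variables inside patterns; there is no capture-avoidance convention. Read literally, the definition of $\equiv$ then quantifies over $\gamma$'s that ground pattern variables, and $\gamma(Q)$ and $\gamma(\sigma(Q))$ need not be related by any renaming (take $Q=\From\Ret{f(x)}\mathbin{.}g(x)$, $\sigma=\{y/x\}$, $\gamma=\{5/x,3/y\}$: then $\gamma(Q)=\From\Ret{f(5)}\mathbin{.}g(5)$ while $\gamma(\sigma(Q))=\From\Ret{f(3)}\mathbin{.}g(3)$). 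The intended reading --- clear from the lemma being stated for \emph{closed} $Q$ and from its use in Theorem~\ref{RelAlgTheorem} --- is that $\gamma$ ranges only over groundings of the free variables; under that reading your reduction to $\gamma=\mathrm{id}$ is correct. It would be cleaner to state this reading explicitly rather than appeal to a convention the paper does not adopt.
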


The following clarification of semantics of queries in $\QLang^\query$ is proven
similarly to Lemma~\ref{LemmaClearSem}:
\begin{lemma}
\label{LemmaClearSemQuery}
For all ground multisets of facts $F$, $F'$ and $F''$, all closed queries $Q$, $Q_1$,
$Q_2$, and $\From  P\mathbin{.}R$
in $\QLang^\query$, and all closed conditions $\phi$ in $\QLang^\cond$,
the following statements hold:
\begin{enumerate}
\item
If $\Init_Q(F)\rightarrow^!\Gamma$ then $\Gamma=\Ans(G)$ for some ground multiset of facts $G$.
\item Let $f$ be a fact.
If $\Init_{f}(F)\rightarrow^!\Gamma$ then $\Gamma=\Ans(f)$. If
$\Init_{\Empty}(F)\rightarrow^!\Gamma$ then $\Gamma=\Ans(\Empty)$.
\eject
\item Let $F'\neq\Empty$. In this case
$\Init_{\phi\Rightarrow Q}(F)\rightarrow^!\Ans(F')$ iff
$\Init_\phi(F)\rightarrow^!\Sat(\ttrue)$ and $\Init_Q(F)\rightarrow^!\Ans(F')$.
\item $\Init_{\phi\Rightarrow Q}(F)\rightarrow^!\Ans(\Empty)$ iff
either (non-exclusively) $\Init_\phi(F)\rightarrow^!\Sat(\ffalse)$ or
$\Init_Q(F)\rightarrow^!\Ans(\Empty)$.
\item $\Init_{Q_1\Con Q_2}(F)\rightarrow^!\Ans(F')$ iff
$\Init_{Q_1}(F)\rightarrow^!\Ans(F_1)$ and
$\Init_{Q_2}(F)\rightarrow^!\Ans(F_2)$ for some multisets $F_1$ and $F_2$ such that
$F'=F_1\circ F_2$.
\item $\Init_{\From P\mathbin{.}R}(F)\rightarrow^!\Ans(F')$ iff
there exist lists of ground
multisets of facts $F_0,\ldots,F_n$, $G_0,\ldots, G_{n-1}$,
and $H_0, \ldots, H_{n-1}$,
 and a sequence
of substitutions $\sigma_0,\sigma_1,\ldots,\sigma_{n-1}$ such that
\begin{enumerate}
\item $F_0=F$,
$F_{i+1}=G_i\circ \sigma_i(P_!)$, and $F_i=G_i\circ\sigma_i(P_!\circ P_{?})$,
for all $i\in\{0,\ldots, n-1\}$,
\item
$\Init_{\sigma_i(R)}(F)\rightarrow^!\Ans(H_i)$, for all $i\in\{0,\ldots, n-1\}$,
\item there exists no substitution $\sigma_n$ and multiset of facts $G_n$ such that
$F_n=G_n\circ\sigma_n(P_!\circ P_{?})$,
\item $F'=H_0\circ H_1\circ\cdots H_{n-1}$.
\end{enumerate}
\end{enumerate}
\end{lemma}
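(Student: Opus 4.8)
The plan is to mirror the proof of Lemma~\ref{LemmaClearSem}, carrying along the two query-specific features absent in the condition case: a running answer multiset $F'$ is threaded through every state while the database $F$ is left fixed, and the quantifier $\From P\mathbin{.}R$ never short-circuits but instead accumulates one contribution per match. The workhorse throughout is Lemma~\ref{SubLemmaQuery}, which lets me replace the reduction of an $(R,\sigma)$-frame sitting on top of a stack by a standalone evaluation $\Init_{\sigma(R)}(F)\rightarrow^!\Ans(F'')$; I would also use the termination of $\cR_{\Sigma,\cD}^\query$ proved just above.

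For point~1 I would argue by well-founded induction on the termination order (the query analog of the order built in Theorem~\ref{CondTerminating}), showing that the only irreducible states are the $\Ans$ terms. A state with empty stack reduces by $\LAns$. Otherwise I do a case split on the top frame: a plain frame $[\vec a]^{\vec v|q}_R$ is reducible by exactly one of $\LFact$, $\LEmpty$, $\LUnUnfl$, $\LCondUnfl$, $\FromInit$ according to the head of $R$; a conditional frame $[\vec a|S']^{\vec v|q}_R$ is reducible by some embedded rule $\lambda^q$ of \eqref{CondForAllEq} unless $S'=\Res(B)$, in which case $\LCondFldt$ or $\LCondFldf$ applies; and an iterator frame $[F''|\vec a]^{\vec v|q}_{\From P.R}$ is reducible by $\FromUnfl$ when $\mu_{P,\vec v}(F'',\vec v)=\yes$ and by $\FromEnd$ otherwise. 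Hence every non-$\Ans$ state is reducible, so by termination every normal form is some $\Ans(G)$.

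Points~2--5 are then direct. Point~2 is a single deterministic run ($\LFact$ or $\LEmpty$ followed by $\LAns$). For points~3 and~4 I unfold with $\LCondUnfl$ and reduce the embedded condition stack exactly as $\cR_{\Sigma,\cD}^\cond$ does: by the schema \eqref{CondForAllEq} together with Lemma~\ref{SubLemma}, the inner stack reaches $\Res(\ttrue)$ (resp.\ $\Res(\ffalse)$) iff $\Init_\phi(F)\rightarrow^!\Sat(\ttrue)$ (resp.\ $\Sat(\ffalse)$), the only unembedded rule being $\LAns$'s condition-level counterpart; then $\LCondFldt$ hands control to $Q$ while $\LCondFldf$ discards it, after which Lemma~\ref{SubLemmaQuery} identifies the remaining run with a standalone evaluation of $Q$. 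The case split $F'\neq\Empty$ versus $F'=\Empty$ is exactly what records that an empty answer can arise either because $\phi$ failed ($\Sat(\ffalse)$) or because $Q$ itself returns $\Empty$, which is why point~4 is a non-exclusive disjunction. Point~5 unfolds with $\LUnUnfl$ and applies Lemma~\ref{SubLemmaQuery} twice, first to the $Q_1$-frame and then to the $Q_2$-frame; since $F$ is fixed the two evaluations are independent and the accumulated answer is $F_1\circ F_2$.

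Point~6 is the crux and the step I expect to fight hardest with. It is the query analog of Lemma~\ref{LemmaClearSem} p.~7, but with answer accumulation replacing the falsity bookkeeping and with no early exit, so the iteration always runs to exhaustion. After $\FromInit$ we stand at $\{F,\Empty,[F|]^{|q}_{\From P.R}\}^q$, and I would prove by induction on $n$ a correspondence between maximal runs that alternate a $\FromUnfl$ step with the complete reduction of the $(R,\sigma_i)$-frame it pushes, and the data $F_0,\ldots,F_n$, $G_0,\ldots,G_{n-1}$, $\sigma_0,\ldots,\sigma_{n-1}$ of the statement. Each $\FromUnfl$ matches $\sigma_i(P_!\circ P_?)$ inside the iterator state $F_i$, leaves residue $G_i$ with $F_i=G_i\circ\sigma_i(P_!\circ P_?)$, removes $\sigma_i(P_?)$ so that $F_{i+1}=G_i\circ\sigma_i(P_!)$, and by Lemma~\ref{SubLemmaQuery} contributes $H_i$ with $\Init_{\sigma_i(R)}(F)\rightarrow^!\Ans(H_i)$; the side condition of $\FromEnd$, namely $(\mu_{P,\vec v}(F_n,\vec v)=\yes)=\ffalse$, is precisely condition~(c), and the accumulated answer is $\Empty\circ H_0\circ\cdots\circ H_{n-1}$, giving~(d). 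The \emph{if} direction replays these steps to build a normalizing run; the \emph{only if} direction extracts them from an arbitrary normalizing run, using point~1 to know the run terminates in $\Ans$ and Lemma~\ref{SubLemmaQuery} to segment the run at the states where the stack has returned to the single iterator frame. The delicate points are checking that every such returned-to-iterator state is actually visited, so the segmentation is exhaustive, and that, because $F$ never changes, the $H_i$ are independent and their multiset union is order-insensitive, matching~(d).
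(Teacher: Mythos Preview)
Your proposal is correct and follows exactly the approach the paper intends: the paper's own proof is the single sentence ``proven similarly to Lemma~\ref{LemmaClearSem}'', and what you have written is precisely the natural unpacking of that analogy, using Lemma~\ref{SubLemmaQuery} in place of Lemma~\ref{SubLemma} and tracking the accumulated answer $F'$ through the rule schemas. Your treatment is considerably more detailed than the paper's, but the underlying strategy---case analysis on the top frame for point~1, direct rule unfolding for points~2--5, and segmenting the iterator run via Lemma~\ref{SubLemmaQuery} for point~6---is the same.
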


\begin{lemma}
\label{EquivLemmaQueries}
The following logical equivalences hold between conditions in $\QLang^\query$:
\begin{gather*}
\Empty\Con Q\equiv Q,\quad
Q_1\Con Q_2\equiv Q_2\Con Q_1,\quad
Q_1\Con(Q_2\Con Q_3)\equiv (Q_1\Con Q_2)\Con Q_3,\\
\False\Rightarrow R\equiv \Empty,\quad
\neg\False\Rightarrow R\equiv R,\quad
\exists P\mathbin{.}\Empty\equiv \Empty
\end{gather*}
\end{lemma}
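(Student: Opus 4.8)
The plan is to reduce every equivalence to the semantic characterization of Lemma~\ref{LemmaClearSemQuery}, together with the corresponding facts about conditions from Lemma~\ref{LemmaClearSem}. Since logical equivalence of queries is defined purely in terms of which answers $\Ans(F')$ are reachable from $\Init_{\sigma(Q_i)}(F)$, and since points~1--6 of Lemma~\ref{LemmaClearSemQuery} describe exactly this reachability relation compositionally, each equivalence should follow by matching the two reachability conditions side by side. Throughout I would fix a ground $F$ and a ground $\sigma$ making both sides closed; because the identities are schematic in $Q$, $Q_i$, $R$, $P$ and renaming is harmless (Lemma~\ref{RenameQuery}), it suffices to treat the closed instances directly.

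For the three $\Con$ identities I would invoke point~5: $\Init_{Q_1\Con Q_2}(F)\rightarrow^!\Ans(F')$ iff $F'=F_1\circ F_2$ with $\Init_{Q_i}(F)\rightarrow^!\Ans(F_i)$. Commutativity and associativity of $\equiv$ then reduce to commutativity and associativity of the multiset union $\_\circ\_$, which is an equational attribute of $\FactSet$; the unit law $\Empty\Con Q\equiv Q$ follows by additionally using point~2, which forces $\Init_\Empty(F)\rightarrow^!\Ans(\Empty)$, so that $F_1=\Empty$ and $F'=F_2$. For the two conditional identities I would split on whether the target answer is empty. For $F'\neq\Empty$, point~3 turns $\Init_{\phi\Rightarrow R}(F)\rightarrow^!\Ans(F')$ into the conjunction of $\Init_\phi(F)\rightarrow^!\Sat(\ttrue)$ and $\Init_R(F)\rightarrow^!\Ans(F')$; for $F'=\Empty$, point~4 turns it into the disjunction of $\Init_\phi(F)\rightarrow^!\Sat(\ffalse)$ and $\Init_R(F)\rightarrow^!\Ans(\Empty)$. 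Plugging $\phi=\False$ and using Lemma~\ref{LemmaClearSem}, point~2 (so $\Init_\False(F)$ always reaches $\Sat(\ffalse)$ and never $\Sat(\ttrue)$) eliminates the nonempty case and makes the empty case hold unconditionally, giving $\False\Rightarrow R\equiv\Empty$. Plugging $\phi=\neg\False$ and using Lemma~\ref{LemmaClearSem}, points~2--3 (so $\Init_{\neg\False}(F)$ always reaches $\Sat(\ttrue)$ and never $\Sat(\ffalse)$) makes the guard vacuously satisfied in both cases, so the reachable answers of $\neg\False\Rightarrow R$ on $F$ coincide with those of $R$, giving $\neg\False\Rightarrow R\equiv R$.

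The one genuinely delicate case is the quantifier identity $\From P\mathbin{.}\Empty\equiv\Empty$ (where I read the displayed $\exists$ as the query quantifier $\From$). Here I would apply point~6 with $R=\Empty$: any reachable $\Ans(F')$ comes with witnessing sequences $F_0,\ldots,F_n$, $G_0,\ldots,G_{n-1}$, $H_0,\ldots,H_{n-1}$ and substitutions $\sigma_i$ satisfying~(a)--(d). Since $R=\Empty$, condition~(b) forces $\Init_{\sigma_i(\Empty)}(F)\rightarrow^!\Ans(H_i)$, whence $H_i=\Empty$ by point~2, and then~(d) gives $F'=H_0\circ\cdots\circ H_{n-1}=\Empty$. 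Thus \emph{every} reachable answer is $\Ans(\Empty)$. To complete the iff I must also know that \emph{some} answer is reachable: this is point~1 together with termination, which guarantees that $\Init_{\From P\mathbin{.}\Empty}(F)$ normalizes to some $\Ans(G)$, necessarily with $G=\Empty$. Matching this against $\Init_\Empty(F)\rightarrow^!\Ans(\Empty)$ from point~2 yields the equivalence.

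I expect the main obstacle to be precisely this last case, not because of any hard computation but because point~6 packages the iteration as an existentially quantified family of sequences rather than as a direct statement about the answer: one must argue that the iteration may run for any admissible number of steps $n$ (determined by how $P_!\circ P_?$ can be matched against the shrinking iterator state) yet still collapses every branch to the empty multiset, and one must supply the ``some answer exists'' direction from termination (point~1) rather than from point~6 itself. The $\Con$ and conditional cases, by contrast, are immediate once the relevant point of Lemma~\ref{LemmaClearSemQuery} is cited and the equational attributes of $\_\circ\_$ are invoked.
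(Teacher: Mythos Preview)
Your proposal is correct and is exactly the approach the paper implicitly intends: Lemma~\ref{EquivLemmaQueries} is stated without proof, but the paper's proof of the analogous lemma for conditions explicitly says ``the above equivalences can be proven using points~1--7 in Lemma~\ref{LemmaClearSem}'', and your argument is the direct query-side analogue, reducing each identity to points~1--6 of Lemma~\ref{LemmaClearSemQuery} together with the equational attributes of $\_\circ\_$. Your reading of the displayed $\exists$ as the query quantifier $\From$ is also right (this is a typo in the statement), and your handling of the one nontrivial case $\From P\mathbin{.}\Empty\equiv\Empty$---forcing each $H_i=\Empty$ via point~2 and then invoking termination plus point~1 to ensure some normal form exists---is precisely the care needed there.
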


The next results show that non-confluence of queries
makes some natural equivalences invalid:
\begin{lemma}
Let $\phi$ be a condition in $\QLang^\cond$ and let $Q_1$, $Q_2$ be queries in $\QLang^\query$. For all ground multisets of facts $F$, $F'$ and all substitutions $\sigma$
such that $\sigma(\phi)$, $\sigma(Q_1)$ and $\sigma(Q_2)$ are closed, we have
\begin{equation}
\label{ImplOneDirCond}
\Init_{\sigma((\phi\Rightarrow Q_1)\Con(\phi\Rightarrow Q_2))}(F)\rightarrow^!\Ans(F')\quad
\text{if}\quad
\Init_{\sigma(\phi\Rightarrow (Q_1\Con Q_2))}(F)\rightarrow^!\Ans(F'),
\end{equation}
however, the inverse implication does not hold in general. If $\phi$ is deterministic
(Definition~\ref{NonDeterministicCondDef}), then
\begin{equation}
\label{EquivCond}
(\phi\Rightarrow Q_1)\Con(\phi\Rightarrow Q_2)\equiv \phi\Rightarrow (Q_1\Con Q_2).
\end{equation}
\end{lemma}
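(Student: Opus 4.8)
The plan is to reduce to the closed case --- absorbing $\sigma$ into $\phi$, $Q_1$, $Q_2$, which is legitimate since $\sigma$ distributes over $\Con$ and $\_\Rightarrow\_$ and the hypotheses guarantee the results are closed --- and then argue entirely through the semantic characterization of Lemma~\ref{LemmaClearSemQuery}, points~3--5. The workhorses will be two elementary observations about conditionals, each immediate from points~3 and~4: first, if $\Init_\phi(F)\rightarrow^!\Sat(\ttrue)$ and $\Init_Q(F)\rightarrow^!\Ans(G)$, then $\Init_{\phi\Rightarrow Q}(F)\rightarrow^!\Ans(G)$ (point~3 when $G\neq\Empty$, the $Q$-disjunct of point~4 when $G=\Empty$); second, if $\Init_\phi(F)\rightarrow^!\Sat(\ffalse)$, then $\Init_{\phi\Rightarrow Q}(F)\rightarrow^!\Ans(\Empty)$ for every $Q$ (the $\phi$-disjunct of point~4).

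For the forward implication \eqref{ImplOneDirCond} I would assume $\Init_{\phi\Rightarrow(Q_1\Con Q_2)}(F)\rightarrow^!\Ans(F')$ and split on the witness supplied by points~3--4. If the success is witnessed by $\Init_\phi(F)\rightarrow^!\Sat(\ttrue)$, then $\Init_{Q_1\Con Q_2}(F)\rightarrow^!\Ans(F')$, so by point~5 we obtain $F'=F_1\circ F_2$ with $\Init_{Q_i}(F)\rightarrow^!\Ans(F_i)$; the first observation then gives $\Init_{\phi\Rightarrow Q_i}(F)\rightarrow^!\Ans(F_i)$, and point~5 reassembles $F'$. If instead (necessarily with $F'=\Empty$) the success is witnessed by $\Init_\phi(F)\rightarrow^!\Sat(\ffalse)$, the second observation makes both conjuncts return $\Empty$, whose union is $\Empty=F'$. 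This settles \eqref{ImplOneDirCond}.

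For the failure of the converse I would reuse the non-deterministic condition $\phi_3$ and database $F$ of Example~\ref{DeterministicNonconfluentEx}, where $\Init_{\phi_3}(F)$ normalizes both to $\Sat(\ttrue)$ and to $\Sat(\ffalse)$, together with distinct facts $c_1,c_2:\rightarrow\Fact$. Evaluating $(\phi_3\Rightarrow c_1)\Con(\phi_3\Rightarrow c_2)$ on $F$, the two independent copies of $\phi_3$ may disagree --- true for the first conjunct, false for the second --- so $\Ans(c_1)$ is reachable by point~5; but for $\phi_3\Rightarrow(c_1\Con c_2)$ the single evaluation of $\phi_3$ forces the answer to be either $c_1\circ c_2$ or $\Empty$, never $c_1$, so $F'=c_1$ witnesses the failure. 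For the equivalence \eqref{EquivCond} under determinism I would invoke Theorem~\ref{ConfluentPartCond} to fix the unique $B$ with $\Init_\phi(F)\rightarrow^!\Sat(B)$, the very ingredient the counterexample lacked; since \eqref{ImplOneDirCond} already gives one direction, only the converse remains. Assuming $\Init_{(\phi\Rightarrow Q_1)\Con(\phi\Rightarrow Q_2)}(F)\rightarrow^!\Ans(F')$, point~5 yields $G_1,G_2$ with $\Init_{\phi\Rightarrow Q_i}(F)\rightarrow^!\Ans(G_i)$ and $F'=G_1\circ G_2$. If $B=\ttrue$, then since $\Init_\phi(F)\rightarrow^!\Sat(\ffalse)$ is impossible, points~3--4 force $\Init_{Q_i}(F)\rightarrow^!\Ans(G_i)$, so point~5 gives $\Init_{Q_1\Con Q_2}(F)\rightarrow^!\Ans(F')$ and the first observation concludes. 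If $B=\ffalse$, then point~3 forbids any non-empty $G_i$, so $G_1=G_2=F'=\Empty$, and the second observation gives $\Init_{\phi\Rightarrow(Q_1\Con Q_2)}(F)\rightarrow^!\Ans(\Empty)$.

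The main obstacle I anticipate is not conceptual depth but the bookkeeping forced by the split between points~3 and~4: a non-empty total answer $F'$ may decompose under point~5 as $F_1\circ F_2$ with one summand empty, so the reconstruction of each conjunct must be argued uniformly across the empty/non-empty cases --- which is exactly what the two preliminary observations are designed to absorb. The conceptual crux, worth stating explicitly, is that the forward direction never requires the two copies of $\phi$ to agree (an empty contribution from a conjunct is always admissible), whereas the converse does; determinism, via Theorem~\ref{ConfluentPartCond}, is precisely what guarantees this agreement.
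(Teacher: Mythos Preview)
Your proof is correct and follows essentially the same route as the paper: Lemma~\ref{LemmaClearSemQuery}, points~3--5, for both directions, and Theorem~\ref{ConfluentPartCond} to secure the converse when $\phi$ is deterministic. Two small differences are worth noting. First, the paper splits the forward implication on whether $F'=\Empty$, whereas you split on which clause of points~3--4 supplies the witness; these are equivalent once one observes (as you implicitly do, and as Lemma~\ref{LemmaClearSem}, p.~1, guarantees) that some $\Sat(B)$ is always reachable, so your two cases are exhaustive. Your preliminary observations packaging the empty/non-empty bookkeeping for $\phi\Rightarrow Q$ are a clean way to avoid repeating that case analysis. Second, the paper builds its own counterexample with a two-fact pattern $\exists\Rett{a\circ b(x)}\mathbin{.}\{x=1\}$, while you recycle $\phi_3$ from Example~\ref{DeterministicNonconfluentEx} (commutative constructor); both exploit the same phenomenon---a non-deterministic $\phi$ whose two copies may disagree---and both yield a mixed answer like $c_1$ alone that the single-condition form cannot produce.
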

\begin{proof}
To prove the implication~\eqref{ImplOneDirCond} assume $\Init_{\sigma(\phi\Rightarrow (Q_1\Con Q_2))}(F)\rightarrow^!\Ans(F')$. Either $F'\neq\Empty$ or $F'=\Empty$. In the first case, by
Lemma~\ref{LemmaClearSemQuery}, p.~3 , our assumption is equivalent to
$\Init_\phi(F)\rightarrow^!\Res(\ttrue)$ and
$\Init_{\sigma(Q_1\Con Q_2)}(F)\rightarrow^!\Ans(F')$, the latter of which, in turn, is equivalent, by
Lemma~\ref{LemmaClearSemQuery}, p.~5, to
$\Init_{\sigma(Q_1)}(F)\rightarrow^!\Ans(F_1)$ and
$\Init_{\sigma(Q_2)}(F)\rightarrow^!\Ans(F_2)$  for some multisets $F_1$ and $F_2$ such that
$F'=F_1\circ F_2$. But this is equivalent, by
Lemma~\ref{LemmaClearSemQuery}, p.~3 and~5, to
$\Init_{\sigma((\phi\Rightarrow Q_1)\Con(\phi\Rightarrow Q_2))}(F)\rightarrow^!\Ans(F').$
The case $F'=\Empty$ is dealt with similarly.

To show that, in general, the inverse implication does not hold, consider
$\phi:=\exists \Rett{a\circ b(x)}\mathbin{.}\{x=1\}$, $Q:=\phi\Rightarrow (c\Con d)$,
$Q':=(\phi\Rightarrow c)\Con(\phi\Rightarrow d)$,
where $a, c, d:\rightarrow\Fact$ and $b:\Nat\rightarrow\Fact$.
Let $F:=a\circ b(1)\circ b(2)$.
Since $\Init_\phi(F)\rightarrow^!\Sat(B)$ for $B\in\{\ttrue,\ffalse\}$,
$\Init_Q(F)\rightarrow^!\Ans(F')$ iff $F'\in\{\Empty, c\circ d\}$,
whereas
$\Init_{Q'}(F)\rightarrow^!\Ans(F')$ iff $F'\in\{\Empty, c, d, c\circ d\}$.

\medskip
Assume that
$\phi$ is deterministic.
By Theorem~\ref{ConfluentPartCond},  either
$\Init_{\sigma(\phi)}(F)\rightarrow^!\Sat(\ttrue)$ or
$\Init_{\sigma(\phi)}(F)\rightarrow^!\Sat(\ffalse)$, but not both.
Let $R_1:=\sigma\bigl((\phi\Rightarrow Q_1)\Con(\phi\Rightarrow Q_2)\bigr)$,
$R_2:=\sigma\bigl(\phi\Rightarrow (Q_1\Con Q_2)\bigr)$.
If $\Init_{\sigma(\phi)}(F)\rightarrow^!\Sat(\ffalse)$ then $\Init_{R_i}(F)\rightarrow\Ans(F')$
iff  $F'=\Empty$ for $i\in\{1,2\}$. If
$\Init_{\sigma(\phi)}(F)\rightarrow^!\Sat(\ttrue)$ then, for $i\in\{1,2\}$,
$\Init_{R_i}(F)\rightarrow\Ans(F')$ iff $F'=F_1\circ F_2$ for some
multisets $F_1$ and $F_2$ such that
$\Init_{Q_j}(F)\rightarrow\Ans(F_j)$, $j\in\{1,2\}$.
\end{proof}

\begin{example}
In general, equivalence
$\From P\mathbin{.}(Q_1\Con Q_2)\equiv(\From P\mathbin{.}Q_1)\Con(\From P\mathbin{.}Q_2)$
is not valid. Indeed,
let $a:\rightarrow\Fact$ and $b, c:\Nat\rightarrow\Fact$, and
let $P:=\Rett{a\circ b(x)}$,
$Q:=\From P\mathbin{.}(b(x)\Con c(x))$,
$Q':=(\From P\mathbin{.}b(x))\Con (\From P\mathbin{.}c(x))$.
Suppose that $F:=a\circ b(1)\circ b(2)$. Then
$\Init_{Q}(F)\rightarrow^!\Res(F')$ iff
$F'\in\bigl\{b(i)\circ c(i)\;|\;i\in\{1,2\}\bigr\}$. However, $\Init_{Q'}(F)\rightarrow^!\Res(F')$
iff $F'\in\bigl\{b(i)\circ c(j)\;|\;i,j\in\{1,2\}\bigr\}$.
\end{example}

Here we define a class of queries in $\QLang^\query$ which evaluate to unique answers:
\begin{definition}
A query $Q$ in  $\QLang^\query$ is called {\em deterministic} if
all quantification patterns in $Q$
(including those inside conditions) contain
only single facts with unique matching property.
\end{definition}

\begin{theorem}
\label{ConfluentPartQuery}
Let $Q$ be a deterministic query in $\QLang^\query$. Then
$\cR_{\Sigma,\cD}(Q)$ is confluent, and, in particular, given a ground multiset of facts $F$,
there is a unique multiset of facts $F'$ such that
$\Init_Q(F)\rightarrow^!\Ans(F')$.
\end{theorem}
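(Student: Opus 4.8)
The plan is to follow the pattern of the proof of Theorem~\ref{ConfluentPartCond}. Since the query rewriting system is terminating (proved above) and every normal form reachable from an initial term is an $\Ans(\cdot)$ (Lemma~\ref{LemmaClearSemQuery}, p.~1), confluence of $\cR^\query_{\Sigma,\cD}(Q)$ is equivalent to uniqueness of the returned answer. By the lemma following Definition~\ref{SemiConflMod} it suffices to establish semiconfluence, and I would do this by induction on the complexity of the (sub)query $R$ that indexes the frame on top of the stack, using Lemma~\ref{SubLemmaQuery} to move between an embedded sub-evaluation $\{F,F',S[\vec a]^{\vec v|q}_R\}^q\rightarrow^*\{F,F'\circ F'',S\}^q$ and the standalone evaluation $\Init_{\sigma(R)}(F)\rightarrow^!\Ans(F'')$ of the strictly smaller query $\sigma(R)$. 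Embedded condition computations are disposed of at once: under the determinism hypothesis each condition $\sigma(\phi)$ occurring in $Q$ is deterministic, so by Theorem~\ref{ConfluentPartCond} it is confluent and yields a unique $\Sat(B)$; hence the conditional rules $\LCondFldt$/$\LCondFldf$ fire in only one way.

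Next I would localise the branching. As in the condition case, the deterministic schemas ($\LFact$, $\LEmpty$, $\LUnUnfl$, $\FromInit$, $\LAns$, the conditional (un)foldings, and the lifts of already-confluent condition steps) never split a term into two reducts that fail to reconverge. The only genuine source of divergence is an iterator frame $[F''\,|\,\vec a]^{\vec v|q}_{\From P\mathbin{.}R}$ through rival applications of $\FromUnfl$; note that $\FromUnfl$ and $\FromEnd$ are mutually exclusive, since the guard of $\FromEnd$ asserts that no match remains, so no $\FromUnfl$/$\FromEnd$ split occurs. Under determinism $P=\Ret f$ is a single fact with the unique matching property, so $P_!=P_0=\Empty$, $P_?=f$, and each fact occurrence in $F''$ matching $f$ determines a \emph{unique} extending substitution. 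Consequently two distinct one-step $\FromUnfl$ rewrites consume two distinct fact occurrences $g_1,g_2$ of $F''$, never two rival substitutions of one occurrence.

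It remains to tile this branching. Starting from $t_1\xleftarrow{\FromUnfl} t\xrightarrow{\FromUnfl} t_2$, in $t_1$ I would first evaluate the freshly pushed $(\sigma_{g_1},R)$-frame to completion; by the induction hypothesis together with Lemma~\ref{SubLemmaQuery} (the sub-evaluation is run against the whole $F$ and is independent of the iterator state) it contributes a \emph{unique} multiset $H_{g_1}$, and since only $g_1$ was removed ($P_!=\Empty$) the occurrence $g_2$ is still present, so $\FromUnfl$ may now consume $g_2$, contributing $H_{g_2}$; symmetrically from $t_2$. The two outcomes coincide because $\_\circ\_$ is associative and commutative, so $F'\circ H_{g_1}\circ H_{g_2}=_A F'\circ H_{g_2}\circ H_{g_1}$ is a common reduct. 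Iterating this tiling --- equivalently, reading off Lemma~\ref{LemmaClearSemQuery}, p.~6, and observing that, because matched facts are removed and $\FromEnd$ fires only when no $f$-match is left, every terminating run consumes exactly the full multiset of $f$-matches of $F$, each once --- shows that the iterator's total contribution $\bigcirc_{g}H_g$ is independent of the matching order, and hence that the answer of $\From P\mathbin{.}R$ is unique.

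The main obstacle is precisely this iterator tiling: one must check that committing to $g_1$ before $g_2$ cannot block eventually consuming $g_2$, and that the stack discipline (a pushed $R$-frame must be popped before the iterator is re-exposed) does not obstruct convergence. This is where both determinism conditions are indispensable: single-fact patterns guarantee that consuming one fact leaves every other match available, excluding the failure mode of the two-fact Example with $\phi_2$ where the unique $q$ could be wasted; the unique matching property excludes the rival-substitution failure mode of the commutative-$r$ Example with $\phi_3$; and commutativity of $\circ$ absorbs the residual order dependence. Uniqueness of the answer for all closed deterministic (sub)queries, combined with termination, then yields confluence of $\cR^\query_{\Sigma,\cD}(Q)$, and the stated uniqueness of $F'$ with $\Init_Q(F)\rightarrow^!\Ans(F')$ is a special case.
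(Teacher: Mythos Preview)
Your proposal is correct and follows essentially the same approach as the paper: reduce to semiconfluence, argue by induction on the structure of the (sub)query indexing the top frame, dispose of conditionals via Theorem~\ref{ConfluentPartCond}, and identify the iterator frame as the only genuine branching point, where determinism ($P=\Ret{f}$ with the unique matching property) forces a unique total contribution. The paper's presentation differs only cosmetically: rather than first establishing a local diamond for two rival $\FromUnfl$ steps and then ``iterating the tiling'', it directly asserts the existence of a unique multiset $K$ such that every rewrite sequence from $\{F,F',SA\}^q$ reaches (or extends to) $\{F,F'\circ K,S\}^q$, and for the iterator case writes $K=K_1\circ\cdots\circ K_n$ where the $K_i$ are the (inductively unique) answers of $\sigma_i(R)$ over the finitely many matching facts $f_1,\ldots,f_n$ in $F''$; your invocation of Lemma~\ref{LemmaClearSemQuery}, p.~6, amounts to the same computation.
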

\begin{proof}
As semiconfluence implies confluence, to show confluence at $t:\StateQ$ it
suffices to prove that if
$t'\leftarrow t\rightarrow^{+} t''$ for some $t'$ and $t''$,
then there exists $s$ such that
$t'\rightarrow^* s\;{}^*\!\!\leftarrow t''$. Semiconfluence is immediate
at irreducible terms $\Ans(F')$, as well as terms $\{F,F',\Empty\}^q$
which can only rewrite to $\Ans(F')$.
Let $t:=\{F, F', SA\}^q$
where $S:\StackQ$, and $A:\NodeQ$.
If  there exists a unique multiset of facts $K$ such that
every rewrite sequence starting with $t$
either (1) contains $h:=\{F, F'\circ K, S\}^q$ or (2) it can be extended to reach $h$, then semiconfluence
holds at $t$. Indeed, in this case, either $s=t''$ witnesses the semiconfluence (if (1) holds for $t\rightarrow^{+}t''$)
or $s=h$ does (if (2) holds for $t\rightarrow^{+}t''$). It remains to prove the existence of the unique multiset $K$.
We argue by induction on the structure of formulas indexing the frame $A$ on the top of the stack.
Most cases are dealt by trivial application of Lemmas~\ref{LemmaClearSemQuery} and~\ref{SubLemmaQuery}.
For frames related to conditionals $\_\Rightarrow\_$  we have to also use Lemma~\ref{ConfluentPartCond}.
The only non-trivial part of the proof concerns frames $A$ of the
form $[F''\;|\;\vec{a}]^{\vec{v}|q}_{\From P\mathbin{.}R}$.

Under the theorem's
assumption, $P=\Rett{f}$, where $f$ is a fact with
unique matching property (Definition~\ref{UniqMatchPropDef}).
Let $\vec{w}$ be a sequence of all variables in $\Var(f)\setminus\{\vec{v}\}$.
If there is no fact $f'$ in $F'$ such that $f'$ matches
$\{\vec{a}/\vec{v}\}(f)$
 then necessarily $K=\Empty$. Otherwise
$F'=G\circ f_1\circ \cdots f_n$ for some $n>0$, where
(1) for all $i\in\{1,\ldots, n\}$ there exists a unique substitution
$\sigma_i=\{\vec{a}/\vec{v},\vec{b^i}/\vec{w}\}$ such that $f_i=\sigma_i(f)$,
(2) there is no fact $f'$ in $G$ such that $f'$ matches
$\{\vec{a}/\vec{v}\}(f)$.
In this case, necessarily, by Lemma~\ref{LemmaClearSemQuery}, p.~6,
$K=K_1\circ \cdots\circ K_n$, where, for all $i\in\{1,\ldots, n\}$, $K_i$
is the unique multiset of facts (uniqueness and existence follows from
inductive assumption) such that $\Init_{\sigma_i(R)}(F)\rightarrow^!\Ans(K_i)$.
\end{proof}

There is a useful relationship between queries and conditions:
\begin{lemma}
Let $r:s_1\ldots s_n\rightarrow\Fact$,
let $Q$ be a query in $\QLang^\query$,
and let $\vec{x}:=x_1,\ldots,x_n$ be a list of $n$ distinct variables
such that $\{\vec{x}\}\cap\Var(Q)=\emptyset$ and
 $x_i:s_i$ for $i\in\{1,\ldots,n\}$.
 Then,  there exists a
condition $\phi_Q^{\vec{x}|r}$ in $\QLang^\cond$ such that,
for all ground multisets of facts $F$,
and for all
 ground substitutions $\sigma:=\{\vec{t}/\vec{x},\vec{a}/\vec{v}\}$
such that $\{\vec{a}/\vec{v}\}(Q)$ is closed,
$\sigma(\phi_Q^{\vec{x}|r})$ is closed, and, moreover,
\begin{gather}
\Init_{\sigma(\phi_Q^{\vec{x}|r})}(F)\rightarrow^!\Sat(\ttrue)\quad\text{iff}\quad
\exists F'\mathbin{.}\bigl(\Init_Q(F)\rightarrow^!\Ans(F'\circ r(\vec{t}))\bigr),\nonumber\\
\Init_{\sigma(\phi_Q^{\vec{x}|r})}(F)\rightarrow^!\Sat(\ffalse)\quad\text{iff}\quad
\exists F'\mathbin{.}\bigl(r(\vec{t})\notin F'\wedge\Init_Q(F)\rightarrow^!\Ans(F')\bigr),
\end{gather}
i.e., $\sigma(\phi_Q^{\vec{x}|r})$ evaluates to $\ttrue$ (resp. $\ffalse$) on $F$
if and only if  there is some $F'$ returned by $Q$ when evaluated against $F$, such that
$r(\vec{t})\in F'$ (resp. $r(\vec{t})\notin F'$).
Moreover, if $Q$ is deterministic, then so is $\phi_Q^{\vec{x}|r}$
\label{IffCondQueryRel}
\end{lemma}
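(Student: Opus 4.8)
The plan is to define the condition $\phi_Q^{\vec{x}|r}$ by structural recursion on $Q$, mirroring the shape of $Q$ but replacing each query construct by the logical construct that tests whether \emph{some} evaluation of $Q$ yields an answer containing $r(\vec{t})$. Writing $f=r(s_1,\ldots,s_n)$ when the head symbol of a literal fact $f$ is $r$, I would set
\begin{align*}
\phi_f^{\vec{x}|r}&=\{(s_1=x_1)\wedge\cdots\wedge(s_n=x_n)\}\ \ (\text{if}\ f=r(s_1,\ldots,s_n)),\qquad
\phi_f^{\vec{x}|r}=\False\ \ (\text{if the head symbol of}\ f\ \text{is not}\ r),\\
\phi_\Empty^{\vec{x}|r}&=\False,\qquad
\phi_{Q_1\Con Q_2}^{\vec{x}|r}=\phi_{Q_1}^{\vec{x}|r}\vee\phi_{Q_2}^{\vec{x}|r},\qquad
\phi_{\phi\Rightarrow R}^{\vec{x}|r}=\phi\wedge\phi_{R}^{\vec{x}|r},\qquad
\phi_{\From P\mathbin{.}R}^{\vec{x}|r}=\exists P\mathbin{.}\phi_{R}^{\vec{x}|r}.
\end{align*}
Since $\{\vec{x}\}\cap\Var(Q)=\emptyset$ and every pattern $P$ in $Q$ satisfies $\Var(P)\subseteq\Var(Q)$, the variables $\vec{x}$ are never captured by a quantifier and remain free in $\phi_Q^{\vec{x}|r}$, so the outer binding $\{\vec{t}/\vec{x}\}$ behaves as intended. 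Moreover $\From P\mathbin{.}R$ is a query only when $P$ is terminating and preserving, which is precisely the sort admitted by the existential quantifier of $\QLang^\cond$, so each clause yields a well-sorted condition.

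I would then prove the two displayed equivalences simultaneously by induction on $Q$, at each step reducing the evaluation of $Q$ to the evaluations of its immediate subqueries via Lemma~\ref{LemmaClearSemQuery}, applying the induction hypothesis to those subqueries (under the substitution carrying $\{\vec{t}/\vec{x}\}$ through unchanged), and recombining using the clarifications of the Boolean connectives in Lemma~\ref{LemmaClearSem}. The base cases are immediate: a literal $f$ returns the singleton $\{\sigma(f)\}$, which contains $r(\vec{t})$ iff $\sigma(f)=r(\vec{t})$, iff the ground Boolean $\sigma(\phi_f^{\vec{x}|r})$ reduces to $\ttrue$; and $\Empty$ never returns $r(\vec{t})$, matching $\False$. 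For union I use Lemma~\ref{LemmaClearSemQuery}, p.~5: an answer of $Q_1\Con Q_2$ is an independently chosen product $F_1\circ F_2$, so $r(\vec{t})$ occurs in some answer iff it occurs in some answer of $Q_1$ or of $Q_2$ (matching disjunction-truth, Lemma~\ref{LemmaClearSem}, p.~4), and some answer omits $r(\vec{t})$ iff both factors can simultaneously omit it, i.e.\ each can (matching disjunction-falsity, p.~5).

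For the conditional $\phi\Rightarrow R$ I use Lemma~\ref{LemmaClearSemQuery}, p.~3 and~4. Truth is direct: a nonempty answer containing $r(\vec{t})$ arises iff $\phi$ can hold \emph{and} $R$ can return $r(\vec{t})$, matching $\phi\wedge\phi_R^{\vec{x}|r}$ being true. Falsity is the one delicate point among the easy cases: $\phi\wedge\phi_R^{\vec{x}|r}$ is false iff $\phi$ can be false or $\phi_R^{\vec{x}|r}$ can be false, whereas $\phi\Rightarrow R$ admits an answer omitting $r(\vec{t})$ iff $\phi$ can be false (giving answer $\Empty$) or else $\phi$ can be true and $R$ can return an answer omitting $r(\vec{t})$. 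These coincide because, by Lemma~\ref{LemmaClearSem}, p.~1, together with termination, $\phi$ always evaluates to at least one Boolean; hence whenever $\phi$ cannot be false it must be able to be true, which bridges the two formulations.

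The quantifier case $\From P\mathbin{.}R$ is the main obstacle. Here I compare the characterization of answers to $\From P\mathbin{.}R$ in Lemma~\ref{LemmaClearSemQuery}, p.~6, with clauses~6 and~7 of Lemma~\ref{LemmaClearSem} applied to $\exists P\mathbin{.}\phi_R^{\vec{x}|r}$. The crucial observation is that, because $P$ is preserving (so $P_0=P_\FrMark=\Empty$ and only the $P_?$-facts are consumed), every iterator state is a sub-multiset of $F$, and a substitution matches $P$ against some iterator state exactly when it matches $P$ against $F$; consequently the admissible iteration sequences $(\sigma_0,\ldots,\sigma_{n-1})$ governed by clause~(a) and the termination clause~(c) are \emph{identical} for the query iterator and the quantifier iterator, both being driven by the same matching dynamics and the same $\mu_{P,\vec{v}}$. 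For truth, $r(\vec{t})$ lies in some answer $H_0\circ\cdots\circ H_{n-1}$ iff some visited $\sigma_i$ has $\sigma_i(R)$ returning $r(\vec{t})$; any such $\sigma_i$ matches $F$, and conversely any match $\sigma$ of $P$ against $F$ may be chosen as the first iteration $\sigma_0$ and the run completed arbitrarily, so this coincides (by the induction hypothesis) with the existence of a match making $\phi_R^{\vec{x}|r}$ true, i.e.\ with $\exists P\mathbin{.}\phi_R^{\vec{x}|r}$ being true. For falsity, an answer omits $r(\vec{t})$ iff there is a complete iteration in which every body $\sigma_i(R)$ can omit $r(\vec{t})$; as the body evaluations are independent, they can be made to omit $r(\vec{t})$ simultaneously exactly when each can individually, which by the induction hypothesis is precisely the requirement that each $\sigma_i(\phi_R^{\vec{x}|r})$ can be false, i.e.\ $\exists P\mathbin{.}\phi_R^{\vec{x}|r}$ is false by Lemma~\ref{LemmaClearSem}, p.~7. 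Finally, determinism is inherited: every quantification pattern occurring in $\phi_Q^{\vec{x}|r}$ is either a pattern $P$ of some $\From P\mathbin{.}R$ in $Q$ or a pattern already occurring in some condition $\phi$ of $Q$, while the syntactic sugar $\wedge$ contributes only $\neg$ and $\vee$ and no new patterns; hence if all patterns of $Q$ are single facts with the unique matching property, so are all patterns of $\phi_Q^{\vec{x}|r}$.
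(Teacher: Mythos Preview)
Your proposal is correct and follows the same approach as the paper, which defines $\phi_Q^{\vec{x}|r}$ by the identical structural recursion and then leaves the inductive verification to the reader as ``easy if laborious''; your detailed treatment of the inductive steps (particularly the observation that the iterator dynamics of $\From P\mathbin{.}\_$ and $\exists P\mathbin{.}\_$ coincide, being governed by the same clauses (a) and (c)) is exactly what the paper omits. The only cosmetic difference is the base case for literal facts: the paper uses a single fact-level equality $\phi_{h(\vec{t})}^{\vec{x}|r}=\{r(\vec{x})=h(\vec{t})\}$ rather than your head-symbol case split with componentwise argument comparison, which is slightly more robust if $\cD$ contains equations identifying facts built with different constructors.
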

\begin{proof}
We define $\phi_Q^{\vec{x}|r}$ by recursion on the structure of a query $Q$:
\begin{gather*}
\phi_{\Empty}^{\vec{x}|r}=\False,\
\phi_{h(\vec{t})}^{\vec{x}|r}=\{r(\vec{x})=h(\vec{t})\},\
\phi_{Q_1 \Con Q_2}^{\vec{x}|r}=\phi_{Q_1}^{\vec{x}|r}\vee\phi_{Q_2}^{\vec{x}|r},\
\phi_{\psi\Rightarrow R}^{\vec{x}|r}=\psi\wedge\phi_{R}^{\vec{x}|r},\
\phi_{\From P\mathbin{.}R}^{\vec{x}|r}=\exists P.\phi_{R}^{\vec{x}|r}.
\end{gather*}
The easy if laborious proof that $\phi_Q^{\vec{x}|r}$ really satisfies all the conditions in the statement
is left to the reader.
\end{proof}

\begin{example}
Consider the following query in $\QLang^\query$ (where $x$ and $y$ are distinct variables):
\begin{equation*}
Q:=\From \Ret{f(x, y)}\mathbin{.}\bigl((\{x=y\}\Rightarrow r(x))\Con r(s(x))\bigr).
\end{equation*}
Thus, for any fact of the form $f(x, y)$ in the database, $Q$ will output $r(s(x))$, and, if $x=y$, also $r(x)$. Using the recursive formula from the proof of Lemma~\ref{IffCondQueryRel} we easily see that
\begin{equation*}
\phi^{z|r}_Q:=\exists\Ret{f(x, y)}\mathbin{.}\bigl((\{x=y\}\wedge \{r(z)=r(x)\})
\vee\{r(z)=r(s(x))\}\bigr).
\end{equation*}
\end{example}

The next result shows that $\QLang^\query$ can emulate relational algebra.
\begin{theorem}
\label{RelAlgTheorem}
Denote by $\text{RelAlg}(\cS, \cD)$ the relational algebra over the relational schema $\cS$ and the domain $\cD$ of atomic values (which we silently identify with its algebraic representation, where types are sorts and
predicates are represented with equationally defined operators
into the $\Bool$ sort). Furthermore, assume that for each relational
algebra expression $R$ in $\text{RelAlg}(\cS, \cD)$   a function symbol
$\cR_R:s_1\ldots s_n\rightarrow \Fact$ is in $\Sigma_F$, where $s_i$ is the sort (domain) of
the $i$-th column of $R$ and $n$ is the arity of $R$.
For any relational database $I$ with schema $\cS$, let $\Trb(I)$ be a multiset corresponding to the database $I$. More precisely,
\begin{equation*}
\Trb(I):=\circ\{\cR_r(\vec{t})\;|\;\text{$r$ is a relation symbol in $\cS$ and\ }
(\vec{t})\in r^I\},
\end{equation*}
where $r^I$ is the set of tuples of $r$ in $I$.
Then, for all  formulas $R$ in $\text{RelAlg}(\cS, \cD)$,
there exists a closed query $\Tr(R)$ in $\QLang^\query$ such that
for all relational databases $I$ with schema $\cS$,
\begin{equation}
\label{ConditionTrans}
\exists F\mathbin{.}\bigl(\Init_{\Tr(R)}(\Trb(I))\rightarrow^!\Ans(F\circ \cR_R(\vec{t}))\bigr)
\quad \text{iff}\quad
(\vec{t})\in \Eval_I(R),
\end{equation}
where  $\Eval_I(R)$ is the set of tuples obtained by
evaluation of relational query $R$ against the relational database $I$. Furthermore,
for any relational expression $R$, $\Tr(R)$ is deterministic.
\end{theorem}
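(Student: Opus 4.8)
The plan is to argue by structural induction on the relational-algebra expression $R$, built from base relations by selection $\sigma_\theta$, projection $\pi_{\vec{\jmath}}$, Cartesian product $\times$, union $\cup$ and difference $-$. Rather than translating each operator into an ad-hoc query, I would isolate a single \emph{enumeration combinator} $\mathsf{Enum}_R[\,\cdot\,]$ which, given a ``body'' query $B$ whose free variables range over the columns $\vec{x}=x_1,\ldots,x_n$ of $R$, produces a closed query that iterates over all tuples $\vec{t}\in\Eval_I(R)$ and runs $B$ under $\{\vec{t}/\vec{x}\}$ for each of them (repetitions allowed). The translation is then simply $\Tr(R):=\mathsf{Enum}_R[\cR_R(\vec{x})]$: for every enumerated output tuple we emit the corresponding $\cR_R$-fact. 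Because the target statement~\eqref{ConditionTrans} only asks whether $\cR_R(\vec{t})$ occurs in \emph{some} answer, multiplicities are irrelevant, so I never need each tuple produced exactly once, and the nested-loop evaluation below matches relational-algebra semantics directly, with no active-domain/safe-range detour.

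I would define the combinator by recursion on $R$, writing $\vec{\jmath}$ for the projection index list and $\chi_{R_2}:=\phi_{\Tr(R_2)}^{\vec{x}\mid\cR_{R_2}}$ for the membership condition furnished by Lemma~\ref{IffCondQueryRel}:
\begin{gather*}
\mathsf{Enum}_{r}[B]=\From\Ret{\cR_r(\vec{x})}\mathbin{.}B,\qquad
\mathsf{Enum}_{\sigma_\theta(R)}[B]=\mathsf{Enum}_{R}\bigl[\{\theta(\vec{x})\}\Rightarrow B\bigr],\\
\mathsf{Enum}_{\pi_{\vec{\jmath}}(R)}[B]=\mathsf{Enum}_{R}[B],\qquad
\mathsf{Enum}_{R_1\times R_2}[B]=\mathsf{Enum}_{R_1}\bigl[\mathsf{Enum}_{R_2}[B]\bigr],\\
\mathsf{Enum}_{R_1\cup R_2}[B]=\mathsf{Enum}_{R_1}[B]\Con\mathsf{Enum}_{R_2}[B],\qquad
\mathsf{Enum}_{R_1- R_2}[B]=\mathsf{Enum}_{R_1}\bigl[\neg\chi_{R_2}(\vec{x})\Rightarrow B\bigr].
\end{gather*}
The base clause binds $\vec{x}$ to each row of $r$ in $\Trb(I)$; selection pushes its Boolean test into the body; for projection the body $B$ mentions only the retained columns (so $\Tr(\pi_{\vec{\jmath}}(R))=\mathsf{Enum}_R[\cR_{\pi_{\vec{\jmath}}(R)}(x_{j_1},\ldots,x_{j_k})]$); product is nested iteration; and union is the $\Con$ of the two sub-enumerations, whose shared variables $\vec{x}$ line up by union-compatibility. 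The only clause leaving the positive fragment is difference, where I reuse $\mathsf{Enum}_{R_1}$ but guard the body by $\neg\chi_{R_2}(\vec{x})$, the inductively available $\Tr(R_2)$ being turned into a condition by Lemma~\ref{IffCondQueryRel}; renaming the internal variables of $\Tr(R_2)$ via Lemma~\ref{RenameQuery} keeps $\{\vec{x}\}$ disjoint from them, as that lemma requires. Crucially, each $\From$ and each $\exists$ again sees the whole of $\Trb(I)$, since a query never deletes facts and $\FromInit$ (resp.\ $\QuantInit$) reset the iterator state to the full database.

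Correctness then reduces to an \emph{enumeration lemma}: for every body $B$ and every fact $g$, there is $F$ with $\Init_{\mathsf{Enum}_R[B]}(\Trb(I))\rightarrow^!\Ans(F\circ g)$ iff there is $\vec{t}\in\Eval_I(R)$ and some $F'$ with $\Init_{\{\vec{t}/\vec{x}\}(B)}(\Trb(I))\rightarrow^!\Ans(F'\circ g)$. I would prove this by induction on $R$ using Lemma~\ref{LemmaClearSemQuery}: point~6 unfolds $\From$ into the union over a maximal matching sequence (so for $\Ret{\cR_r(\vec{x})}$ exactly the rows of $r$ are visited), point~5 handles $\Con$ for union, and points~3--4 handle $\Rightarrow$ for selection and difference, while Lemma~\ref{SubLemmaQuery} lets me treat each body as an independent subcomputation. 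For the difference clause I additionally invoke the inductive hypothesis for $R_2$ together with Lemma~\ref{IffCondQueryRel}, giving that $\{\vec{t}/\vec{x}\}(\chi_{R_2})$ evaluates to $\ffalse$ exactly when $\vec{t}\notin\Eval_I(R_2)$, where Theorem~\ref{ConfluentPartCond} guarantees the condition has a well-defined truth value. Instantiating the lemma with $B=\cR_R(\vec{x})$ and $g=\cR_R(\vec{t})$ yields~\eqref{ConditionTrans}.

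For determinism, every quantification pattern in $\Tr(R)$ is either a base pattern $\Ret{\cR_r(\vec{x})}$ or a pattern inside some $\chi_{R_2}$. The former has the unique matching property of Definition~\ref{UniqMatchPropDef}, since the relation constructors $\cR_r$ are ordinary (non-equational) constructors applied to distinct variables; the latter is deterministic by the final clause of Lemma~\ref{IffCondQueryRel}, as the inductively built $\Tr(R_2)$ is deterministic. Hence $\Tr(R)$ is deterministic and, by Theorem~\ref{ConfluentPartQuery}, returns a unique answer. I expect the main obstacle to be the difference operator, where the whole query$\leftrightarrow$condition bridge of Lemma~\ref{IffCondQueryRel} (and its determinism clause) is indispensable, together with the careful phrasing and inductive verification of the enumeration lemma under the non-confluent, multiset-valued semantics of Lemma~\ref{LemmaClearSemQuery} --- in particular making the ``$g$ occurs in some answer'' formulation compose correctly through nested iterations. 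The remaining operators are then routine nested-loop evaluation.
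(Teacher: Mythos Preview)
Your proposal is correct and is essentially the paper's proof repackaged: the paper defines $\Tr(R)$ by, at each operator, taking the translation of the subexpression(s) and \emph{replacing every leaf subquery} $\cR_{R'}(\vec{t})$ by an appropriate new body (a guarded fact for selection, a reindexed fact for projection, a nested translation for product, a $\neg\phi_{\Tr(S)'}^{\vec{x}\mid\cR_S}$-guarded fact for difference), which is exactly your $\mathsf{Enum}_R[\,\cdot\,]$ combinator with the ``hole'' made explicit. Your enumeration lemma is the inductive invariant the paper gestures at with ``an easy induction on the structure of $R$''; making it explicit is a genuine expository improvement, and it also fixes a small slip in the paper's union clause (the paper writes $\Tr(R_1\cup R_2):=\Tr(R_1)\Con\Tr(R_2)$, which would emit $\cR_{R_1}$- and $\cR_{R_2}$-facts rather than $\cR_{R_1\cup R_2}$-facts, whereas your $\mathsf{Enum}_{R_1}[B]\Con\mathsf{Enum}_{R_2}[B]$ with $B=\cR_{R_1\cup R_2}(\vec{x})$ gets this right automatically). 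The only bookkeeping you leave implicit that the paper spells out is the variable renaming for the \emph{product} case (not just difference): your $\mathsf{Enum}_{R_1}[\mathsf{Enum}_{R_2}[B]]$ needs the column variables of $R_1$ and $R_2$ to be disjoint, which the paper achieves via Lemma~\ref{RenameQuery}.
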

\begin{proof}
To prove the theorem we  define $\Tr(R)$ by recursion on the structure
of $R$. We consider relational algebra expressions to be constructed with base relations, projections, selections, set unions, Cartesian products, and set differences. Other well known relational operators such as joins can be defined in terms of basic operators mentioned above.
Attribute renaming is not relevant in our case, since we represent relations with positional arguments only. We denote set unions, Cartesian products, and set differences with the usual mathematical notation (e.g., $R\cup S$, $R\times S$ and $R\setminus S$, respectively).
Notation for projections and selections is less standardised (and needs to be adapted to relations with positional arguments). We denote by $\pi_{i_1,\ldots,i_k}(R)$ the projection of $R$ onto $i_1$'th, $i_2$'th, $\ldots$, and $i_k$'th ``leg'' (i.e., on a single tuple,
$\pi_{i_1,\ldots,i_k}((t_1, t_2,\ldots, t_n)):=(t_{i_1}, t_{i_2},\ldots,t_{i_k})$).
We denote by $\sigma_\phi(R)$ the selection with condition $\phi$ applied to $R$ (i.e., it returns those tuples of $R$ which satisfy $\phi$). If $R$ is $n$-ary, then we will assume that
in $\phi$ expression $\$_i$ corresponds to the $i$-th column of $R$, for $i\in\{1, \ldots, n\}$.

\medskip
Let $r$ be a base relation of
arity $k$ (for simplicity we omit the typing information), and let
 $\vec{x}$ be a list of $k$ distinct variables. Then
$
\Tr(r):=\From  \Ret{\cR_r(\vec{x})}\mathbin{.}\cR_r(\vec{x}).
$
We now consider selected relational operators:
\begin{enumerate}
\item Let $R$  be a relational formula defining an $n$-ary relation,
and let $i_1,\ldots,i_k$ be a subsequence of $1,\ldots,n$.
We define $\Tr(\pi_{i_1,\ldots,i_k}(R))$ to be  $\Tr(R)$ with
each subquery  of the form $\cR_{R}(t_1, \ldots, t_n)$
replaced with $\cR_{\pi_{i_1,\ldots,i_k}(R)}(t_{i_1},\ldots,t_{i_k})$.
\item Let $R$ be a relational formula of arity $k$, and let
$\phi$ be a term of sort $\sort{Bool}$ representing condition on rows of $R$
(where in $\phi$ special variable $\$_i$, $i\in\{1, \ldots, n\}$ corresponds to the
$i$-th ``attribute'' of $R$).
We define $\Tr(\sigma_\phi(R))$ to be $\Tr(R)$ with  each subquery of the form
$\cR_R(\vec{t})$ replaced with
 $\bigl\{\{\vec{t}/\vec{\$}\}(\phi)\bigr\}\Rightarrow\cR_{\sigma_\phi(R)}(\vec{t})$.
 \item $\Tr(R_1\cup R_2):=\Tr(R_1)\Con\Tr(R_2)$.
\item Let $R$ and $S$ be relational formulas.
Let $\Tr(S)':=\sigma(\Tr(S))$ for some renaming $\sigma$ such that
$\Var(\Tr(R))\cap\Var(\Tr(S)')=\emptyset$ ($\Tr(S)\equiv\Tr(S)'$ by
Lemma~\ref{RenameQuery} since  $\Tr(S)$ is closed).
Further, let $\alpha(\vec{t})$ be  $\Tr(S)'$
with each subquery of the form
$\cR_S(\vec{s})$ replaced with
$\cR_{R\times S}(\vec{t}, \vec{s})$. Then
we define $\Tr(R\times S)$ to be
$\Tr(R)$ with each subquery of the form  $\cR_R(\vec{t})$
replaced with
$\alpha(\vec{t})$.
\item Let $R$ and $S$ be relational formulas
of arity $k$. Let $\Tr(S)'$ be like in the previous point.
Let $\vec{x}$ be a list of $k$ distinct variables such that
$\{\vec{x}\}\cap(\Var(\Tr(R))\cup\Var(\Tr(S)'))=\emptyset$.
We define $\Tr(R\setminus S))$ to be
$\Tr(R)$ with each  subquery of the form
$\cR_R(\vec{t})$  replaced with
$\neg\{\vec{t}/\vec{x}\}\bigl(\phi_{\Tr(S)'}^{\vec{x}|\cR_S}\bigr)\Rightarrow\cR_{R\setminus S}(\vec{t})$ (see Lemma~\ref{IffCondQueryRel}).
\end{enumerate}
An easy induction on the structure of $R$ shows that $\Tr(R)$ is closed and deterministic, and
Equation~\eqref{ConditionTrans} is satisfied (in
 the case of induction step for
  set difference we also use Lemma~\ref{IffCondQueryRel})
\end{proof}

\begin{remark}
Relational queries $R$ evaluate to sets of tuples.
However, $\Tr(R)$ may evaluate to a multiset of facts --- e.g., when evaluating unions
duplicate facts are not removed.
\end{remark}

\begin{example}
Let $r$ and $s$ be binary relations. Consider the following relational algebra expression:
\begin{equation*}
R:=\pi_{1, 4}\bigl(\sigma_{\$_2=\$_3}(r\times s)\bigr).
\end{equation*}
Thus, $(x,y)\in R$ iff $(x, z)\in r$ and $(z,y)\in s$ for some $z$. Let represent $R$ as a query in $\QLang^\query$ using definition of $\Tr(\_)$ from the proof of Theorem~\ref{RelAlgTheorem}.
First,
\begin{equation*}
\Tr(r)=\From\Ret{\cR_r(x_1, x_2)}\mathbin{.}\cR_r(x_1, x_2),\quad
\Tr(s)=\From\Ret{\cR_s(y_1, y_2)}\mathbin{.}\cR_s(y_1, y_2),
\end{equation*}
where $x_1, x_2, y_2, y_2$ are distinct variables. Then, by the point~4 in the proof
\begin{equation*}
\Tr(r\times s)=\From \Ret{\cR_r(x_1, x_2)}\mathbin{.}\From\Ret{\cR_s(y_1, y_2)}\mathbin{.}
\cR_{r\times s}(x_1, x_2, y_1, y_2).
\end{equation*}
Finally, to deal with selection and projection we apply points~2 and 1, respectively, from the proof:
\begin{multline*}
\Tr\bigl(\pi_{1, 4}\bigl(\sigma_{\$_2=\$_3}(r\times s)\bigr)\bigr)\\
=
\From \Ret{\cR_r(x_1, x_2)}\mathbin{.}\From\Ret{\cR_s(y_1, y_2)}\mathbin{.}
\bigl(
\{x_2=y_1\}\Rightarrow
\cR_{\pi_{1, 4}(\sigma_{\$_2=\$_3}(r\times s))}(x_1, y_2)
\bigr).
\end{multline*}
\end{example}

\section{Rewriting semantics of $\QLang^\dml$}

We associate with  a DML query $Q$ in $\QLang^\dml$
the rewriting system $\cR_{\Sigma, \cD}^\dml(Q)$.
Terms rewritten with the rules of  $\cR_{\Sigma, \cD}^\dml(Q)$
are of the form $\{F,F', F_\FrMark,S\}^d$, where
 $\{\_,\_,\_,\_\}^d:\FactSet\;\FactSet\;\FactSet^\FrMark\;\StackD\rightarrow\StateD$. $F$ is the  database of facts
 against which we issue the DML query. $F$  changes during execution of the query while
 the facts are removed from it. A multiset
$F'$, expanded during query execution, contains new facts to be added to the database.
$F_\FrMark$ is a multiset of fresh facts (see Section~\ref{MofFandPSect}) from which fresh values are drawn.
  $S$ is a stack of sort $\StackD$ which simulates structural recursion.
   Normal forms encapsulating  a new database
and new multiset of fresh facts after successful or, respectively,
 failing execution of a DML query, are constructed with
$\New, \Fail:\FactSet\;\FactSet^\FrMark\rightarrow\StateD$.
We consider only terms $t$ of sort $\StateD$ which satisfy the following {\em freshness condition}:

\begin{definition}
\label{FreshCondRem}
A term $t$ of sort $\StateD$ satisfies the {\em freshness condition} if and only if $m>n$ for all positions
$\kappa$, $\kappa'$ in $t$ such that $\kappa\neq\kappa'1$,
$t|_{\kappa}=C_s(\imath^{s}_m)$ and
$t|_{\kappa'}=\imath^{s}_n$.
\end{definition}

\begin{example}
Consider the following term of $\StateD$ sort:
\begin{equation*}
\New\bigl(f(\imath^{s}_{10})\circ f(\imath^{s}_3), C_s(\imath^{s}_7)\bigr).
\end{equation*}
It does not satisfy the freshness condition having as subterms both $C_s(\imath^{s}_7)$ and
$\imath^{s}_{10}$ with $10> 7$. Our general assumption which justifies the freshness condition is that a fresh fact of the form $C_s(\imath^{s}_m)$ means that we never before used any value of the form $\imath^s_n$ with $n>m$. Clearly, terms which do not satisfy freshness condition (like the term above) violate this assumption.
\end{example}

Terms of sort $\StackD$ are constructed from local computation frames of sort
$\NodeD$, where $\NodeD<\StackD$, using an associative
binary operator  $\_\_:\StackD\;\StackD\rightarrow\StackD$ with identity  $\Empty$.
Most constructors
of frames are indexed by DML sub-queries $R$ of $Q$,
 and lists of distinct variable names
$\vec{v}:=v_1,\ldots, v_n$  of respective sorts $s_1,\ldots, s_n$
such that $\{\vec{v}\}\subseteq\Var(Q)$  contains all free variables of $R$:
\begin{gather*}
[\_,\ldots,\_]^{\vec{v}|d}_R, [\_,\ldots,\_]^{\vec{v}|d,\downarrow}_R:s_1\ldots s_n\rightarrow\NodeD,\quad
[\_,\ldots,\_\;|\;\_]^{\vec{v}|d}_{R}:s_1\ldots s_n\;\StackC\rightarrow\NodeD,\\
\Ok:\rightarrow\NodeD,\quad
[\_\;|\;\_,\ldots,\_\;|\;\_]^{\vec{v}|d}_{\From P\mathbin{.}R}:\FactSet\;s_1\ldots s_n\;\Bool\rightarrow\NodeD,
\\
[\_\;|\;\_,\ldots,\_\;|\;\_,\_]^{\vec{v}|d}_{\From P\mathbin{.}R}:\FactSet\;s_1\ldots s_n\;\Bool\;\FactSet\rightarrow\NodeD.
\end{gather*}
As $\vec{v}$ can be empty, the above signature templates include
$[]^{\vec{v}|d}_R, []_R^{\vec{v}|d,\downarrow}:\rightarrow\NodeD$, etc.
 A constant $\Ok$ marks successful branches of computation, i.e., those which created
either new facts or  $\dummy:\DQuery$. Marking such branches is necessary as facts deleted by
 unsuccessful branches are restored as soon as the branch  finishes.
Frames of the form $[\vec{a}]^{\vec{v}|d}_R$,
$[\vec{a}]^{\vec{v}|d,\downarrow}_R$,
$[\vec{a}|S]^{\vec{v}|d}_R$,
 $[F'|\vec{a}|B]^{\vec{v}|d}_R$, or
$[F'|\vec{a}|B, F'']^{\vec{v}|d}_R$
are called $(R, \sigma)$-frames, where $\sigma:=\{\vec{a}/\vec{v}\}$
 is the current substitution.
They are related to evaluation of $\sigma(R)$.
Marked frames $[\vec{a}]^{\vec{v}|d,\downarrow}$ occur in the execution of ``unions'' $\_\Con\_$.
Conditional frames  $[\vec{a}|S]^{\vec{v}|d}_R$ are used in
execution of conditionals $\phi\Rightarrow R$, where
$S:\StackC$ represents evaluation of $\phi$.
Iterator frames $[F'|\vec{a}|B]^{\vec{v}|d}_{\From P\mathbin{.}R}$ and
$[F'|\vec{a}|B,F'']^{\vec{v}|d}_{\From P\mathbin{.}R}$
represent iterative  execution of  $\sigma(\From P\mathbin{.}R)$.
Multiset $F'$, called iterator state,
contains facts available for matching with
$P_!\circ P_?\circ P_0$ (cf.~Remark~\ref{PatternNotationRemark}).
Iteration status $B$ is equal to $\ttrue$ iff  the iteration
already generated either new facts or $\dummy$ (i.e., if the branch related to
$\sigma(\From P\mathbin{.}R)$ was successful). ``Tentative'' iterator frames
$[F'|\vec{a}|B,F'']^{\vec{v}|q}_{\From P\mathbin{.}R}$ store multiset
$F''$ of facts deleted from the database in the present step, so that they can be restored
if the step is unsuccessful.
Given a database $F$,
in order to execute a closed DML query $Q$ we rewrite a state term
$\Init_Q(F, F_\FrMark):=\{F, \Empty, F_\FrMark, []^{|d}_Q\}^d$
until a normal form
 $\New(F', F_\FrMark^{'})$ or $\Fail(F', F_\FrMark^{'})$ is reached, indicating that a successful or, respectively, unsuccessful execution of
 $Q$ in the database $F$ yielded a new database $F'$,
 and a new multiset of fresh facts $F_\FrMark^{'}$.
 Now we are ready to define rule schemas of $\cR_{\Sigma, \cD}^\dml(Q)$.

\medskip
Two consecutive occurrences of $\Ok$ are collapsed, and
$(\Ok, \sigma)$-frames are replaced with $\Ok$:
 \begin{equation}
 \Collapse:\{F, F', F_\FrMark, S\Ok\Ok\bigr\}^d
 \Rightarrow\!\bigl\{F, F', F_\FrMark, S\Ok\}^d,\;\;
 \LDummy:\{F, F', F_\FrMark, S[\vec{v}]^{\vec{v}|d}_\dummy\}^d
  \Rightarrow\!\{F, F', F_\FrMark, S\Ok\}^d.
 \label{CollapseEq}
 \end{equation}

An $(\Empty,\sigma)$-frame
   is removed from the stack. An $(f,\sigma)$-frame, where $f$ is a fact and $\sigma$ is the
   current substitution, is replaced by $\Ok$ and $\sigma(f)$  is added to $F'$
   (since $\Var(f)\subseteq\{\vec{v}\}$, $\sigma(f)$ is closed):
  \begin{equation}
  \LEmpty:\bigl\{F,F'\!, F_\FrMark, S[\vec{v}]^{\vec{v}|d}_{\Empty}\bigr\}^d
  \Rightarrow\bigl\{F, F'\!, F_\FrMark, S\bigr\}^q,\
  \LFact:\bigl\{F,F'\!, F_\FrMark, S[\vec{v}]^{\vec{v}|d}_{f}\bigr\}^d
  \Rightarrow\bigl\{F, F'\circ f, F_\FrMark, S\Ok\bigr\}^q.
  \end{equation}

   An $(R_1\Con R_2,\sigma)$-frame is split into the $(R_1,\sigma)$-frame and the $(R_2,\sigma)$-frame.
   The $(R_2,\sigma)$-frame is marked with $\downarrow$ so that the evaluation of
 $\sigma(R_1\Con R_2)$ can be marked as successful when at least one of the branches is successful. When both branches are successful, this can produce
 two consecutive $\Ok$ constants on the stack which are then collapsed using $\Collapse$ rule in
 Equation~\eqref{CollapseEq}.
  \begin{align}
  \LUnUnfl:&\ \bigl\{F,F',F_\FrMark,S[\vec{v}]^{\vec{v}|d}_{R_1\Con R_2}\bigr\}^d
 \Rightarrow\bigl\{F,F',F_\FrMark,S[\vec{v}]^{\vec{v}|d,\downarrow}_{R_2}[\vec{v}]^{\vec{v}|d}_{R_1}\bigr\}^d,\nonumber\\
 \LUnFldOk:&\ \bigl\{F,F',F_\FrMark,S[\vec{v}]^{\vec{v}|d,\downarrow}_{R_2}\Ok\bigr\}^d
 \Rightarrow
 \bigl\{F,F',F_\FrMark,S\Ok[\vec{v}]^{\vec{v}|d}_{R_2}\bigr\}^d,\nonumber\\
 \LUnFldEmpty:&\ \bigl\{F,F',F_\FrMark,S[\vec{v}]^{\vec{v}|d,\downarrow}_{R_2}\bigr\}^d
 \Rightarrow
 \bigl\{F,F',F_\FrMark,S[\vec{v}]^{\vec{v}|d}_{R_2}\bigr\}^d,
\end{align}

 Rule schemas for execution of conditionals   $\phi\Rightarrow Q$ are similar to those in Equations~\eqref{CondEqQuery}, \eqref{CondForAllEq}:
 \begin{align}
  \LCondUnfl:&\bigl\{F, F', F_\FrMark, S[\vec{v}]^{\vec{v}|d}_{\phi\Rightarrow R}\bigr\}^d\Rightarrow
 \bigl\{F, F', F_\FrMark, S\bigl[\vec{v}\;|\;[\vec{v}]^{\vec{v}|c}_\phi\bigr]^{\vec{v}|d}_{R}\bigr\}^d,\nonumber\\[-0.5pt]
 \LCondFldf:&\bigl\{F, F', F_\FrMark, S[\vec{v}\;|\;\Res(\ffalse)]^{\vec{v}|d}_{R}\bigr\}^d\Rightarrow
 \{F, F', F_\FrMark, S\}^d,\nonumber\\[-0.5pt]
 \LCondFldt:&\bigl\{F, F', F_\FrMark, S[\vec{v}\;|\;\Res(\ttrue)]^{\vec{v}|d}_{R}\bigr\}^d\Rightarrow
 \bigl\{F, F', F_\FrMark, S[\vec{v}]_R^{\vec{v}|d}\bigr\}^d,\nonumber\\[-0.5pt]
 \lambda^d:&\{F,F', F_\FrMark, S[\vec{v}\;|\;S']^{\vec{v}|d}_R\}^d
\Rightarrow\{F,F', F_\FrMark, S[\vec{v}\;|\;S'']^{\vec{v}|d}_R\}^d\ \text{if}\ C\nonumber\\[-0.5pt]
&\quad\quad\text{for all}\ \lambda:\{F, S'\}^c\Rightarrow\{F,S''\}^c\ \text{if}\ C\  \text{in}\  \cR_{\Sigma, \cD}^\cond(\phi).
 \end{align}

Evaluation of a  $\From\_.\_$ subquery is initialized with the whole database available for
matching. Iteration status  is $\ffalse$ since evaluation is not successful yet.
\begin{equation}
\FromInit:\quad \bigl\{F, F', F_\FrMark, S[\vec{v}]^{\vec{v}|d}_{\From P\mathbin{.}R}\bigr\}^d
\Rightarrow
\bigl\{F, F', F_\FrMark, S[F\;|\;\vec{v}\;|\;\ffalse]^{\vec{v}|d}_{\From P\mathbin{.}R}\bigr\}^d.
\end{equation}

Let  $\vec{w}$ be a sequence of
 all the distinct variables in $\Var(P)\setminus\{\vec{v}\}$. Let $\sigma$ be the current substitution.
Rule $\FromUnfl$ pushes onto
the stack a $(\sigma',R)$-frame, where $\sigma'=\sigma\cup\{\vec{b}/\vec{w}\}$
is defined by matching $F''\circ\sigma(P_!\circ P_?\circ P_0)$ with iterator state and $P_\FrMark$
with the multiset of fresh facts $F_\FrMark$. It also removes
$\sigma'(P_?\circ P_0)$
from the iterator state and $\sigma'(P_0)$ from the database state, and, finally, it updates fresh facts using $\upsilon$ defined in Equation~\eqref{FreshBulkUpd}. Removed facts $\sigma'(P_0)$ are stored in the tentative iterator frame:
\begin{align}
\FromUnfl:&\quad
 \bigl\{F\circ P_!\circ P_?\circ P_0,F',  F_\FrMark\circ P_\FrMark,
 S\bigl[F''\circ P_!\circ P_?\circ P_0\;|\;
 \vec{v}\;|\;B\bigr]^{\vec{v}|d}_{\From P\mathbin{.}R}\bigr\}^d
 \nonumber\\
 &\quad
 \Rightarrow
  \bigl\{F\circ P_!\circ P_?,F',F_\FrMark\circ \upsilon(P_\FrMark),
  S\bigl[F''\circ P_!\;|\;\vec{v}\;|\;B, P_0\bigr]^{\vec{v}|d}_{\From P\mathbin{.}R}
  [\vec{v},\vec{w}]^{\vec{v},\vec{w}|d}_{R}\bigr\}^d,
\end{align}
If execution of $\sigma'(R)$ proves unsuccessful, removed facts can be restored both to iterator state and database. Otherwise, we discard them and set
 the iteration status to $\ttrue$:
\begin{align}
  \lambda^\Fld_{\From;\Empty}:&\ \bigl\{F, F', F_\FrMark, S\bigl[F''\;|\;\vec{v}\;|\; B, F_0\bigr]^{\vec{v}|d}_{\From P\mathbin{.}R}\bigr\}^d
\Rightarrow
  \bigl\{F\circ F_0, F', F_\FrMark, S\bigl[F''\circ F_0\;|\;\vec{v}\;|\; B\bigr]^{\vec{v}|d}_{\From P\mathbin{.}R}\bigr\}^d,\nonumber\\
\lambda^\Fld_{\From;\Ok}:&\ \bigl\{F, F', F_\FrMark, S\bigl[F''\;|\;\vec{v}\;|\; B, F_0\bigr]^{\vec{v}|d}_{\From P\mathbin{.}R}\Ok\bigr\}^d\Rightarrow
  \bigl\{F, F', F_\FrMark, S \bigl[F''\;|\;\vec{v}\;|\;\ttrue\bigr]^{\vec{v}|d}_{\From P\mathbin{.}R}\bigr\}^d.
\end{align}
 We keep applying $\FromUnfl$ until we cannot match
$\sigma(P_!\circ P_?\circ P_0)$ with iterator state.
Then we replace the iterator frame with $\delta(B)$ where $B$ is the iteration status,
$\delta(\ttrue)=\Ok$, and $\delta(\ffalse)=\Empty$. To prevent
premature application, rule schema $\FromEnd$ is conditional, where the condition uses
functions $\mu_{P, \vec{v}}:\Pattern\;s_1\ldots s_n\rightarrow\Succ$ defined for each
$\vec{v}\subseteq\Var(Q)$ and
pattern $P$ occurring in $Q$ with the single equation
$ \mu_{P,\vec{v}}(F\circ P_!\circ P_?\circ P_0,\vec{v})=\yes$ (cf. Equation~\eqref{EndFuncDefEq}):
\begin{equation}
  \FromEnd:\bigl\{F, F', F_\FrMark, S\bigl[F''\;|\;\vec{v}\;|\;B\bigr]^{\vec{v}|d}_{\From P\mathbin{.}R}\bigr\}^d
\Rightarrow\{F,F', F_\FrMark, S\delta(B)\}^d\ \text{if}\
(\mu_{P,\vec{v}}(F'', \vec{v})=\yes)=\ffalse.
\end{equation}

The last two rules  reduce the $\StateD$-terms with an empty stack or
stack containing only the $\Ok$ constant into a term constructed with $\New$ or $\Fail$, respectively:
\begin{equation}
\lambda_{\text{dml}}^\Ok:\{F, F',F_\FrMark, \Ok\}\Rightarrow\New(F\circ F', F_\FrMark),\quad
\lambda_{\text{dml}}^\Empty:\{F, F',F_\FrMark, \Empty\}\Rightarrow\Fail(F\circ F',F_\FrMark).
\end{equation}

\begin{theorem}
$\cR_{\Sigma, \cD}^\dml(Q)$ is a terminating rewriting system.
\end{theorem}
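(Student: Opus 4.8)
The plan is to follow the template of the proof of Theorem~\ref{CondTerminating}: since $\cR_{\Sigma,\cD}^\dml(Q)$ is topmost, it suffices to equip the set of $\StateD$-terms with a well-founded partial order under which every rule schema is strictly decreasing. As before, the two normal-form constructors $\New$ and $\Fail$ sit strictly below every non-normal state, and two states $\{F,F',F_\FrMark,S\}^d$ and $\{G,G',G_\FrMark,S'\}^d$ are compared through a lexicographic order on their stacks $S$, $S'$, derived from a well-founded order on frames. The three data components $F$, $F'$, $F_\FrMark$ play no role in the measure; in particular the fresh-fact bookkeeping (the update by $\upsilon$ in $\FromUnfl$, which merely increments counters) is irrelevant to termination, since the number of fresh facts never grows and larger counters cannot create an infinite descent.

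The bookkeeping rules are dominated exactly as the disjunction rules were in Theorem~\ref{CondTerminating}. I would place $\Ok$ at the bottom of the frame order (just above $\New$/$\Fail$), order a down-marked frame $[\,\cdots\,]^{\vec v|d,\downarrow}_{R}$ strictly above its unmarked counterpart, and order $(R,\sigma)$-frames by the proper-subterm order on their indexing subquery. Then $\LDummy$ (which lowers a frame to $\Ok$), $\Collapse$ (which shortens the stack), and $\LUnFldOk$, $\LUnFldEmpty$ (which lower the leftmost affected frame from a marked frame to $\Ok$, resp. drop the marker) are all strictly decreasing. The rules $\lambda^d$ that run the embedded condition evaluation inside a conditional frame inherit strict decrease directly from the condition order of Theorem~\ref{CondTerminating}, applied to the sub-stack occupying that frame.

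The only genuinely new phenomenon, and the heart of the argument, is the delete-and-restore behaviour of the quantifier frames. For an iterator frame indexed by $\From P\mathbin{.}R$ I would measure the pair (plain or tentative) by the multiset consisting of the iterator state together with the reserved facts $P_0$ held inside a tentative frame. Each application of $\FromUnfl$ removes $\sigma'(P_?\circ P_0)$ from the iterator state, and since every admissible pattern is at least semi-terminating we have $P_?\circ P_0\neq\Empty$, so the iterator state itself always strictly shrinks. The obstacle is that the failure-fold $\lambda^{\Fld}_{\From;\Empty}$ restores $\sigma'(P_0)$ to the iterator state, which can undo this progress. This is resolved by the case split dictated by the syntax. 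If $P$ is terminating, i.e. $P_?\neq\Empty$, then across one full iteration the net loss from the iterator state is at least $\sigma'(P_?)\neq\Empty$, whatever the fold, so the augmented multiset strictly decreases. If $P$ is only semi-terminating, i.e. $P_?=\Empty$ but $P_0\neq\Empty$, then by the sort discipline of $\QLang^\dml$ the body $R$ is success-assured; an easy induction on the success-assured grammar (using $\LDummy$, $\LFact$, $\LUnUnfl$, $\LUnFldOk$ and $\Collapse$) shows that evaluation of such an $R$ always deposits an $\Ok$ on top of the iterator frame, so $\lambda^{\Fld}_{\From;\Ok}$ fires and the reserved $P_0\neq\Empty$ is discarded permanently, while $\lambda^{\Fld}_{\From;\Empty}$ never applies to such a frame. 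In either case each iterator step strictly decreases the augmented iterator-state multiset, bounding the number of $\FromUnfl$ applications; $\FromEnd$ then removes the frame.

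I expect the main obstacle to be precisely this reconciliation of the restore rule $\lambda^{\Fld}_{\From;\Empty}$ with a single decreasing measure, together with the auxiliary fact that a success-assured body necessarily reduces to $\Ok$. This is exactly where the syntactic restriction---forbidding $\From P\mathbin{.}R$ with $P$ semi-terminating-but-not-terminating unless $R$ is success-assured---earns its keep, and it is legitimate to invoke it because the rule schemas of $\cR_{\Sigma,\cD}^\dml(Q)$ are instantiated only for the well-sorted sub-queries of $Q$, so every iterator frame with $P_?=\Empty$ indeed carries a success-assured body. Once the iterator case is settled, assembling the frame order into the lexicographic stack order and verifying strict monotonicity for the remaining, structurally simpler rules is routine, just as in Theorem~\ref{CondTerminating}.
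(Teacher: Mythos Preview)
Your proposal is correct and follows essentially the same route as the paper's proof: reduce to the template of Theorem~\ref{CondTerminating} and isolate the only genuinely new difficulty, namely the restore rule $\lambda^{\Fld}_{\From;\Empty}$ for iterator frames, resolving it via the case split on terminating versus semi-terminating patterns and invoking the sort discipline that forces $R$ to be success-assured when $P_?=\Empty$. The paper's own proof is in fact just this sketch (it singles out exactly the case $P_?=\Empty$, $P_0\neq\Empty$ and notes that success-assured $R$ prevents the restore), so your write-up is if anything more detailed; one minor wobble is that your opening promises a measure under which \emph{every} rule is strictly decreasing, whereas for $\lambda^{\Fld}_{\From;\Empty}$ you ultimately argue at the level of a full unfold--evaluate--fold cycle rather than rule-by-rule, but this is easy to reconcile and the paper does not spell out a single rule-by-rule measure either.
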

\begin{proof}
The proof is similar to the proof  of Theorem~\ref{CondTerminating}. Only subqueries
of the form $\From P\mathbin{.}R$, where $P$ is semiterminating but not terminating
(i.e., $P_0\neq\Empty$, but $P_?=\Empty$) require special care.
In this case, the signature of $\From\_.\_$ forces $R$ to be
 success assured, i.e., $R$'s evaluation always succeeds,
and hence removed facts matching $P_0$ are never restored.
 This ensures termination of  $\From P\mathbin{.}R$'s execution.
\end{proof}

The following useful observation can be trivially verified by examining the rule schemas:
\begin{lemma}
\label{SubLemmaDML}
Let $R$ be a DML subquery of $Q$.
Then, for all multisets of facts $F$, $F'$, $G$, $G'$, multisets $F_\FrMark$, $G_{\FrMark}$ of fresh facts, stacks $S$,
lists of variables $\vec{v}=v_1,\ldots,v_n$ and
of values $\vec{a}=a_1,\ldots,a_n$,
\begin{enumerate}
\item  $\{F, F',F_\FrMark,S[\vec{a}]^{\vec{v}|d}_R\}^d
\rightarrow^*_{\cR_1}\{G, F'\circ G', G_\FrMark,S\Ok\}^d$
 iff
$\{F,\Empty, F_\FrMark,[]^{|d}_{\sigma(R)}\}^d
\rightarrow^*_{\cR_2}\New(G\circ G',G_\FrMark)$,
\item  $\{F, F',F_\FrMark,S[\vec{a}]^{\vec{v}|d}_R\}^d\rightarrow^*_{\cR_1}
\{F, F', G_\FrMark,S\}^d$
iff
$\{F,\Empty, F_\FrMark,[]^{|d}_{\sigma(R)}\}^d\rightarrow^*_{\cR_2}\Fail(F,G_\FrMark)$.
\end{enumerate}
where  $\sigma:=\{\vec{a}/\vec{v}\}$, $\cR_1:=\cR_{\Sigma, \cD}^\dml(Q)$,
and $\cR_2:=\cR_{\Sigma, \cD}^\dml(\sigma(R))$.
\end{lemma}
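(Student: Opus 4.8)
The plan is to prove both equivalences by exhibiting a step-by-step simulation between derivations of $\cR_1$ and $\cR_2$, exploiting that every rule schema of $\cR^\dml$ is topmost and leaves its stack-suffix and its ``already-accumulated'' new facts untouched. Concretely, in each rule the stack is written as $S\cdot(\text{top frames})$ with $S$ an object variable, and no rule ever inspects $S$ or reads from the new-facts component; facts are only appended to it (by $\LFact$), while the database $F$ and the fresh-fact multiset $F_\FrMark$ are read and written as global components that are \emph{not} part of the stack context. Hence, starting from $\{F,F',F_\FrMark,S[\vec{a}]^{\vec{v}|d}_R\}^d$, the suffix $S$ stays frozen at the bottom and $F'$ stays as a fixed summand of the new-facts component, at least until the portion of the stack above $S$ collapses to $\Ok$ or to the empty stack. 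The first step is to make this ``frozen context'' observation precise by a direct inspection of all of the rule schemas.

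Next I would set up the correspondence between the frames occurring in the two derivations. A frame in a $\cR_2$-derivation starting from $[]^{|d}_{\sigma(R)}$ is indexed by a DML subquery of $\sigma(R)$, which is $\sigma(R')$ for a subquery $R'$ of $R$, carrying an accumulated matching substitution $\{\vec{b}/\vec{w}\}$, where $\vec{w}$ lists the quantifier variables bound around $R'$ inside $R$. I would match such a frame $[\vec{b}]^{\vec{w}|d}_{\sigma(R')}$ of $\cR_2$ with the frame $[\vec{a},\vec{b}]^{\vec{v},\vec{w}|d}_{R'}$ of $\cR_1$ (and likewise for the marked, conditional, and iterator frame shapes): the substitution $\sigma=\{\vec{a}/\vec{v}\}$ has been absorbed into the index $\sigma(R')$ on the $\cR_2$-side and is carried in the superscript on the $\cR_1$-side, but both denote the same current substitution $\sigma\cup\{\vec{b}/\vec{w}\}$. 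Extending this map by appending $S$ at the bottom and prepending $F'$ to the new-facts component yields a relation between states, and one checks schema by schema that each $\cR_1$-step on such frames is mirrored by exactly the corresponding $\cR_2$-step, with identical effect on $F$, on $F_\FrMark$, and --- up to the fixed multiset $F'$ --- on the new-facts, and conversely. The $\LFact$, $\LEmpty$, $\LDummy$, $\Collapse$, $\LUnUnfl$, $\LUnFldOk$, and $\LUnFldEmpty$ cases are immediate; for the conditional frames $[\vec{v}\,|\,S']^{\vec{v}|d}_R$ one invokes Lemma~\ref{SubLemma} to match the embedded evaluation of $\phi$, noting that the lifting rules $\lambda^d$ carry the surrounding context along verbatim. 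An induction on derivation length (finite by the termination theorem for $\cR^\dml$) then shows the two derivations run in lock-step and halt simultaneously: $\cR_1$ reaches $\{G,F'\circ G',G_\FrMark,S\Ok\}^d$ exactly when $\cR_2$ reaches the state whose entire stack is $\Ok$, and $\cR_1$ reaches $\{F,F',G_\FrMark,S\}^d$ exactly when $\cR_2$ reaches the state whose entire stack is empty. Firing $\lambda_{\text{dml}}^\Ok$ and $\lambda_{\text{dml}}^\Empty$ on the $\cR_2$-side then produces $\New(G\circ G',G_\FrMark)$ and $\Fail(F,G_\FrMark)$, which are precisely the right-hand sides of claims~(1) and~(2).

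The main obstacle is the faithful bookkeeping for the two frame families that touch the global components. For the quantifier $\From P\mathbin{.}R$, the rule $\FromUnfl$ removes $\sigma'(P_?\circ P_0)$ from the iterator state and $\sigma'(P_0)$ from the database, bulk-updates the fresh facts by $\upsilon$, and stores the deleted $\sigma'(P_0)$ in the tentative frame; one must verify that the restoration-on-failure rule $\lambda^\Fld_{\From;\Empty}$ and the commit-on-success rule $\lambda^\Fld_{\From;\Ok}$, together with the iteration-status bit $B$ and the terminating rule $\FromEnd$ (via $\delta$), act identically in both systems. Since $F$ and $F_\FrMark$ are shared global components rather than part of the frozen context $S$, $F'$, these reads and writes are verbatim the same in $\cR_1$ and $\cR_2$, so the correspondence is preserved. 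The only genuine subtlety is the accompanying argument that a branch ends in $\Ok$ iff it committed some fact or $\dummy$ (so the new-facts strictly grew and are never retracted) and ends in the empty stack otherwise (so the database and new-facts are restored exactly to $F$ and $F'$); this is precisely what separates the success claim~(1) from the failure claim~(2). No idea beyond this tracking is required, which is why the verification, though tedious, is routine.
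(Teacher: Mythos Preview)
Your proposal is correct and is essentially the approach the paper intends: the paper offers no proof beyond the remark that the lemma ``can be trivially verified by examining the rule schemas,'' and what you have written is precisely that verification spelled out, with the right frozen-context observation (the stack suffix $S$ and the accumulated multiset $F'$ are untouched by every schema) and the right frame-to-frame correspondence $[\vec{b}]^{\vec{w}|d}_{\sigma(R')}\leftrightarrow[\vec{a},\vec{b}]^{\vec{v},\vec{w}|d}_{R'}$. One small wording fix: in the success case the new-facts component need not \emph{strictly} grow, since $\LDummy$ pushes $\Ok$ without adding a fact; the correct invariant is that $\Ok$ appears iff some $\LFact$ or $\LDummy$ fired along the branch, and that in the failure case neither fired so $F'$ is literally unchanged (no restoration is needed because nothing was ever appended).
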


As in the case of pure queries, we consider two DML queries equivalent if and only if  they can match
their results. We should not, however, distinguish results differing only by the
choice of fresh values.
Let $\cS_\FrMark$ denote the set of nominal sorts in $\Sigma_S$.
Let $\Nom(t)$ be the $\cS_\FrMark$-sorted set of nominal values contained in term $t$.
For any $\cS_\FrMark$-sorted bijection $\alpha:X\rightarrow Y$ between sets of nominal values
we denote by $\hat\alpha$ the natural extension of $\alpha$ to  terms $t$ such that
$\Nom(t)\subseteq X$. More precisely,
$\hat\alpha(x)=\alpha(x)$ if $x\in X$,  $\alpha(c)=c$ if
 $c$ is a  constant of non-nominal sort or a variable, and
$\hat\alpha(f(t_1,\ldots, t_n))= f(\hat\alpha(t_1),\ldots,\hat\alpha(t_n))$ if
$f(t_1,\ldots, t_n)$ is of non-nominal sort. With those notions we define equivalence on DML queries as follows:
\begin{definition}
Let $Q_1$ and $Q_2$ be two DML queries in $\QLang^\dml$.
We say that $Q_1$ is logically equivalent to $Q_2$, writing $Q_1\equiv Q_2$, if and only if,
for all ground substitutions $\sigma$ such that $\sigma(Q_1)$
and $\sigma(Q_2)$ are closed, all ground multisets of facts $F$, $G$, all
ground multisets of fresh facts $F_\FrMark$, $G_\FrMark$, and all $i\in\{1,2\}$
\begin{enumerate}
\item if $\Init_{\sigma(Q_i)}(F,F_\FrMark)\rightarrow^!\New(G,G_\FrMark)$
then there exist
multisets of fresh facts  $F^{'}_\FrMark$, $G^{'}_\FrMark$, multisets of facts $F'$, $G'$, and
an  $\cS_\FrMark$-sorted bijection $\alpha:\Nom(F)\cup\Nom(G)\rightarrow\Nom(F')\cup\Nom(G')$
such that  $F'=\hat\alpha(F)$,
 $G'=\hat\alpha(G)$, and
 $\Init_{\sigma(Q_{3-i})}(F',F^{'}_\FrMark)\rightarrow^!\New(G',G^{'}_\FrMark)$.
 \item if $\Init_{\sigma(Q_i)}(F,F_\FrMark)\rightarrow^!\Fail(F,G_\FrMark)$
then there exist
multisets of fresh facts  $F^{'}_\FrMark$, $G^{'}_\FrMark$, a multiset of facts $F'$, and
an  $\cS_\FrMark$-sorted bijection $\alpha:\Nom(F)\rightarrow\Nom(F')$
such that  $F'=\hat\alpha(F)$ and
 $\Init_{\sigma(Q_{3-i})}(F',F^{'}_\FrMark)\rightarrow^!\Fail(F',G^{'}_\FrMark)$.
\end{enumerate}
\end{definition}

The following result is an immediate consequence of Lemma~\ref{SubLemmaDML}:
\begin{lemma}
Logical equivalence on queries in $\QLang^\dml$ is an equivalence relation and a congruence, i.e., if
$\kappa$ is a position in a DML query $Q$  such that $Q|_\kappa$ is
a DML query, and $R\equiv Q|_\kappa$, then $Q\equiv Q[R]_\kappa$.
\end{lemma}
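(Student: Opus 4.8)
The plan is to follow the template already used for the corresponding statements about $\QLang^\cond$ and $\QLang^\query$, extracting everything from Lemma~\ref{SubLemmaDML} and paying the extra price demanded by fresh values. First I would verify that $\equiv$ is an equivalence relation. Reflexivity is witnessed by the identity $\cS_\FrMark$-sorted bijection, so that $F'=F$, $G'=G$ and the witnessing run is the original one. Symmetry is already built into the definition, which is phrased symmetrically by quantifying over $i\in\{1,2\}$ and referring to $Q_{3-i}$; one need only observe that the inverse bijection $\alpha^{-1}$ supplies the witness in the reverse direction. Transitivity follows by composing bijections: from $Q_1\equiv Q_2$ via $\alpha$ and $Q_2\equiv Q_3$ via $\beta$ one obtains $Q_1\equiv Q_3$ via $\beta\circ\alpha$, using that $\widehat{\beta\circ\alpha}=\hat\beta\circ\hat\alpha$ on terms whose nominal values lie in the domain of $\alpha$.

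For congruence, assume $R\equiv Q|_\kappa$; since $R$ and $Q|_\kappa$ share the same free variables in the surrounding binding context, any $\sigma$ closing $Q$ also closes $Q[R]_\kappa$, and by symmetry of $\equiv$ it suffices to match each terminating run of $\sigma(Q)$ by one of $\sigma(Q[R]_\kappa)$. Consider such a run $\Init_{\sigma(Q)}(F,F_\FrMark)\rightarrow^!\New(G,G_\FrMark)$, the $\Fail$ case being analogous. Within this run I would locate each moment at which the frame indexed by $Q|_\kappa$ with current substitution $\rho$ is processed. By Lemma~\ref{SubLemmaDML}, the net effect of that frame on the ambient state is exactly a standalone run $\{F'',\Empty,F''_\FrMark,[]^{|d}_{\rho(Q|_\kappa)}\}^d\rightarrow^!\New(\cdots)$ (or $\rightarrow^!\Fail(\cdots)$), where $F''$ and $F''_\FrMark$ are the current database and fresh facts at that moment and $\rho(Q|_\kappa)$ is closed. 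Since $R\equiv Q|_\kappa$, this standalone run may be replaced by one of $\rho(R)$ producing the same remaining-plus-added database (or the same restored database, on failure), up to an $\cS_\FrMark$-sorted bijection $\alpha_\rho$ identifying the nominal values created by the two runs. Splicing the replacement back into the context — again legitimated by Lemma~\ref{SubLemmaDML}, now read from right to left — produces a run of $\sigma(Q[R]_\kappa)$; accumulating the per-step bijections into a single $\alpha$ (identity on values present before a step, $\alpha_\rho$ on those created during it) yields a terminating run ending in $\New(\hat\alpha(G),G'_\FrMark)$, as required.

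The main obstacle, and the only place where this proof genuinely exceeds the $\QLang^\cond$ and $\QLang^\query$ cases, is the management of fresh values. Each invocation of the subquery may draw values from the current fresh facts, and $\rho(R)$ may consume and create different nominal values than $\rho(Q|_\kappa)$; moreover, if $\kappa$ lies under a $\From$-quantifier the subquery is evaluated once per matched tuple, so the per-step bijections $\alpha_\rho$ must be amalgamated into one global $\alpha$ while the freshness condition (Definition~\ref{FreshCondRem}) is preserved by every intermediate $\StateD$-term. The fact making this tractable is that Lemma~\ref{SubLemmaDML} isolates each subquery evaluation completely: it consumes precisely the current fresh facts and its outcome depends only on $\rho$ and those fresh facts, so successive invocations create disjoint pools of new nominal values, their bijections can be chosen with pairwise disjoint domains, and the amalgamation is unambiguous. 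I therefore expect the argument to reduce to routine bookkeeping once this disjointness is made explicit, justifying the description of the result as an immediate consequence of Lemma~\ref{SubLemmaDML}.
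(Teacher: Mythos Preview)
Your proposal is correct and takes essentially the same approach as the paper, which records the result as ``an immediate consequence of Lemma~\ref{SubLemmaDML}'' with no further argument. Your write-up is considerably more explicit than the paper's, and you correctly identify both the key tool (Lemma~\ref{SubLemmaDML}, used in both directions to extract and re-splice subquery evaluations) and the one genuinely new complication over the $\QLang^\cond$ and $\QLang^\query$ cases, namely the need to amalgamate the per-invocation nominal bijections when $\kappa$ sits under a $\From$-quantifier. One minor point you gloss over (as does the paper): the bijection $\alpha$ furnished by the definition of $\equiv$ need not fix the nominal values already present in the starting database, so your assertion that $\alpha_\rho$ can be taken to be the identity on pre-existing values really requires an appeal to Lemma~\ref{NominEquivLem} (nominal equivalence is a bisimulation) to transport the witnessing run back to the original starting state.
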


\begin{lemma}
\label{RenameDMLQuery}
For any closed DML query $Q$ in $\QLang^\dml$, and any renaming $\sigma$,
$Q\equiv\sigma(Q)$.
\end{lemma}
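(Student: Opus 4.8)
The plan is to exhibit a step-by-step correspondence between the rewrite sequences of $\cR_{\Sigma,\cD}^\dml(Q)$ and of $\cR_{\Sigma,\cD}^\dml(\rho(Q))$, where I write $\rho$ for the renaming (called $\sigma$ in the statement) to avoid clashing with the matching substitutions $\sigma$ appearing in the rule schemas. Since $\rho$ is an injective variable-to-variable map it preserves the binding structure recorded by $\closed(\_,\_)$, so $\rho(Q)$ is again closed, and $\rho(R)$ is a DML subquery of $\rho(Q)$ exactly when $R$ is a DML subquery of $Q$. First I would define a map $\Phi$ on $\StateD$-terms that relabels every frame index by $\rho$ --- replacing each indexing subquery $R$ by $\rho(R)$, each superscript variable list $\vec{v}$ by $\rho(\vec{v})$, and renaming the embedded condition stacks in conditional frames in the same way --- while leaving all \emph{data} untouched: the database multiset $F$, the partial result $F'$, the fresh facts $F_\FrMark$, and every tuple $\vec{a}$ of substituted values. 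In particular $\Phi(\Init_Q(F,F_\FrMark))=\Init_{\rho(Q)}(F,F_\FrMark)$, and $\Phi$ fixes each normal form $\New(G,G_\FrMark)$ and $\Fail(G,G_\FrMark)$, since no variable name occurs in these terms.

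Next I would check that $\Phi$ is a bisimulation between the two transition systems, i.e.\ that $t\to t'$ in $\cR_{\Sigma,\cD}^\dml(Q)$ iff $\Phi(t)\to\Phi(t')$ in $\cR_{\Sigma,\cD}^\dml(\rho(Q))$, by matching each rule instance with the $\rho$-image of its schema. This is routine for every schema that does not read the database (such as $\Collapse$, $\LDummy$, $\LUnUnfl$, the restore/commit rules $\lambda^\Fld_{\From;\Empty}$, $\lambda^\Fld_{\From;\Ok}$, and $\FromEnd$), because those rules depend only on the \emph{shape} of the indexing subquery, which $\rho$ preserves. The only schemas that read off a matching substitution are $\FromUnfl$ and the embedded condition rules descending from $\QuantUnfl$; here the key algebraic facts are that renaming commutes with modality extraction, $\rho(P)_m=\rho(P_m)$ for $m\in\{!,?,0,\FrMark\}$, and that if the current substitution in the $Q$-run is $\sigma=\{\vec{a}/\vec{v}\}$ then the current substitution in the $\rho(Q)$-run is $\sigma'=\{\vec{a}/\rho(\vec{v})\}$, so that $\sigma'\circ\rho=\sigma$ on $\Var(P)$. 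Consequently $\sigma'(\rho(P)_!\circ\rho(P)_?\circ\rho(P)_0)=\sigma(P_!\circ P_?\circ P_0)$, so the two iterator states are matched against \emph{identical} multisets and admit the very same matching substitutions $\{\vec{b}/\vec{w}\}$, up to the relabelling $\vec{w}\mapsto\rho(\vec{w})$ of the fresh pattern variables. The embedded evaluation of a condition $\phi$ inside a $\phi\Rightarrow R$ frame is handled by the same correspondence applied to $\cR_{\Sigma,\cD}^\cond$, i.e.\ by the argument already underlying the analogous renaming lemma for conditions.

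Finally, since $\Phi$ is a bisimulation fixing initial and normal forms, $\Init_Q(F,F_\FrMark)\rightarrow^!\New(G,G_\FrMark)$ holds iff $\Init_{\rho(Q)}(F,F_\FrMark)\rightarrow^!\New(G,G_\FrMark)$, and likewise with $\Fail$ in place of $\New$, with the \emph{same} $G$ and $G_\FrMark$. Hence both clauses of the definition of logical equivalence are satisfied by taking the identity $\cS_\FrMark$-sorted bijection $\alpha$ on $\Nom(F)\cup\Nom(G)$ (so that $F'=F$, $G'=G$, $F^{'}_\FrMark=F_\FrMark$, $G^{'}_\FrMark=G_\FrMark$), giving $Q\equiv\rho(Q)$. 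The part I expect to require the most care is the $\FromUnfl$ case: because of the non-compositional binding, whether a variable is free or bound in a frame depends on its context, so one cannot relabel frames in isolation --- the relabelling must be driven by the single global renaming $\rho$ of the closed query, and the bookkeeping ensuring that every superscript $\vec{v}$ and every pattern variable is renamed consistently is exactly what makes the matching-substitution argument go through.
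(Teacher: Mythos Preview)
The paper does not give a proof of this lemma: it is stated immediately after the congruence lemma with no argument, just as the analogous renaming lemma for $\QLang^\query$ (Lemma~\ref{RenameQuery}) is explicitly ``left to the reader''. Your proposal is therefore not competing with a paper proof but supplying the omitted one, and the approach you outline is the natural one and is correct. Defining a frame-index relabelling $\Phi$ driven by the global renaming $\rho$, checking that it commutes with every rule schema of $\cR_{\Sigma,\cD}^\dml$, and then reading off identical normal forms (so that the identity nominal bijection witnesses $Q\equiv\rho(Q)$) is exactly what the paper's machinery is set up for; the key observation $\rho(P)_m=\rho(P_m)$ together with $\sigma'\circ\rho=\sigma$ for $\sigma'=\{\vec{a}/\rho(\vec{v})\}$ is the right way to handle $\FromUnfl$ and the embedded $\QuantUnfl$ steps. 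One small addition worth making explicit is that the side condition of $\FromEnd$ is also preserved: since $\mu_{\rho(P),\rho(\vec{v})}(F'',\vec{a})=\yes$ iff $\mu_{P,\vec{v}}(F'',\vec{a})=\yes$ (both test matchability of $F''$ against the \emph{same} ground multiset $\{\vec{a}/\vec{v}\}(P_!\circ P_?\circ P_0)=\{\vec{a}/\rho(\vec{v})\}(\rho(P)_!\circ\rho(P)_?\circ\rho(P)_0)$), the conditional rule fires on $t$ iff it fires on $\Phi(t)$.
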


Lemma~\ref{LemmaClearSemDML}, proven similarly to Lemma~\ref{LemmaClearSem}, clarifies elements of rewriting semantics of DML queries in $\QLang^\dml$
(we leave $\From\_.\_$ evaluation where we are better off with
the rewriting definition).
\begin{lemma}
\label{LemmaClearSemDML}
For all ground multisets of facts $F$, $G$ and of fresh facts $F_\FrMark$, $G_\FrMark$, as well as
closed DML queries   $Q$, $Q_1$,
$Q_2$ and  all sentences $\phi$, the following holds:
\begin{enumerate}
\item If $\Init_Q(F, F_\FrMark)\rightarrow^!\Gamma$ then $\Gamma=\New(G, G_\FrMark)$
or $\Gamma=\Fail(F, G_\FrMark)$
 for some ground multiset of facts $G$ and ground multiset of fresh facts $G_\FrMark$.
 If $\Init_Q(F, F_\FrMark)\rightarrow^!\Fail(G, G_\FrMark)$
 then $F=G$.
\item If $\Init_{f}(F, F_\FrMark)\rightarrow^!\Gamma$ then $\Gamma=\New(F\circ f, F_\FrMark)$.
If $\Init_{\Empty}(F, F_\FrMark)\rightarrow^!\Gamma$ then $\Gamma=\Fail(F, F_\FrMark)$.
\eject
\item $\Init_{\phi\Rightarrow Q}(F, F_\FrMark)\rightarrow^!\New(G, G_\FrMark)$
iff
$\Init_\phi(F)\rightarrow^!\Res(\ttrue)$ and $\Init_Q(F, F_\FrMark)\rightarrow^!\New(G, G_\FrMark)$.
\item $\Init_{\phi\Rightarrow Q}(F, F_\FrMark)\rightarrow^!\Fail(F, G_\FrMark)$ iff
$\Init_\phi(F)\rightarrow^!\Res(\ffalse)$ or
$\Init_Q(F, F_\FrMark)\rightarrow^!\Fail(F, G_\FrMark)$.
\item $\Init_{Q_1\Con Q_2}(F, F_\FrMark)\rightarrow^!\Fail(F, G_\FrMark)$ iff
there exists a multiset of fresh facts $G^{'}_\FrMark$ such that
$\Init_{Q_1}(F, F_\FrMark)\rightarrow^!\Fail(F, G^{'}_\FrMark)$
and $\Init_{Q_2}(F, G^{'}_\FrMark)\rightarrow^!\Fail(F, G_\FrMark)$.
\item $\Init_{Q_1\Con Q_2}(F, F_\FrMark)\rightarrow^!\New(G, G_\FrMark)$
iff, for some ground multisets of facts $F'$, $F''$, $G'$, $G''$ such that
$G=G''\circ F'\circ F''$,  a ground multiset
of fresh facts $G_\FrMark^{'}$, and stacks $S,S'\in\{\Empty,\Ok\}$ where $S=\Ok$ or $S'=\Ok$, we have
$\Init_{Q_1}(F, F_\FrMark)\rightarrow^*\{G',F',G_\FrMark^{'},S\}^d$ and
$\Init_{Q_2}(G', G^{'}_\FrMark)\rightarrow^*\{G'',F'',G_\FrMark,S'\}^d$.
\end{enumerate}
\end{lemma}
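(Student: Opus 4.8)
The plan is to follow the template of the proof of Lemma~\ref{LemmaClearSem}: each clause is reduced to a direct inspection of the rule schemas of $\cR_{\Sigma,\cD}^\dml(Q)$, with Lemma~\ref{SubLemmaDML} supplying the decomposition of a compound query into its immediate subqueries. As standing preliminaries I would use that $\cR_{\Sigma,\cD}^\dml(Q)$ is terminating (so every $\Init_Q(F,F_\FrMark)$ has at least one normal form), and I would first check that every rule preserves the freshness condition of Definition~\ref{FreshCondRem} --- the only rule that modifies fresh facts is $\FromUnfl$, which draws $\imath^s_n$ and advances the counter through $\upsilon$ --- so that all reachable states and all normal forms are legitimate $\StateD$-terms.

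For point~1 I would give a progress argument: scanning the left-hand sides shows that the only irreducible $\StateD$-terms are those built with $\New$ or $\Fail$, because a state $\{F,F',F_\FrMark,S\}^d$ always exposes a redex --- if $S=\Empty$ then $\lambda_{\text{dml}}^\Empty$ fires, if $S=\Ok$ then $\lambda_{\text{dml}}^\Ok$, and otherwise a rule is selected by the shape of the top frame (and, for $\Collapse$ and the $\downarrow$-folding rules, by the frame beneath it). Together with termination this yields the first assertion. The claim that a failing run leaves the database unchanged, $F=G$, is then immediate from Lemma~\ref{SubLemmaDML}(2) instantiated with $R=Q$, empty $\vec v$, partial answer $\Empty$ and $S=\Empty$: that lemma already records the equivalence with the database component pinned to $F$. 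Point~2 is a deterministic two-step computation in each case ($\LFact$ then $\lambda_{\text{dml}}^\Ok$; $\LEmpty$ then $\lambda_{\text{dml}}^\Empty$).

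Points~3 and~4 start from $\LCondUnfl$, which embeds the stack $[]^{|c}_\phi$ inside the conditional frame. The $\lambda^d$ rules are in bijection with the stack-to-stack rules of $\cR_{\Sigma,\cD}^\cond(\phi)$, so the embedded computation mirrors a standalone evaluation of $\phi$ (this is Lemma~\ref{SubLemma} transported through the $\lambda^d$ wrapper): the inner stack reaches $\Res(\ttrue)$, equivalently $\Init_\phi(F)\rightarrow^!\Sat(\ttrue)$, exactly when $\phi$ can succeed, and symmetrically for $\ffalse$. From there $\LCondFldt$ resumes with $Q$ (apply Lemma~\ref{SubLemmaDML}) while $\LCondFldf$ discards the frame and fails with the database untouched. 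Point~5 begins, like point~6, with $\LUnUnfl$, which places the marked frame for $Q_2$ beneath the frame for $Q_1$. For the whole query to fail, evaluation of the $Q_1$-frame must fail (Lemma~\ref{SubLemmaDML}(2)), leaving no $\Ok$; then $\LUnFldEmpty$ unmarks the $Q_2$-frame, $Q_2$ is evaluated against the still-unchanged database $F$ from the threaded fresh-fact state $G'_\FrMark$, and overall failure forces $Q_2$ to fail as well, giving the stated iff.

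The delicate clause is point~6. Its subtlety is that $\_\Con\_$ is not commutative for DML queries: deletions by $Q_1$ take effect immediately, so $Q_2$ must be evaluated against the reduced database that $Q_1$ leaves behind, not against the original $F$. I would therefore track the mutable database and the fresh-fact counter through two phases --- $F$ then $G'$ during $Q_1$, and $G'$ then $G''$ during $Q_2$ --- accumulate the inserted facts as $F'\circ F''$ in the partial-answer component, and show that the concluding $\lambda_{\text{dml}}^\Ok$ step returns $\New(G''\circ F'\circ F'',G_\FrMark)$. This is precisely why the equivalence must be phrased through the penultimate states $\{G',F',G'_\FrMark,S\}^d$ and $\{G'',F'',G_\FrMark,S'\}^d$ rather than through $\New$/$\Fail$: the intermediate database $G'$ is exactly what feeds into $Q_2$, and $Q_2$'s own insertions $F''$ must be isolated from $Q_1$'s $F'$. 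I expect this step to need a mild refinement of Lemma~\ref{SubLemmaDML} exposing these penultimate states (with $S,S'\in\{\Empty,\Ok\}$). The success marking is governed by $\LUnFldOk$, which commutes a produced $\Ok$ past the marked frame, together with $\Collapse$; these jointly enforce that the combined branch is successful iff $S=\Ok$ or $S'=\Ok$. Getting this interleaving and the $\Ok$-bookkeeping right --- and justifying that the penultimate-state formulation is the correct and essentially unavoidable one --- is where the real content lies; once Lemma~\ref{SubLemmaDML} is in hand the remaining clauses are routine.
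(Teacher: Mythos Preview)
Your proposal is correct and follows essentially the same approach as the paper, which merely states that the lemma is ``proven similarly to Lemma~\ref{LemmaClearSem}'' by inspecting the rule schemas and invoking the sublemma (here Lemma~\ref{SubLemmaDML}); your write-up is in fact considerably more explicit than the paper's, particularly in isolating why point~6 must be phrased via the penultimate states and in tracing the $\Ok$-bookkeeping through $\LUnFldOk$ and $\Collapse$. One small remark: your appeal to Lemma~\ref{SubLemmaDML}(2) for the ``$F=G$ on failure'' clause of point~1 is morally right but slightly elliptic, since that lemma records the equivalence with $F$ already pinned rather than ruling out other $\Fail$-targets; the missing step is the easy observation (by induction on the run, or on the structure of $Q$) that reaching an empty stack from $\Init_Q(F,F_\FrMark)$ forces the database and partial-answer components to be $F$ and $\Empty$---but the paper does not spell this out either.
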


\begin{lemma}
\label{EquivLemmaDML}
The following logical equivalences hold between queries in $\QLang^\dml$:
\begin{gather*}
\Empty\Con Q\equiv Q,\quad
Q\Con\Empty\equiv Q,\quad
Q_1\Con(Q_2\Con Q_3)\equiv (Q_1\Con Q_2)\Con Q_3,\\
\False\Rightarrow R\equiv \Empty,\quad
\neg\False\Rightarrow R\equiv R,\quad
\exists P\mathbin{.}\Empty\equiv \Empty
\end{gather*}
\end{lemma}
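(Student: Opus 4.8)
The plan is to treat the six equivalences separately, in each case reducing the claim to the behavioural characterisations in Lemma~\ref{LemmaClearSemDML} together with the defining rewrite rules, and then verifying the bisimulation-like definition of $\equiv$. A useful preliminary observation is that in every one of these equivalences the two sides, when they succeed (resp.\ fail), produce literally the same database $G$ and the same multiset $F'$ of new facts; only the fresh-fact store may differ, and since both the input and the output fresh-fact multisets are existentially quantified in the definition of $\equiv$, the required $\cS_\FrMark$-sorted bijection $\alpha$ can always be taken to be the identity on $\Nom(F)\cup\Nom(G)$. Thus the bulk of the work is simply matching the success/failure behaviour of the two sides.

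For the three laws involving $\Con$, I would first note from Lemma~\ref{LemmaClearSemDML}, p.~2, that $\Init_\Empty(F,F_\FrMark)\rightarrow^!\Fail(F,F_\FrMark)$, i.e.\ $\Empty$ always fails leaving both $F$ and $F_\FrMark$ untouched. Feeding this into Lemma~\ref{LemmaClearSemDML}, p.~5 (failure of $\Con$) and p.~6 (success of $\Con$) collapses the contribution of the $\Empty$ operand, yielding $\Empty\Con Q\equiv Q$ and $Q\Con\Empty\equiv Q$ at once. For associativity I would apply p.~5 and p.~6 twice to each parenthesisation: both $(Q_1\Con Q_2)\Con Q_3$ and $Q_1\Con(Q_2\Con Q_3)$ unfold into the same left-to-right sequential execution of $Q_1$, then $Q_2$, then $Q_3$, with the fresh-fact store and the database threaded through in the same order, the new facts combined by the associative--commutative $\_\circ\_$, and the ``at least one $\Ok$'' success condition evaluating to ``at least one of the three branches succeeds'' in either grouping; this last check is the only nontrivial bookkeeping and is routine.

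For the conditional laws I would use the condition semantics. By Lemma~\ref{LemmaClearSem}, p.~2, $\Init_\False(F)\rightarrow^!\Sat(\ffalse)$ always, so by Lemma~\ref{LemmaClearSemDML}, p.~3 and p.~4, $\False\Rightarrow R$ can never yield $\New$ and always yields $\Fail(F,\cdot)$; comparing with p.~2 this is exactly the behaviour of $\Empty$, whence $\False\Rightarrow R\equiv\Empty$. Dually, $\Init_{\neg\False}(F)\rightarrow^!\Sat(\ttrue)$ (Lemma~\ref{LemmaClearSem}, p.~2 and~3), so p.~3 and p.~4 give $\Init_{\neg\False\Rightarrow R}(F,F_\FrMark)\rightarrow^!\New(G,G_\FrMark)$ iff $\Init_R(F,F_\FrMark)\rightarrow^!\New(G,G_\FrMark)$, and likewise for $\Fail$; thus $\neg\False\Rightarrow R\equiv R$.

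The remaining law $\exists P\mathbin{.}\Empty\equiv\Empty$ is the only real obstacle, since Lemma~\ref{LemmaClearSemDML} deliberately omits a clause for $\From\_.\_$, so it must be argued directly from the rewrite rules. The key facts are: well-typedness of $\From P\mathbin{.}\Empty$ forces $P$ to be terminating, i.e.\ $P_?\neq\Empty$, so each application of $\FromUnfl$ strictly shrinks the iterator state and the computation terminates. Every iteration then executes $\Empty$, which by $\LEmpty$ empties its frame without ever producing an $\Ok$; consequently the restoring rule $\lambda^\Fld_{\From;\Empty}$ (and never $\lambda^\Fld_{\From;\Ok}$) fires, returning the deleted facts $\sigma(P_0)$ to both the database and the iterator state, and the iteration status stays $\ffalse$ throughout. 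When no further match exists, $\FromEnd$ replaces the iterator frame by $\delta(\ffalse)=\Empty$, and since all removed facts have been restored the database is exactly $F$; hence $\Init_{\From P\mathbin{.}\Empty}(F,F_\FrMark)\rightarrow^!\Fail(F,G_\FrMark)$ for some $G_\FrMark$. The store $G_\FrMark$ differs from $F_\FrMark$ only by the $\upsilon$-bumps accrued during iteration, but this is absorbed by the existential over output fresh facts in the definition of $\equiv$, and $\alpha=\mathrm{id}$ suffices because $F$ is unchanged. Comparing with $\Init_\Empty(F,F_\FrMark)\rightarrow^!\Fail(F,F_\FrMark)$ from p.~2 then establishes both directions of the equivalence.
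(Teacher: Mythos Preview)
Your proposal is correct and follows essentially the approach the paper intends: the paper states the lemma without proof, remarking only that it is ``very similar to Lemma~\ref{EquivLemmaQueries}'', and both lemmas are meant to be read off from the behavioural characterisations (here Lemma~\ref{LemmaClearSemDML}) together with the defining rewrite rules for the one case ($\From P\mathbin{.}\Empty$) not covered there. Your handling of that last case---observing that well-typedness forces $P$ to be terminating so $P_?\neq\Empty$, that $\LEmpty$ never produces $\Ok$ so $\lambda^\Fld_{\From;\Empty}$ always restores $P_0$ and the iteration status stays $\ffalse$, and that the residual change to the fresh-fact store is absorbed by the existential quantification in the definition of $\equiv$---is exactly the argument the paper leaves implicit.
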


Lemma~\ref{EquivLemmaDML} is very similar to Lemma~\ref{EquivLemmaQueries}, except that
$\_\Con\_$ is not commutative.
$Q_1\Con Q_2$ may be non-equivalent with $Q_2\Con Q_1$ if, say, $Q_1$ deletes a fact
which is referred to in some pattern in $Q_2$.

\medskip
We need to generalize the notion of confluence, lest
rewriting paths leading to terms differing only by distinct choices of fresh values (as in the next example) are to be considered non-convergent.
\begin{example}
\label{ExampleTrivialConfl}
Let $\sort{I}$ be a nominal sort. Let $r:\Nat\rightarrow\Fact$, $s:\sort{I}\;\Nat\rightarrow\Fact$.
Consider DML query
$Q:=\From \Fresh{C_{\sort{I}}(x)}\circ\Ret{r(y)}\mathbin{.}s(x,y)$, and let $F:=r(1)\circ r(2)$.
Then
\begin{multline*}
\Init_Q\bigl(F,C_{\sort{I}}(\imath^{\sort{I}}_0)\bigr)
\rightarrow^*
\bigl\{F,\Empty, C_{\sort{I}}(\imath^{\sort{I}}_0),[F||\ffalse]^{|d}_Q\bigr\}^d
\xrightarrow{\FromUnfl}
\bigl\{F,\Empty, C_{\sort{I}}(\imath^{\sort{I}}_1),[r(2)||\ffalse,\Empty]^{|d}_Q
[\imath^{\sort{I}}_0,1]^{x,y|d}_{s(x,y)}\bigr\}^d\\
\rightarrow^*
\bigl\{F,s(\imath^{\sort{I}}_0,1), C_{\sort{I}}(\imath^{\sort{I}}_1),[r(2)||\ttrue]^{|d}_Q\bigr\}^d
\xrightarrow{\FromUnfl}
\bigl\{F,S(\imath^{\sort{I}}_0,1), C_{\sort{I}}(\imath^{\sort{I}}_2),[\Empty||\ttrue]^{|d}_Q
[\imath^{\sort{I}}_1,2]^{x,y|d}_{s(x,y)}\bigr\}^d\\
\rightarrow^*
\New\bigl(F\circ s(\imath^{\sort{I}}_0,1)\circ s(\imath^{\sort{I}}_1,2), C_{\sort{I}}(
\imath^{\sort{I}}_2)\bigr),
\end{multline*}
and, if we match $r(1)$ and $r(2)$ in reverse order in applications of $\FromUnfl$ rule, then
\begin{equation*}
\Init_Q\bigl(F,C_{\sort{I}}(\imath^{\sort{I}}_0)\bigr)
\rightarrow^*
\New\bigl(F\circ s(\imath^{\sort{I}}_0,2)\circ s(\imath^{\sort{I}}_1,1), C_{\sort{I}}(
\imath^{\sort{I}}_2)\bigr).
\end{equation*}
\end{example}

Here we define an equivalence relation  on terms of sort $\StateD$ which is a bisimulation:
\begin{definition}
We say that term $t_1$ is nominally equivalent to term $t_2$, in which case we write
$t_1\equiv_\FrMark t_2$,
if and only if there exists a bijection $\alpha:\Nom(t_1)\rightarrow\Nom(t_2)$
such that $\hat\alpha(t_1)=t_2$.
\end{definition}
\begin{lemma}
\label{NominEquivLem}
Nominal equivalence is an equivalence relation.  When restricted
to $\StateD$ terms satisfying the freshness condition (Definition~\ref{FreshCondRem}),  it is also a bisimulation
on $\cR_{\Sigma, \cD}^\dml(Q)$, for all $Q$.
\end{lemma}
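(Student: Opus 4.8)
The plan is to prove the two assertions of the lemma in turn: first that $\equiv_\FrMark$ is an equivalence relation, and then that, restricted to the $\StateD$-terms satisfying the freshness condition (Definition~\ref{FreshCondRem}), it is a bisimulation for $\to_{\cR_{\Sigma,\cD}^\dml(Q)}$. The equivalence part I would dispatch directly from the definition, using that $\widehat{(\cdot)}$ is functorial in its bijection and acts as a term homomorphism. Reflexivity follows from $\alpha=\mathrm{id}_{\Nom(t)}$, for which $\hat\alpha(t)=t$. For symmetry, if $\hat\alpha(t_1)=t_2$ with $\alpha:\Nom(t_1)\to\Nom(t_2)$ a bijection, then $\Nom(t_2)=\alpha(\Nom(t_1))$, so $\alpha^{-1}$ is a bijection and $\widehat{\alpha^{-1}}(t_2)=t_1$. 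Transitivity follows from $\widehat{\beta\circ\alpha}=\hat\beta\circ\hat\alpha$. None of these steps uses the freshness condition.

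The substantial claim is the bisimulation. Since $\equiv_\FrMark$ is symmetric it suffices to show: whenever $t_1\equiv_\FrMark t_2$ via a bijection $\alpha$ with $\hat\alpha(t_1)=t_2$, both $t_i$ satisfying the freshness condition, and $t_1\to t_1'$, there is $t_2'$ with $t_2\to t_2'$ and $t_1'\equiv_\FrMark t_2'$. First I would record two preliminary facts. (i) Rewriting preserves the freshness condition: the only rule that touches the counters is $\FromUnfl$, and it resets a counter $C_s(\imath^s_n)$ to $C_s(\imath^s_{n+1})$ while turning $\imath^s_n$ into a used value, so the new counter stays strictly above every used value; thus $t_1'$ (and the $t_2'$ produced below) again satisfy the condition. (ii) Being a bijective renaming of structureless nominal constants, $\hat\alpha$ is a $\Sigma$-homomorphism that preserves $=_A$ and the outcome of every equality test (because $\alpha$ is injective) and of every simplification not involving the counter increment $\upsilon$; the sole place where $\hat\alpha$ fails to commute with the rewrite machinery is the application of $\upsilon$ inside $\FromUnfl$.

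With these in hand I would split on the rule applied in $t_1\to t_1'$. For every rule other than $\FromUnfl$ no counter is altered and no nominal value outside $\Nom(t_1)$ is created (a literal fact $\sigma(f)$ reuses only values already present in the current substitution, and condition evaluation manipulates only Booleans and existing facts); hence $\hat\alpha$ commutes with the step, giving $t_2=\hat\alpha(t_1)\to\hat\alpha(t_1')=:t_2'$ by the same rule and $t_1'\equiv_\FrMark t_2'$ via the restriction of $\alpha$ to $\Nom(t_1')$. The delicate case is $\FromUnfl$: if the matched fresh fact in $t_1$ is $C_s(\imath^s_n)$, then $t_1'$ uses $\imath^s_n$ and resets the counter to $\imath^s_{n+1}$, whereas in $t_2$ this fact is $C_s(\alpha(\imath^s_n))=C_s(\imath^s_{n'})$, so $t_2'$ uses $\imath^s_{n'}$ and resets to $\imath^s_{n'+1}$. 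Because matching is modulo $A$ and $\hat\alpha$ preserves multiset structure, the $\FromUnfl$ match on $t_1$ transfers to a matching step on $t_2$, so $t_2\to t_2'$ holds. By (i) applied to $t_1$ the value $\imath^s_{n+1}$ occurs nowhere else in $t_1$, so $\Nom(t_1')=\Nom(t_1)\uplus\{\imath^s_{n+1}\}$, and symmetrically $\imath^s_{n'+1}\notin\Nom(t_2)$. I would then extend $\alpha$ to $\alpha'$ by sending the new counter value $\imath^s_{n+1}$ to $\imath^s_{n'+1}$ (repeating this for each fresh fact in $P_\FrMark$, which the freshness condition forces to be of pairwise distinct sorts), obtaining a bijection $\Nom(t_1')\to\Nom(t_2')$ with $\hat{\alpha'}(t_1')=t_2'$: the reused value $\imath^s_n\mapsto\imath^s_{n'}$ is already handled by $\alpha$, the new counter by the extension, and everything else by $\alpha$.

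I expect the $\FromUnfl$ case to be the main obstacle, precisely because it is the one step where $\upsilon$ makes $\hat\alpha$ fail to commute with rewriting. The crux is to use both freshness hypotheses to certify that the incremented counters are genuinely new in their respective terms, so that re-choosing the bijection on exactly these values absorbs the mismatch introduced by $\upsilon$; everything else reduces to the observation that $\hat\alpha$ is a homomorphism commuting with $=_A$ and with the remaining simplifications.
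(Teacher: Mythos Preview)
Your proposal is correct. The paper does not actually give a proof of this lemma; immediately after the statement it says ``We leave an easy proof of Lemma~\ref{NominEquivLem} to the reader.'' Your argument supplies precisely what the paper omits: the equivalence part is routine, and for the bisimulation you correctly identify $\FromUnfl$ as the only rule where $\hat\alpha$ fails to commute with the rewrite (because of $\upsilon$), and you use the freshness condition on \emph{both} $t_1$ and $t_2$ to guarantee that the incremented counter values $\imath^s_{n+1}$ and $\imath^s_{n'+1}$ are genuinely new, so that the extension $\alpha'$ is a well-defined bijection $\Nom(t_1')\to\Nom(t_2')$. This is exactly the point the paper's subsequent counterexample is meant to illustrate: without freshness on one side, the incremented counter may collide with an existing value and the extension fails. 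Your observation that the freshness condition forces at most one $C_s$ per nominal sort $s$ in the state (hence the matched fresh facts are of pairwise distinct sorts) is also correct and needed to make the extension unambiguous when $P_\FrMark$ contains several fresh facts.
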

We leave an easy proof of Lemma~\ref{NominEquivLem} to the reader.
The restriction to terms satisfying  the freshness condition is necessary for nominal equivalence
being a bisimulation, as demonstrated below:
\begin{example}
Let $\sort{I}$ be a nominal sort. Let $r:\sort{I}\rightarrow\Fact$, $s:\sort{I}\;\sort{I}\rightarrow\Fact$.
Define $F:=r(\imath^{\sort{I}}_1)\circ r(\imath^{\sort{I}}_2)$. Consider a DML query
$Q:=\From\Fresh{C_{\sort{I}}(x)}\circ\Ret{r(y)}\mathbin{.}(\{x=y\}\Rightarrow s(x,y)).$
Let $t_1:=\Init_{Q}(F, C_{\sort{I}}(\imath^{\sort{I}}_0))$
and $t_2:=\Init_{Q}(F, C_{\sort{I}}(\imath^{\sort{I}}_3))$.
Term $t_1$ does not satisfy the freshness condition (Definition~\ref{FreshCondRem}). It is
immediate that $t_1\equiv_\FrMark t_2$,
 $t_1\rightarrow^!t_3$, where
 $t_3:=\New(s(\imath^{\sort{I}}_1,1)\circ F,C_{\sort{I}}(\imath^{\sort{I}}_2))$,
 but the only normal form of  $t_2$ is $t_4:=\Fail(F, C_{\sort{I}}(\imath^{\sort{I}}_5))$ and $t_3\not\equiv_\FrMark t_4$.
\end{example}

We now define a
class of queries $Q$ for which $\cR_{\Sigma, \cD}^\dml(Q)$
is confluent modulo nominal equivalence.

\begin{definition}
Let $Q$ be a DML query. We say that $Q$ has {\em no deletion conflicts}
if and only if  for each DML subquery $\From P\mathbin{.}R$ of $Q$,
 and any subterm $f:\Fact$ of $R$ (resp. $P$) occurring inside $\Del{\_}$,
  $P$ (resp. $R$)
has no subterm $f'$ occurring inside
$\Del{\_}$, $\Keep{\_}$ or $\Ret{\_}$ such that $f$ and $f'$ are unifiable.
\end{definition}

\begin{definition}
Let $Q$ be a DML query in $\QLang^\dml$. $Q$ is called {\em deterministic} if it has no deletion conflicts
and
all quantification patterns in $Q$
(including those inside subterms which are conditions) contain
only single facts with unique matching property (but may contain any number of fresh facts).
\end{definition}

In Example~\ref{DeterministicNonconfluentEx} we shown why multiple facts in patterns lead to non-confluence, and as a result, to non-deterministic evaluation of conditions (and queries). However, we have not previously considered deletion conflicts (as they are specific to DML queries). The following example shows why deletion conflicts can prevent confluence:
\begin{example}
Consider the following DML query:
\begin{equation}
Q:=\From \Del{f(x)}\mathbin{.}\bigl(\Ok\Con (\From \Ret{f(1)}\mathbin{.} h(x))\bigr).
\end{equation}
Observe that all quantification patterns in $Q$ consist of single facts, but $Q$ does have deletion conflicts (facts in both patterns are unifiable, and one of them is $\Del{\_}$-pattern which has the second one in its scope). Since the first pattern ($\Del{f(x)}$) is semi-terminating but not terminating, to ensure that the DML query
in its scope is success assured it is of the form $\Ok\Con\_$.

Now, let $F:=f(1)\circ f(2)$. It is easy to see that executing $Q$ against $F$ removes from $F$ both $f$-facts and either adds a single fact $h(2)$ or nothing depending on whether
pattern $\Del{f(x)}$ first matches $f(2)$ (which makes it possible for the subquery $\From \Ret{f(1)}\mathbin{.} h(x)$ to succeed then and return $h(2)$) or $f(1)$ (which causes all executions of subquery $\From \Ret{f(1)}\mathbin{.} h(x)$ to fail).
\end{example}

The following theorem states that while evaluation of a deterministic DML query is not itself deterministic, but its results are.

\begin{theorem}
\label{ConfluentPartDML}
Let $Q$ be a deterministic query in $\QLang^\dml$. Then
$\cR_{\Sigma,\cD}^\dml(Q)$ is confluent up to a nominal equivalence. In particular, given  ground multisets of facts $F$,
and  of fresh facts $F_\FrMark$,
there is a unique (up to nominal equivalence) term $t$ of the form
$\New(G, G_\FrMark)$ or $\Fail(F, G_\FrMark)$ such that
$\Init_Q(F, F_\FrMark)\rightarrow^!t$.
\end{theorem}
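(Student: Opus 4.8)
The plan is to lift the proofs of Theorems~\ref{ConfluentPartCond} and~\ref{ConfluentPartQuery} to the setting modulo nominal equivalence. Since semiconfluence modulo an equivalence implies confluence modulo that equivalence, and since $\equiv_\FrMark$ is a bisimulation on the $\StateD$-terms satisfying the freshness condition (Lemma~\ref{NominEquivLem}), it suffices to prove semiconfluence modulo $\equiv_\FrMark$ at every term reachable from $\Init_Q(F, F_\FrMark)$. I would begin by noting that $\Init_Q(F, F_\FrMark)$ satisfies the freshness condition (Definition~\ref{FreshCondRem}) and that each rule schema preserves it --- the only rule manipulating fresh facts is $\FromUnfl$, which replaces a matched $P_\FrMark$ by $\upsilon(P_\FrMark)$ while emitting drawn values that remain strictly below the updated counter --- so the bisimulation hypothesis of Lemma~\ref{NominEquivLem} is in force at every step. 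The frames whose rewriting can branch in one step are the DML iterator frames $[F'|\vec{a}|B]^{\vec{v}|d}_{\From P\mathbin{.}R}$, where $\FromUnfl$ may select different facts to match, together with the condition stacks embedded in conditional frames; the latter converge by Theorem~\ref{ConfluentPartCond} since $Q$ is deterministic, so the iterator frame is the one substantive case. I would therefore argue, by induction on the complexity of the formula indexing the top frame, that every reducible term admits a completion --- unique up to $\equiv_\FrMark$ --- that each rewrite path reaches or can be extended to reach; semiconfluence then follows exactly as in Theorem~\ref{ConfluentPartQuery}.

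For the non-iterator constructors the inductive step is routine: $(\Empty,\sigma)$-, $(f,\sigma)$-, $(\dummy,\sigma)$-, and $(R_1\Con R_2,\sigma)$-frames each admit a unique applicable rule, and conditional frames $[\vec{a}\,|\,S]^{\vec{v}|d}_R$ are resolved by the confluence of deterministic conditions (Theorem~\ref{ConfluentPartCond}), which fixes whether the embedded computation yields $\Res(\ttrue)$ or $\Res(\ffalse)$; all of these are then closed off using Lemmas~\ref{SubLemmaDML} and~\ref{LemmaClearSemDML}. In the iterator case the determinism hypothesis forces every quantification pattern to be either $\Ret{f}$ or $\Del{f}$, optionally composed with a group $\Fresh{G_\FrMark}$ of fresh facts, where $f$ has the unique matching property (Definition~\ref{UniqMatchPropDef}). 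Hence each fact $g$ in the database that matches $\{\vec{a}/\vec{v}\}(f)$ determines a unique extending substitution $\sigma_g$, and the only nondeterminism remaining in $\FromUnfl$ is the \emph{order} in which such facts are consumed, together with the particular fresh values dispensed by $\upsilon$.

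The crux is to show that this residual nondeterminism is invisible to $\equiv_\FrMark$. The no-deletion-conflicts hypothesis guarantees that no execution of $\sigma_g(R)$ deletes a fact unifiable with $f$, and, in the $\Del{f}$ case, that $P$ does not re-match its own deletions; therefore every not-yet-consumed fact that $P$ could match is still present in the database when the iterator reaches it, so the multiset of facts ultimately matched is independent of the order of consumption. Applying the inductive hypothesis to each $\sigma_g(R)$ determines, up to $\equiv_\FrMark$, both its contribution to the accumulated answer $F'$ and the net deletions it makes to the database; since additions combine by (commutative) multiset union and the deletions are pairwise non-interfering, any two orderings reach terms agreeing up to $\equiv_\FrMark$ on both the answer and the database. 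The only remaining discrepancy between orderings is which fresh value $\upsilon$ hands to which matched fact, and --- precisely as in Example~\ref{ExampleTrivialConfl} --- this is absorbed by an $\cS_\FrMark$-sorted bijection $\alpha$ on nominal values, i.e.\ collapsed by $\equiv_\FrMark$. For $\Del{f}$ I would additionally invoke success-assuredness of $R$, so that each step permanently deletes its matched fact and the restore rules $\lambda^\Fld_{\From;\Empty}$, $\lambda^\Fld_{\From;\Ok}$ never undo a match, keeping the deletion bookkeeping order-independent.

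I expect the principal obstacle to be the interplay between fact deletion and iteration order. One must verify that no-deletion-conflicts is \emph{exactly} strong enough: it has to prevent the deletions performed by one iteration of $\sigma_g(R)$ --- including those hidden inside nested quantifiers of $R$ --- from silently removing a fact that a later iteration of the same quantifier, or of $P$ itself, would otherwise match, for only then is the processed multiset ordering-independent. Equally delicate is threading the fresh-fact accounting through the whole iteration at once, so that a single coherent bijection $\alpha$ simultaneously reconciles all the per-step fresh-value choices of the two competing orderings, rather than merely reconciling them step by step; squaring this global bijection with the tentative-frame restore machinery is the bookkeeping the proof must handle with care.
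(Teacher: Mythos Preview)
Your proposal follows essentially the same approach as the paper's proof: lift the structure of Theorem~\ref{ConfluentPartQuery} to the DML setting, absorb the non-confluence coming from fresh-value choices via nominal equivalence (Lemma~\ref{NominEquivLem}), and use the no-deletion-conflicts hypothesis together with success-assuredness of the body under a $\Del{\_}$-pattern to argue that deletions do not disturb the order-independence of iterator evaluation. Your write-up is in fact considerably more detailed than the paper's brief sketch, and you correctly isolate as the ``principal obstacle'' exactly the point the paper treats most tersely---namely, that deletions performed inside one iteration of $\sigma_g(R)$ must not interfere with later iterations---which is precisely where both arguments place the weight of the no-deletion-conflicts hypothesis.
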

\begin{proof}
The only significant difference between the proof of this theorem and Theorem~\ref{ConfluentPartQuery}
is the presence of deletions and fresh facts. The non-confluence introduced by
fresh facts can be absorbed with nominal equivalence.
Since $Q$ has no deletion conflicts, when a
DML subquery $\From P\mathbin{.}R$ is executed, neither
deletion of facts through $P$ influences execution of $R$ nor execution of $R$
decreases the pool of facts available for matching with $P$.
Moreover, if $P$ contains $\Del{f}$
for some fact $f$, then $f$ is the only fact in $P$, hence $R$ cannot fail,
facts deleted through $P$ are never returned, and $\Del{f}$ behaves like $\Ret{f}$.
\end{proof}

Let us finish this section with the following remark about expressibility of $\QLang^\dml$:
\begin{remark}
A typical formalization  of database updates is to use  pairs of queries
which define facts to be, respectively,
 deleted from, and added to the current database. This approach can be emulated in $\QLang^\dml$,
 using multiple DML queries executed in a sequence. First, let us extend the signature $\Sigma$ with
 function symbols $f^d$ and $f^a$ for each fact constructor $f$.
 Let $Q_d$ and $Q_a$ be queries in $\QLang^\query$ which return sets of facts to be
 deleted and added, respectively, to the database. We assume that $Q_d$ and $Q_a$ contain
 no subterms of the form $f^d(\vec{t})$ or $f^a(\vec{t})$.
 Let $\hat{Q}_d$ and $\hat{Q}_a$ be the same as $Q_d$ and $Q_a$, respectively, except that
 all subqueries $f(\vec{t})$ of sort $\Fact$ are replaced, respectively, with
 $f^d(\vec{t})$ and $f^a(\vec{t})$. Then to update the database
 we execute  the following DML queries (in this order):
 \begin{gather*}
 \hat{Q}_d,\quad\hat{Q}_a,\quad
 \From\Ret{f_1^d(\vec{v}^1)}\mathbin{.}\From\Del{f_1(\vec{v}^1)}\mathbin{.}\Ok,
 \ldots,
 \From\Ret{f_m^d(\vec{v}^m)}\mathbin{.}\From\Del{f_m(\vec{v}^m)}\mathbin{.}\Ok,
 \\
 \From\Del{f_1^d(\vec{v}^1)}\mathbin{.}\Ok,\ldots,
  \From\Del{f_m^d(\vec{v}^m)}\mathbin{.}\Ok,\quad
  \From\Del{f_1^a(\vec{v}^1)}\mathbin{.}f_1(\vec{v}^1),\ldots,
   \From\Del{f_m^a(\vec{v}^m)}\mathbin{.}f_m(\vec{v}^m),
 \end{gather*}
 where $f_1,\ldots, f_m$ are fact constructors occurring in $Q_d$ and $Q_a$.
 Thus, because of Theorem~\ref{RelAlgTheorem} we can express any {\em relational}
 database update using multiple DML queries in $\QLang^\dml$.
 \end{remark}

\section{Reachability analysis of data-centric business processes}
\label{ExampleActionsSection}

In this section we demonstrate the application of $\QLang^\dml$ in  specification and analysis
of data-centric business processes. First, we describe a general
reachability and simulation framework, and then devote the rest of the section to an extended example
specification.

So far, we have specified the rules for execution of a single DML expression. A business process, in general, executes a sequence of DML expressions according to some orchestration rules. A simple example of such rules which we use here, appropriate for a data driven process, is that if a DML expression can be executed successfully then it can be chosen
(non-deterministically) as the next command to be executed.
Usually  (c.f., \cite{hull2011business}) such data modifying operations are launched in response to some events, such as user actions which also provide input parameters for the command. In turn, their execution may trigger further events. Here, taking inspiration from  \cite{abiteboul1998relational,abiteboul2000relational}, we interpret some of the facts as triggering events, user input, and output events (we describe this in more detail later as a part of the example). This simplifies the simulation.

Thus, we specify a business process simply as a finite set $\Gamma$ of DML expressions in $\QLang^\dml$.
During simulation, at each ``business step'' a DML expression is chosen non-deterministically and is executed. Unsuccessful execution simply leaves the database of facts unchanged. Alternatively, during reachability analysis which performs a breadth-first search through all possible evolutions of the process it is more efficient to make the system stuck on unsuccessful step. This prunes spurious branches in search tree.

%\medskip
More precisely, a set of DML expressions
$\Gamma$ determines a rewriting system $\cR_{\Sigma, \cD}(\Gamma)$ defined to be the union of $\cR_{\Sigma, \cD}^\dml(Q)$'s, $Q\in\Gamma$,
 augmented with two rule schemas
\begin{equation}
\DmlNew_Q:\New(F, F_\FrMark)\Rightarrow\Init_Q(F, F_\FrMark),\quad
\DmlFail_Q:\Fail(F, F_\FrMark)\Rightarrow\Init_Q(F, F_\FrMark),
\end{equation}
for all $Q\in\Gamma$.

\medskip
Rule schema $\DmlNew_Q$ chooses non-deterministically a new DML query for execution and rewrites into an initial state for this query  if the execution of the previous one was successful.
Similarily,  $\DmlFail_Q$ chooses a new DML query if the execution of the previous one failed.
It is important to know that the failure of a randomly chosen DML expression does not usually mean that
the business process execution is faulty: Instead, it may simply mean that the given DML expression was
not applicable at the moment. Since here the only way to know if the DML expression is applicable is to
run it, the rule $\DmlFail_Q$ is necessary lest the simulation stops prematurely. On the other hand, unsuccessful executions do not change database state, and thus are spurious, adding no useful information. This is why, when doing reachability analysis which explores using breadth-first search all possible paths of execution (in contrast to simulations, where each simulation travels just a single execution path) it is better to drop the rule $\DmlFail_Q$.

 It is assumed that all DML queries $Q\in\Gamma$ are
such that a successful execution of Q consumes and emits a special fact $\Token$ (of sort $\Fact$) called a token.
The token does not denote any real data, but rather facilitates a non-deterministic choice of user input. Say, if in the database
there were facts $f(a_1), f(a_2), \ldots$, where $a_1, a_2, \ldots$ are possible user inputs for some business step, then we can simulate user choice and execution of further action $D(x)$ (based on this
choice and expressed as DML query with free variable $x$ storing user's decision) by using the DML
expression of the form $\From \Del{\Token}\circ\Ret{f(x)}\mathbin{.}D(x)$.
If the query wouldn't match and remove the token
then the action $D(x)$ would be executed for every possible user input. In the example
described in this section instead of a constant token we  use tokens $\Token : Nat \rightarrow \Fact$
parametrized by a natural number.
All DML queries consume a token with a non-zero parameter and emit a token with a parameter decreased by
one. This permits limiting the number of ``large business steps'', i.e., executions of DML queries, in
the simulation or search procedure. Rewriting systems such as Maude permit limiting rewriting steps in the
search procedure. However, execution of each DML query can take many rewriting steps, the number of which
is not easy to estimate. Thus, it is not trivial to pass from the number of rewriting steps to the number of
business steps (which are more natural in this context).

Given an initial database $F$ we start
 reachability analysis from term
 $\New(F\circ\Token(k), C_{s_1}(\imath_{m_1}^{s_1})\circ\cdots C_{s_n}(\imath_{m_n}^{s_n}))$,
 where $k$ is the maximal depth of search expressed in the number of business steps,  $s_1,\ldots,s_n$ are nominal sorts for which we need fresh values, and
 $m_1,\ldots,m_n$ are  such that values $\imath_{p_i}^{s_i}$
 for $p_i\geq m_i$,
 $i\in\{1,\ldots,n\}$, do not occur in $F$. In case of reachability analysis we search for the term of the form $\New(F, F_\FrMark)$ where the database $F$ satisfies some condition (either desirable or undesirable one).

\subsection{Example specification}

We borrow an example from \cite[Appendix~C]{abdulla2016recency}
to demonstrate specification of a business process as a set of DML expressions in $\QLang^\dml$.
The example concerns the process of selecting and advertising restaurant offers of dinners by employees of
mediating agency, and managing corresponding bookings. The lifecycles of two key artifact types
 --- {\em Offer} and {\em Booking} --- are presented as finite state machines in
Figure~\ref{LifecyclesBooking}. Each agent publishes exactly one restaurant offer ---
either the new one which just came or the one which was previously put on hold.
The published offer
is in the state $\Const{available}$.
Agent puts the offer he currently publishes on hold (state $\Const{onHold}$) when picking up another
offer for publication. Dashed  arrows in Figure~\ref{LifecyclesBooking} indicate that entering a given state by an artifact may trigger state change in another artifact, e.g., there is a dashed arrow between the $\Const{available}$ state and the
anonymous transition into $\Const{onHold}$ state (in a distinct artifact of type {\em Offer}). Available
 offer may get closed (state $\Const{closed}$, or be picked up by a customer (transition $\Const{newBooking}$
 to state $\Const{beingBooked}$).
The latter triggers creation of a new {\em Booking} artifact.
Booking starts with a preliminary phase of {\em drafting} (state $\Const{drafting}$) in which the customer
chooses dinner hosts (transition $\Const{addHosts}$). After draft submission
(which changes the state to $\Const{submitted}$)  the agent computes price for the offer
(transition $\Const{determineProposal}$ to state $\Const{finalized}$) and
the customer decides to either accept or reject the proposal transitioning, respectively,
to the $\Const{accepted}$ or $\Const{canceled}$ state. The acceptance may in some cases go through $\Const{toBeValidated}$ state when additional validation is necessary.

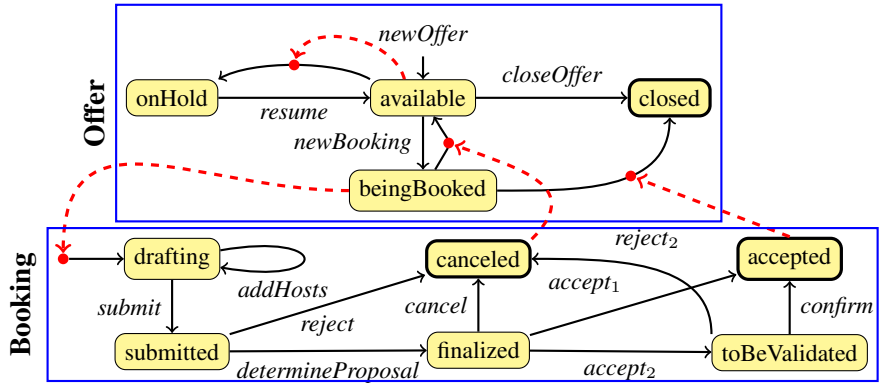
\begin{figure}[h]
\vspace*{3mm}
\centering
\begin{tikzpicture}[state/.style={draw=black, fill=yellow!50, rounded corners, font=\footnotesize}, action/.style={font=\footnotesize\it}, trigger/.style={fill=red, circle, inner sep=0.5mm}, every fit/.style={draw=blue, rectangle, thick}]
\draw (0,0) node(a1)[state]{onHold};
\draw (a1.east) node(a2)[state, anchor=west, xshift=2cm]{available};
\draw (a2.east) node(a3)[state, anchor=west, xshift=2cm, very thick]{closed};
\draw (a2.south) node(a4)[state, anchor=north, yshift=-0.7cm]{beingBooked};
\draw (a2.north) node(a5)[anchor=south, yshift=0.3cm, action]{newOffer};
\draw[->, thick] (a1) -- (a2) node[pos=0.5, anchor=north, action]{resume};
\draw[->, thick] (a2) -- (a3) node[pos=0.5, anchor=south, action]{closeOffer};
\draw[->, thick] (a2) -- (a4) node[pos=0.5, anchor=east, action]{newBooking};
\draw[->, thick] (a5) -- (a2);
\draw[->, thick] (a4.east) .. controls +(0:1cm) and +(270:1cm) .. (a3.south) node(d1)[pos=0.6, trigger]{};
\draw[->, thick] (a4.60) .. controls +(60:0.5cm) and +(300:0.5cm) .. (a2.300) node(d2)[pos=0.5, trigger]{};
\draw[->, thick] (a2.north west) .. controls +(150:0.5cm) and +(30:0.5cm) .. (a1.north east) node(d3)[pos=0.5, trigger]{};
\draw[->, dashed, red, very thick] (a2.135) .. controls +(105:0.6cm) and +(75:0.6cm) .. (d3);
\node[fit=(a1) (a3) (a4) (a5)] (f1) {};
\draw (f1.west) node[anchor=south, rotate=90]{\bf Offer};
\draw (a1.south) node(b1)[state, anchor=north, yshift=-1.6cm]{drafting};
\draw (b1.west) node(d4)[trigger, xshift=-0.8cm]{};
\draw[->, thick] (d4) -- (b1);
\draw[->, dashed, red, very thick] (a4.west) .. controls +(180:2cm) and +(90:2cm) .. (d4.north);
\draw[->, thick] (b1.10) .. controls +(10:1.5cm) and +(350:1.5cm) .. (b1.350) node(x1)[action, pos=0.75, anchor=north](k1){addHosts};
\draw (b1.east) node(b2)[state, anchor=west, xshift=2.7cm, very thick]{canceled};
\draw (b2.east) node(b3)[state, anchor=west, xshift=2.7cm, very thick]{accepted};
\draw[->, dashed, red, very thick] (b2.north east) .. controls +(45:1cm) and +(340:1cm) .. (d2);
\draw[->, dashed, red, very thick] (b3.north) -- (d1);
\draw (b1.south) node(b4)[state, anchor=north, yshift=-0.7cm]{submitted};
\draw (b2.south) node(b5)[state, anchor=north, yshift=-0.7cm]{finalized};
\draw (b3.south) node(b6)[state, anchor=north, yshift=-0.7cm]{toBeValidated};
\draw[->, thick] (b1) -- (b4) node[action, anchor=east, pos=0.5]{submit};
\draw[->, thick] (b5) -- (b2) node[action, anchor=east, pos=0.5]{cancel};
\draw[->, thick] (b6) -- (b3) node[action, anchor=west, pos=0.5]{confirm};
\draw[->, thick] (b4) -- (b2) node[action, anchor=north, pos=0.5]{reject};
\draw[->, thick] (b4) -- (b5) node[action, anchor=north, pos=0.5]{determineProposal};
\draw[->, thick] (b5) -- (b6) node[action, anchor=north, pos=0.5]{accept$_2$};
\draw[->, thick] (b5) -- (b3) node[action, anchor=south east, pos=0.5]{accept$_1$};
\draw[->, thick] (b6.north west) .. controls +(90:1cm) and +(0:1cm) .. (b2.east) node[action, anchor=south west, pos=0.7]{reject$_2$};
\node[fit=(b1) (d4) (b6)] (f2) {};
\draw (f2.west) node[anchor=south, rotate=90]{\bf Booking};
\end{tikzpicture}\vspace*{-1mm}
\caption{Lifecycles of {\em Offer} and {\em Booking} artifacts presented as finite state machines \cite[Figure~5]{abdulla2016recency} (see also \cite[Figure~5]{abdulla2016recencyext})}
\label{LifecyclesBooking}\vspace*{-2mm}
\end{figure}

\medskip
States of offers and bookings are constants of sorts $\sort{OState}$ and $\sort{BState}$, respectively,  named
like in Figure~\ref{LifecyclesBooking}.
We use the following nominal sorts for identifiers:
$\sort{Rest}$ for restaurants; $\sort{Person}$ for  customers, agents and hosts;
$\sort{Offer}$, $\sort{Book}$ and $\sort{Url}$
for  offers, bookings, and url's of finalized proposals, respectively.
For facts we use the following constructors:
\begin{gather*}
\Rel{Rest} : \sort{Rest}\rightarrow\Fact,\quad \Rel{Agent},\Rel{Cust} : \sort{Person}\rightarrow\Fact,\\
\Rel{Offer} : \sort{Offer}\;\sort{OState}\;\sort{Rest}\;\sort{Person}\rightarrow\Fact,\quad
\Rel{Book} : \sort{Book}\;\sort{BState}\;\sort{Offer}\;\sort{Person}\rightarrow\Fact,\\
%\Rel{Ost} : \sort{Offer}\;\sort{OState}\rightarrow\Fact,\quad
%\Rel{Bst} : \sort{Book}\;\sort{BState}\rightarrow\Fact,\quad
\Rel{Host} : \sort{Book}\;\sort{Person}\rightarrow\Fact,\quad
\Rel{Prop} : \sort{Book}\;\sort{Url}\rightarrow\Fact.
\end{gather*}
where facts $\Rel{Rest}(r)$, $\Rel{Agent}(a)$, $\Rel{Cust}(c)$ indicate that
$r$, $a$, and $c$ are, respectively, identifiers of a registered restaurant, agent, and
customer.
$\Rel{Offer}(o, s, r, a)$ means that an offer $o$ in a state $s$ for a restaurant $r$
is managed by an agent $a$.
$\Rel{Book}(b, s, o, c)$ means that booking $b$ in a state $s$ for a customer $c$ is
 related to an offer $o$.
A fact $\Rel{Host}(b,p)$ indicates that a person $p$
 is included as a host for booking $b$.
 Finally,  $\Rel{Prop}(b, u)$ indicates that  finalized proposal for booking $b$,
 with details and prices, is available at the url $u$.

\medskip
We now specify selected transitions from Figure~\ref{LifecyclesBooking} in detail.
Transition $\Const{newOffer}$ responsible for creation of new offers
is implemented with the following DML query:
\begin{multline*}
\From \Del{\Token(s(n))}\circ\Fresh{C_{\sort{Offer}}(o)}\circ \Ret{\Rel{Agent}(a)}
\circ\Keep{\Rel{Rest}(r)}\mathbin{.}
\bigl(\forall\;\Ret{\Rel{Offer}(o', \Const{beingBooked}, r', a)}\;.\;\False\bigr)\\
\Rightarrow\bigl(
O(o, \Const{available}, r, a)\Con
\bigl(\From \Del{O(o', \Const{available}, r', a)}\mathbin{.}
O(o', \Const{onHold}, r', a)
\bigr)\Con \Token(n)\bigr).
\end{multline*}
Above, $s:\Nat\rightarrow\Nat$ denotes the successor function.
We assume that each DML query is executed against a database in which there is exactly one token
matching $\Token(s(n))$, i.e.,  a token holding a number greater than zero.
Thus, in the above, we first choose a {\em single} fresh offer identifier, and, non-deterministically, a {\em single} registered
agent and a {\em single} restaurant. The token is marked with $\Del{\_}$, so it is removed from the database after matching (this guarantees that we choose no more than one agent, restaurant, and offer identifier).
The query emits back a token with a number decreased by 1 ensuring the possibility (if this number is greater than zero) of executing a next query.
The restaurant can be arbitrary (as long as it is registered in the system),
 however the agent must not manage an offer being booked,
as described by the deterministic condition
$$\forall\Ret{O(o', \Const{beingBooked}, r', a)}\;.\;\False$$ inside the above
DML query. If this condition is not satisfied, the quantifier step fails, the
token is returned to the database and a new matching is tried. Since $\Rel{Agent}(a)$
is marked by $\Ret{\_}$, we do not try the same agent again.
If the correct matching is found, a fact describing new offer
is added to the database. We also change the state of any available offer managed by
the agent of the new offer to $\Const{onHold}$.

\medskip
An offer which was put on hold, may be resumed
by any agent who is not managing an offer which is currently being booked.
Agent resuming an offer becomes the new manager of the offer:
\begin{multline*}
\From \Del{\Token(s(n))\circ \Rel{Offer}(o,\Const{onHold}, r, a)}\circ\Ret{\Rel{Agent}(a')}
\mathbin{.}\bigl(\forall\;\Ret{\Rel{Offer}(o', \Const{beingBooked}, r', a')}\;.\;\False\bigr)\\
\Rightarrow\bigl(\Token(n)\Con
\Rel{Offer}(o, \Const{available}, r, a')\\
\Con
\bigl(\From \Del{\Rel{Offer}(o', \Const{available}, r', a')}\mathbin{.}
\Rel{Offer}(o', \Const{onHold}, r', a')
\bigr)\bigr).
\end{multline*}

With the $\Const{newBooking}$ transition some offer $o$  changes state from $\Const{available}$  to $\Const{beingBooked}$).
It also triggers creation of a new booking (with a fresh identifier) in the
$\Const{drafting}$ state for the chosen offer $o$ on behalf of some registered customer:
\begin{multline*}
\From \Del{\Token(s(n))\circ \Rel{Offer}(o,\Const{available}, r, a)}\circ\Fresh{C_{\sort{Book}}(b)}
\circ\Keep{\Rel{Cust}(c)}\\
\mathbin{.}\bigl(
\Rel{Offer}(o,\Const{beingBooked}, r, a)\Con \Rel{Book}(b,\Const{drafting}, o, c)
\Con\Token(n)
\bigr).
\end{multline*}

The customer involved in booking can  add
dinner hosts one by one  (see transition $\Const{addHosts}$ in Figure~\ref{LifecyclesBooking}) as long as the booking is in the drafting stage.
The added host can be either fresh or be present in the database as a host
for another offer. We use separate DML queries for each of those cases.
The first case (of a fresh host) is trivial:
\begin{equation*}
\From \Del{\Token(s(n))}\circ\Fresh{C_{\sort{Person}}(h)}\circ
\Keep{\Rel{Book}(b,\Const{drafting},o,c)}
\mathbin{.}\bigl(\Token(n)\Con\Rel{Host}(b,h)\bigr).
\end{equation*}
In the second case we have to ensure that we are not adding the same person twice:
\begin{equation*}
\From \Del{\Token(s(n))}\circ\Keep{\Rel{Book}(b,\Const{drafting},o,c)}\circ\Ret{\Rel{Host}(b',h)}
\mathbin{.}\bigl((
\forall\;\Ret{\Rel{Host}(b,h)}\mathbin{.}\False)
\Rightarrow (\Token(n)\Con\Rel{Host}(b,h))\bigr).
\end{equation*}

The $\Const{submit}$ action simply changes the state of the booking from
$\Const{drafting}$ to $\Const{submitted}$. Then, if the customer's customized booking is infeasible, it can be rejected ($\Const{reject}$ transition in Figure~\ref{LifecyclesBooking}, the implementation of which we omit for brevity's sake).
Otherwise, the final proposal (which includes cost, etc.) to the customer who
owns the booking is created
(transition $\Const{determineProposal}$ in Figure~\ref{LifecyclesBooking}).
The preparation of the proposal is abstracted  as (1) creating the fresh url
to the proposal, and (2) removing information about hosts (which is available at the new url).
As before, we pick the booking non-deterministically using the token and appropriate
pattern:
\begin{multline*}
\From \Del{\Token(s(n))\circ \Rel{Book}(b,\Const{submitted},o,c)}\circ\Fresh{C_{\sort{Url}}(u)}
\mathbin{.}\\
\bigl(\Rel{Prop}(b,u)\Con
\Rel{Book}(b,\Const{finalized},o,c)\Con
\bigl(
\From \Del{\Rel{Host}(b,h)}\mathbin{.}\Ok
\bigr)\Con\Token(n)\bigr).
\end{multline*}

A finalized booking proposal  for a restaurant $r$ can be accepted either immediately (with
$\Const{accept}_1$)  or after an additional confirmation (with $\Const{accept}_2$).
The first case applies only to
golden customers of a given restaurant $r$, i.e.,  those who successfully booked
a dinner in $r$ at least $k$-times, for some fixed $k$.
Accepting a proposal changes the state of the offer to which
the booking belongs to $\Const{closed}$:
 \begin{multline*}
 \From \Del{\Token(s(n))\circ \Rel{Book}(b,\Const{finalized},o,c)\circ \Rel{Offer}(o, \Const{beingBooked}, r, a)}
 \circ\Ret{\Rel{Cust}(c)}
 \mathbin{.}\\
 \bigl(
 \exists\;[\Rel{Offer}(o_1,\Const{closed}, r, a_1)\circ B(b_1,\Const{accepted},o_1,c)\\
 \circ\cdots\circ
 \Rel{Offer}(o_k,\Const{closed}, r, a_k)\circ \Rel{Book}(b_k,\Const{accepted},o_k,c)]_?\mathbin{.}\True
 \bigr)\\
 \Rightarrow
 \bigl(\Rel{Book}(b,\Const{accepted},o,c)\Con \Rel{Offer}(o, \Const{closed}, r, a)\Con\Token(n)\bigr).
 \end{multline*}

 \begin{remark}
 In our earlier work \cite{Bartek2017} we have used the almost same example (with minor modifications) to illustrate an alternative formalism (c.f. Section~\ref{SectPrior} in the current paper) where queries, also implemented in the rewriting system, but using meta-level features, are deterministic. This makes them behave like a classical queries, but because of determinism it is not possible to simulate user input directly. Instead, a separate mechanism had to be introduced to simulate non-deterministic input choice. Secondly, DML expressions in
 \cite{Bartek2017}  did not add or, more importantly, delete facts from the database directly. Instead, they return pairs (which need to be specified in the DML expression itself) of (multi)sets of facts: those to be deleted and those to be added. However, in this particular example (and we believe it is typical) facts to be deleted are matched by parts of the
 patterns in the query. This (in \cite{Bartek2017}) led to code duplication, and suggested natural use of rewriting rules (which, of course, replaces matched subterms), and ultimately led to the formalism described in this paper.
 \end{remark}

\begin{remark}
We have implemented both the syntax and semantics of $\QLang^\Cond$ and $\QLang^\Dml$ in Maude
\cite{Maude2:03}. The implementation is available on the project's website \cite{ndqrl}.
To test the implementation we have used specification of the business process described above (also available from \cite{ndqrl}). The specification compiles into a Maude's system module which contains
definitions of 196 operators, 285 equations and 83 rewriting rules (in actual implementation we used equations in place of  deterministic rewriting rules).
\end{remark}

Let us now describe a simple example of a reachability analysis with the specification just described.
Let
\begin{equation*}
\Const{initDB}:=  \Rel{agent}(a_1) \circ \Rel{agent}(a_2) \circ \Rel{cust}(c_1)
\circ \Rel{cust}(c_2) \circ \Rel{rest}(r_1) \circ \Rel{rest}(r_2) \circ \Token(7)
\end{equation*}
be an initial database of facts. Let
\begin{equation*}
\Const{initDBN}:=
C(\imath^{\sort{Offer}}_0) \circ C(\imath^{\sort{Book}}_0) \circ
 C(\imath^{\sort{Url}}_0) \circ C(\imath^{\sort{Person}}_0)
\end{equation*}
be an initial multiset of fresh facts. Finally, let
\begin{equation*}
\Const{initState}:=\New(\Const{initDB}, \Const{initDBN})
\end{equation*}
be an initial state. Note that since the token is parametrized by 7, it follows that we can make at most
seven successful business steps from this state. We are interested in checking if we can reach from
 $\Const{initState}$ (in no more than 7 business steps) the state in which there exists a closed offer (and accepted associated booking)  from the database with no bookings and offers. Formally, we want to reach the state  matching
\begin{equation*}
\New(F\circ \Rel{Offer}(o_1, \Const{closed}, r_1, a_1) \circ
\Rel{Book}(b_1, \Const{accepted}, o_1, c_1), F_\FrMark).
\end{equation*}

Using our implementation \cite{ndqrl} we can easily check that a matching state is indeed reachable
in 6 business steps (Maude reported 525165 actual rewritings in 2528ms).

\section{Conclusion}

We have presented a multiset non-deterministic query and data manipulation language
$\QLang$ based on conditional term rewriting. The intended application of this language is in specification, simulation
and reachability analysis of data-centric business processes. However,  the remarkable features of $\QLang$, particularly non-determinism and non-standard approach to variable binding, make it  interesting on its own. We show that non-determinism of queries is useful for simulating user choices, but we also provide  easily identifiable syntactic restrictions which ensure uniqueness of query results.
Interestingly, this non-determinism leads to bisimulation-like definitions
of logical equivalence between formulas.
In the last section we demonstrated how sets of DML queries can be used to specify a business process and we provide a simple framework for simulation and testing.

$\QLang$ is a multiset query language. Most formal query languages, including
relational calculus and algebra, are based on sets. One under-appreciated fact is that
SQL is really a multiset query language, and for a very good reason --- removing
duplicates is expensive. While this was not our primary reason to use multisets,
 we believe that using multiset languages encourages query design which avoids
unnecessary expensive operations, and
takes the complexity of query execution into account better than set-based languages.

The fact that closed $\QLang$ formulas  are compiled to rewriting systems
permits their symbolic execution using narrowing \cite{meseguer2007symbolic}. We intend to explore this
possibility in future research. This is also one of the reasons why it was important to limit
the use of conditional rules as much as possible:  many implementations of narrowing
(see e.g., \cite{MaudeMan}) do not permit narrowing with conditions.

We have implemented $\QLang^\cond$, $\QLang^\dml$ and a specification framework extending the one described at the beginning of Section~\ref{ExampleActionsSection} in Maude \cite{MaudeMan}.
The implementation  is available from~\cite{ndqrl}. It differs in non-essential way from the one described in the present paper, but the code is extensively documented.

\subsection*{Acknowledgements}
The author is grateful to the anonymous reviewers for their helpful remarks


\begin{thebibliography}{10}
\providecommand{\url}[1]{\texttt{#1}}
\providecommand{\urlprefix}{URL }
\expandafter\ifx\csname urlstyle\endcsname\relax
  \providecommand{\doi}[1]{doi:\discretionary{}{}{}#1}\else
  \providecommand{\doi}{doi:\discretionary{}{}{}\begingroup
  \urlstyle{rm}\Url}\fi
\providecommand{\eprint}[2][]{\url{#2}}

\bibitem{hull2008artifact}
Hull R.
\newblock Artifact-Centric Business Process Models: Brief Survey of Research
  Results and Challenges.
\newblock In: Meersman R, Tari Z (eds.), On the Move to Meaningful Internet
  Systems: OTM 2008. Springer Berlin Heidelberg, Berlin, Heidelberg, 2008 pp.
  1152--1163.
\newblock \doi{10.1007/978-3-540-88873-4_17}.

\bibitem{calvanese2013foundations}
Calvanese D, De~Giacomo G, Montali M.
\newblock Foundations of Data-aware Process Analysis: A Database Theory
  Perspective.
\newblock In: Proceedings of the 32Nd ACM SIGMOD-SIGACT-SIGAI Symposium on
  Principles of Database Systems, PODS '13. ACM, New York, NY, USA, 2013 pp.
  1--12.
\newblock \doi{10.1145/2463664.2467796}.

\bibitem{van1998application}
{v}an~{d}er Aalst WM.
\newblock The application of Petri nets to workflow management.
\newblock \emph{Journal of circuits, systems, and computers}, 1998.
\newblock \textbf{8}(01):21--66.
\newblock \doi{10.1142/S0218126698000043}.

\bibitem{van2005yawl}
{v}an~der Aalst WM, Ter~Hofstede AH.
\newblock YAWL: yet another workflow language.
\newblock \emph{Information systems}, 2005.
\newblock \textbf{30}(4):245--275.
\newblock \doi{10.1016/j.is.2004.02.002}.

\bibitem{rosa2011decidability}
Rosa-Velardo F, de~Frutos-Escrig D.
\newblock Decidability and complexity of Petri nets with unordered data.
\newblock \emph{Theoretical Computer Science}, 2011.
\newblock \textbf{412}(34):4439--4451.

\bibitem{lasota2016decidability}
Lasota S.
\newblock Decidability border for Petri nets with data: WQO dichotomy conjecture.
\newblock In: International Conference on Application and Theory of Petri Nets
  and Concurrency. Springer, 2016 pp. 20--36.
  doi:10.1007/978-3-319-39086-4\_3.

\bibitem{montali2016model}
Montali M, Rivkin A.
\newblock Model checking Petri nets with names using data-centric dynamic
  systems.
\newblock \emph{Formal Aspects of Computing}, 2016.
\newblock \textbf{28}(4):615--641.
\newblock \doi{10.1007/s00165-016-0370-6}.

\bibitem{montali2017db}
Montali M, Rivkin A.
\newblock DB-Nets: On the Marriage of Colored Petri Nets and Relational
  Databases.
\newblock In: Transactions on Petri Nets and Other Models of Concurrency XII.
  Springer Berlin Heidelberg, Berlin, Heidelberg, 2017 pp. 91--118.
\newblock \doi{10.1007/978-3-662-55862-1_5}.

\bibitem{montali2019db}
Montali M, Rivkin A.
\newblock From DB-nets to Coloured Petri Nets with Priorities.
\newblock In: International Conference on Applications and Theory of Petri Nets
  and Concurrency. Springer, 2019 pp. 449--469.
  doi:10.1007/978-3-030-21571-2\_24.

\bibitem{meseguer1992conditional}
Meseguer J.
\newblock Conditional rewriting logic as a unified model of concurrency.
\newblock \emph{Theoretical computer science}, 1992.
\newblock \textbf{96}(1):73--155.
\newblock \doi{10.1016/0304-3975(92)90182-F}.

\bibitem{RewLogSem}
Meseguer J, Rosu G.
\newblock The rewriting logic semantics project.
\newblock \emph{Theoretical Computer Science}, 2007.
\newblock \textbf{373}(3):213 -- 237.
\newblock \doi{10.1016/j.tcs.2006.12.018}.

\bibitem{stehr2001rewriting}
Stehr MO, Meseguer J, {\"O}lveczky PC.
\newblock Rewriting Logic as a Unifying Framework for Petri Nets.
\newblock In: Unifying Petri Nets: Advances in Petri Nets. Springer Berlin
  Heidelberg, Berlin, Heidelberg, 2001 pp. 250--303.
\newblock \doi{10.1007/3-540-45541-8_9}.

\bibitem{padberg2016model}
Padberg J, Schulz A.
\newblock Model Checking Reconfigurable Petri Nets with Maude.
\newblock In: Echahed R, Minas M (eds.), Graph Transformation. Springer
  International Publishing, Cham, 2016 pp. 54--70.
\newblock \doi{10.1007/978-3-319-40530-8_4}.

\bibitem{kheldoun2017formal}
Kheldoun A, Barkaoui K, Ioualalen M.
\newblock Formal verification of complex business processes based on high-level
  Petri nets.
\newblock \emph{Information Sciences}, 2017.
\newblock \textbf{385}:39--54.
\newblock \doi{10.1016/j.ins.2016.12.044}.

\bibitem{Bartek2017}
Zieli{\'n}ski B.
\newblock A Query Language Based on Term Matching and Rewriting.
\newblock \emph{Fundamenta Informaticae}, 2019.
\newblock \textbf{169}:237--274.
\newblock \doi{10.3233/FI-2019-1845}.

\bibitem{meseguer2007symbolic}
Meseguer J, Thati P.
\newblock Symbolic reachability analysis using narrowing and its application to
  verification of cryptographic protocols.
\newblock \emph{Higher-Order and Symbolic Computation}, 2007.
\newblock \textbf{20}(1-2):123--160.
\newblock \doi{10.1007/s10990-007-9000-6}.

\bibitem{fay1979first}
Fay M.
\newblock First-order unification in an equational theory.
\newblock In: Proceedings of the 4th Workshop on Automated Deduction, Austin,
  Texas,  1979.

\bibitem{hullot1980canonical}
Hullot JM.
\newblock Canonical forms and unification.
\newblock In: International Conference on Automated Deduction. Springer, 1980
  pp. 318--334.   doi:10.1007/3-540-10009-1\_25.

\bibitem{alpuente2009termination}
Alpuente M, Escobar S, Iborra J.
\newblock Termination of narrowing revisited.
\newblock \emph{Theoretical Computer Science}, 2009.
\newblock \textbf{410}(46):4608--4625.
doi:10.1016/j.tcs.2009.07.037.

\bibitem{bpBDAS2017}
Zieli{\'{n}}ski B, Ma{\'{s}}lanka P.
\newblock Relational Transition System in Maude.
\newblock In: Beyond Databases, Architectures and Structures. Towards Efficient
  Solutions for Data Analysis and Knowledge Representation: 13th International
  Conference, BDAS 2017, Ustro{\'{n}}, Poland, May 30 - June 2, 2017,
  Proceedings. Springer International Publishing, Cham, 2017 pp. 497--511.
\newblock \doi{10.1007/978-3-319-58274-0_39}.

\bibitem{rosu2010matching}
Ro{\c{s}}u G, Ellison C, Schulte W.
\newblock Matching Logic: An Alternative to Hoare/Floyd Logic.
\newblock In: Algebraic Methodology and Software Technology. Springer Berlin
  Heidelberg, Berlin, Heidelberg, 2011 pp. 142--162.
\newblock \doi{10.1007/978-3-642-17796-5_9}.

\bibitem{rocsu2017matching}
Ro{\c{s}}u G.
\newblock Matching logic.
\newblock \emph{arXiv:1705.06312}, 2017.

\bibitem{cinni}
Stehr MO.
\newblock CINNI-A Generic Calculus of Explicit Substitutions and its
  Application to $\lambda$-$\varsigma$-and $\pi$-Calculi.
\newblock \emph{Electronic Notes in Theoretical Computer Science}, 2000.
\newblock \textbf{36}:70--92.
\newblock \doi{10.1016/S1571-0661(05)80125-2}.

\bibitem{baader2003description}
Baader F.
\newblock The description logic handbook: Theory, implementation and
  applications.
\newblock Cambridge University Press, New York, NY, USA, 2003.
\newblock ISBN:0-521-78176-0.

\bibitem{de2012verification}
De~Giacomo G, De~Masellis R, Rosati R.
\newblock Verification of conjunctive artifact-centric services.
\newblock \emph{International Journal of Cooperative Information Systems},
  2012.
\newblock \textbf{21}(02):111--139.
\newblock \doi{10.1142/S0218843012500025}.

\bibitem{hariri2011foundations}
Hariri BB, Calvanese D, De~Giacomo G, De~Masellis R, Felli P.
\newblock Foundations of relational artifacts verification.
\newblock In: International Conference on Business Process Management.
  Springer, 2011 pp. 379--395.
\newblock \doi{10.1007/978-3-642-23059-2_28}.

\bibitem{calvanese2015description}
Calvanese D, Montali M, Patrizi F, De~Giacomo G.
\newblock Description logic based dynamic systems: modeling, verification, and  synthesis.
\newblock In: Proceedings of the 24th International Conference on Artificial
  Intelligence. AAAI Press, 2015 pp. 4247--4253.

\bibitem{abdulla2016recency}
Abdulla PA, Aiswarya C, Atig MF, Montali M, Rezine O.
\newblock Recency-Bounded Verification of Dynamic Database-Driven Systems.
\newblock In: Proceedings of the 35th ACM SIGMOD-SIGACT-SIGAI Symposium on
  Principles of Database Systems, PODS '16. ACM, New York, NY, USA, 2016 pp.
  195--210.
\newblock \doi{10.1145/2902251.2902300}.

\bibitem{chen2006automating}
Chen-Burger YH, Robertson D.
\newblock Automating business modelling: a guide to using logic to represent
  informal methods and support reasoning.
\newblock Springer Science \& Business Media, 2006.
\newblock \doi{10.1007/b138799}.

\bibitem{merouani2014formalizing}
Merouani H, Mokhati F, Seridi-Bouchelaghem H.
\newblock Formalizing Artifact-Centric Business Processes - Towards a
  Conformance Testing Approach.
\newblock In: Proceedings of the 16th International Conference on Enterprise
  Information Systems. 2014 pp. 368--374.
\newblock \doi{10.5220/0004951803680374}.

\bibitem{mccarthy1969some}
McCarthy J, Hayes PJ.
\newblock Some philosophical problems from the standpoint of artificial
  intelligence.
\newblock \emph{Readings in artificial intelligence}, 1969.
\newblock pp. 431--450.

\bibitem{li2017verifas}
Deutsch A, Li Y, Vianu V.
\newblock Verifas: a practical verifier for artifact systems.
\newblock \emph{Proceedings of the VLDB Endowment}, 2017.
\newblock \textbf{11}(3):283--296.   doi:10.14778/3157794.3157798.

\bibitem{deutsch2019verification}
Deutsch A, Li Y, Vianu V.
\newblock Verification of hierarchical artifact systems.
\newblock \emph{ACM Transactions on Database Systems (TODS)}, 2019.
\newblock \textbf{44}(3):1--68. doi:10.1145/3321487.

\bibitem{calvanese2019formal}
Calvanese D, Ghilardi S, Gianola A, Montali M, Rivkin A.
\newblock Formal modeling and SMT-based parameterized verification of
  data-aware BPMN.
\newblock In: International Conference on Business Process Management.
  Springer, 2019 pp. 157--175.   doi:10.1007/978-3-030-26619-6\_12.

\bibitem{seco2018reseda}
Seco JC, Debois S, Hildebrandt T, Slaats T.
\newblock RESEDA: Declaring live event-driven computations as {RE}active
  {SE}mi-structured {DA}ta.
\newblock In: 2018 IEEE 22nd International enterprise distributed object
  computing conference (EDOC). IEEE, 2018 pp. 75--84.
  doi:10.1109/EDOC.2018.00020.

\bibitem{Mmemb_1998}
Meseguer J.
\newblock Membership algebra as a logical framework for equational
  specification.
\newblock In: Recent Trends in Algebraic Development Techniques. Springer
  Berlin Heidelberg, Berlin, Heidelberg, 1998 pp. 18--61.
\newblock \doi{10.1007/3-540-64299-4_26}.

\bibitem{huet1980confluent}
Huet G.
\newblock Confluent reductions: Abstract properties and applications to term
  rewriting systems: Abstract properties and applications to term rewriting
  systems.
\newblock \emph{Journal of the ACM (JACM)}, 1980.
\newblock \textbf{27}(4):797--821.

\bibitem{thielscher1998introduction}
Thielscher M.
\newblock Introduction to the Fluent Calculus.
\newblock \emph{Electronic Transactions on Artificial Intelligence}, 1998.
\newblock \textbf{2}(3-4):179--192.

\bibitem{ochremiak2014nominal}
Ochremiak J.
\newblock Nominal sets over algebraic atoms.
\newblock In: International Conference on Relational and Algebraic Methods in
  Computer Science. Springer, 2014 pp. 429--445.
\newblock \doi{10.1007/978-3-319-06251-8_26}.

\bibitem{hull2011business}
Hull R, Damaggio E, De~Masellis R, Fournier F, Gupta M, Heath~III FT, Hobson S,
  Linehan M, Maradugu S, Nigam A, et~al.
\newblock Business artifacts with guard-stage-milestone lifecycles: managing
  artifact interactions with conditions and events.
\newblock In: Proceedings of the 5th ACM international conference on
  Distributed event-based system. ACM, 2011 pp. 51--62.
\newblock \doi{10.1145/2002259.2002270}.

\bibitem{abiteboul1998relational}
Abiteboul S, Vianu V, Fordham B, Yesha Y.
\newblock Relational Transducers for Electronic Commerce.
\newblock In: Proceedings of the Seventeenth ACM SIGACT-SIGMOD-SIGART Symposium
  on Principles of Database Systems, PODS '98. ACM, New York, NY, USA.
\newblock ISBN 0-89791-996-3, 1998 pp. 179--187.
\newblock \doi{10.1145/275487.275507}.

\bibitem{abiteboul2000relational}
Abiteboul S, Vianu V, Fordham B, Yesha Y.
\newblock Relational transducers for electronic commerce.
\newblock \emph{Journal of Computer and System Sciences}, 2000.
\newblock \textbf{61}(2):236--269.  doi:10.1006/jcss.2000.1708.

\bibitem{abdulla2016recencyext}
Abdulla PA, Aiswarya C, Atig MF, Montali M, Rezine O.
\newblock Recency-bounded verification of dynamic database-driven systems
  (extended version).
\newblock \emph{arXiv preprint arXiv:1604.03413}, 2016.

\bibitem{Maude2:03}
Clavel M, Dur\'{a}n F, Eker S, Lincoln P, Mart\'{\i}-Oliet N, Meseguer J,
  Talcott C.
\newblock The Maude 2.0 System.
\newblock In: Nieuwenhuis R (ed.), Rewriting Techniques and Applications (RTA
  2003), number 2706 in Lecture Notes in Computer Science. Springer-Verlag,
  2003 pp. 76--87.
\newblock \doi{10.1007/3-540-44881-0_7}.

\bibitem{ndqrl}
Zieli{\'n}ski B.
\newblock Nondeterministic Rewriting Query Language (NDRQL).
\newblock Project website, \newline \texttt{http://ki.wfi.uni.lodz.pl/ndrql/}.

\bibitem{MaudeMan}
Clavel M, Dur{\`a}n F, Eker S, Escobar S, Lincoln P, Mart{\`i}-Oliet N,
  Meseguer J, Talcott C.
\newblock Maude Manual (Version 2.7.1), 2016.
\end{thebibliography}
\end{document}